\pgfplotsset{compat=newest}
\definecolor{beaublue}{rgb}{0.74, 0.83, 0.9}
\definecolor{emerald}{rgb}{0.31, 0.78, 0.47}
\newtheorem{corollary}{Corollary}[section]
\newtheorem{proposition}{Proposition}[section]
\newtheorem{definition}{Definition}[section]
\newtheorem{lemma}{Lemma}[section]
\newtheorem{example}{Example}
\newtheorem{remark}{Remark}
\newtheorem{theorem}{Theorem}[section]
\numberwithin{equation}{section}
\def\@maketitle{%
  \newpage
  \null
  \vskip 2em%
  \begin{center}%
  \let \footnote \thanks
    {\Large\bfseries \@title \par}%
    \vskip 1.5em%
    {\normalsize
      \lineskip .5em%
      \begin{tabular}[t]{c}%
        \@author
      \end{tabular}\par}%
    \vskip 1em%
    {\normalsize \@date}%
  \end{center}%
  \par
  \vskip 1.5em}
\title{\textbf{L-extensions and L-boundary of conformal spacetimes}}
\author{A. BAUTISTA \thanks{E--mail: \texttt{alfredo.bautista@uam.es}}}
\affil{Depto. de Matem\'aticas, Univ. Carlos III de Madrid, \protect\\ Avda. de la Universidad 30, 28911 Legan\'es, Madrid, Spain , and \protect\\ Depto. de An\'alisis Econ\'omico: Econom\'{\i}a Cuantitativa, Univ. Aut\'onoma de Madrid \protect\\ C/ Francisco Tom\'as y Valiente 5, 28049 Madrid, Spain.} 
\author{A. IBORT \thanks{E--mail: \texttt{albertoi@math.uc3m.es}}}
\affil{Depto. de Matem\'aticas, Univ. Carlos III de Madrid \protect\\ Avda. de la Universidad 30, 28911 Legan\'es, Madrid, Spain, and \protect\\ ICMAT, Instituto de Ciencias Matem\'{a}ticas (CSIC-UAM-UC3M-UCM) \protect\\ C/ Nicol\'as Cabrera, 13-15, 28049, Madrid, Spain. }
\author{J. LAFUENTE \thanks{E--mail: \texttt{jlafuente@mat.ucm.es}}}
\affil{Depto. de Geometr\'{\i}a y Topolog\'{\i}a, and Instituto de Matem\'{a}ticas Interdisciplinar (IMI), Univ. Complutense de Madrid \protect\\ Avda. Complutense s/n, 28040 Madrid, Spain.} 
\date{October, 2018}
\begin{document}

\maketitle


\begin{abstract}   The notion of L-boundary, a new causal boundary proposed by R. Low based on constructing a `sky at infinity' for any light ray, is discussed in detail.  The analysis of the notion of L-boundary will be done in the 3-dimensional situation for the ease of presentation.    The proposed notion of causal boundary is intrinsically conformal and, as it will be proved in the paper, under natural conditions provides a natural extension $\overline{M}$ of the given spacetime $M$ with smooth boundary $\partial M = \overline{M} \backslash M$.  The extensions $\overline{M}$ of any conformal manifold $M$ constructed in this way are characterised exclusively in terms of local properties at the boundary points.  Such extensions are called L-extensions and it is proved that, if they exist, they are essentially unique.  Finally it is shown that in the 3-dimensional case, any L-extension is equivalent to the canonical extension obtained by using the L-boundary of the manifold.

\end{abstract}



\section{Introduction}

In his seminal paper about the conformal treatment of infinity \cite{Pe11}, R. Penrose argued that in order to deal with the properties of the fields at null infinity we should seek for an extension of the conformal structure of spacetime across null infinity.  In this way, in specific examples, a geometric notion of asymptotic flatness at null infinity was provided that allows to be represented by a conformal boundary, a hyper-surface in the extended spacetime.   Then the asymptotic behavior of fields that satisfy conformally covariant equations is greatly simplified if the spacetime admits a conformal extension of sufficient smoothness, that is, asymptotically simple.  Since then a large body of research has been devoted to exploit these ideas that allowed to understand better the asymptotic behaviour of solutions of Einstein's field equations.   A large body of literature has been devoted to study for various explicit  solutions whether they admit the required conformal extension (see for instance \cite{Pe84}, \cite{Ad09} and references therein).  It must be pointed out though that it is not understood yet which Cauchy data which are asymptotically flat in the sense of the standard Cauchy problem at space-like infinity evolve into asymptotically simple solutions (see for instance \cite{Fr04}).

Soon after the introduction of conformal boundaries to understand the physical properties of fields at infinity, Geroch, Kronheimer and Penrose again, in the striking paper \cite{Ge72} introduced the notion of \textit{ideal points} as a way to deal with singularities in models of space--time constructed according to the general theory of relativity, thus as \textit{the singularities themselves cannot be regarded as actually belonging to the manifold, we are led to consider methods of constructing additional ideal points which, when adjoined to $M$, result in a unified structure $\overline{M}$ incorporating `singular' as well as `non-singular' points} \cite{Ge72}.    The boundary constructed according with the ideas expressed in the previous work, the GKR-boundary, causal boundary, or just the c-boundary for short, is intrinsic to the given spacetime and conformally invariant and, as a difference with the conformal one, the c-boundary only takes into account for its construction time-like curves and directions.   

However the notion of the c-boundary itself was not free from  difficulties and controversies.  We will just mention the so called identification problem between future and past preboundary points that affects the selection of a natural topology for it (see for instance \cite{Sa09} and references therein for a detailed review on the subject).   In spite of all this, the incorporation of ideas by Marolf and Ross \cite{Ma03} allowed Flores, Herrera and S\'anchez \cite{Fl11} to put all ingredients together to prove that there is an (essentially unique) choice for the c-boundary which is consistent with the conformal boundary in the natural cases, giving a definitive support to both boundaries and closing in this way a long standing debate.

Thus one of the main achievements of the aforementioned work is to provide general conditions that guarantee that the (accessible) conformal boundary agrees with the c-boundary and the obtention of computable conditions for the conformal boundary points which are $C^1$ \cite[Sects. 4.2, 4.3]{Fl11}.    The notion of conformal envelopments $i \colon M \to M_0$ allows them to discuss the possible chronological and causal relations definable in the conformal boundary $\partial_iM$ associated to it.    In particular, the conditions found by Flores \textit{et al} that guarantee that the accessible part of the conformal boundary will coincide with the c-boundary (regular accessibility) will allow to prove that the absence of time-like points at the boundary implies the equivalence of the boundaries and is equivalent to the global hyperbolicity of the spacetime, hence for globally hyperbolic spacetimes with $C^1$ conformal boundary the conformal and causal boundaries are equivalent however, as the examples discussed in \cite{Ha17} show there are globally hyperbolic spaces without a conformal boundary.
Thus some additional conditions must be imposed on the conformal envelopments, which must be constructed on each instance,  in order to obtain a satisfactory conformal boundary and the smoothability of the boundary plays an important role in this situation.

In this work a new approach to the construction of a smooth conformal/causal boundary for a strongly causal spacetime $M$ is considered.  This new boundary, called in what follows L-boundary  because of its proponent R. Low and because it relies on an imaginative use of light rays, is profoundly inspired by Penrose's twistor program.  Actually, the main idea comes from considering the space of light rays instead of the space of events in spacetime as the main object in the analysis of causality.   In fact, spacetime events can be identified with the congruence of light rays arriving to it, the so called \textit{sky} of the event and, in the case that there is a one-to-one correspondence between events and skies (such spacetimes are said to be sky separating), we may try to study the causality (and other physical aspects) of the theory by studying instead the space of light rays $\mathcal{N}$ and the space of skies $\Sigma$ lying on it.    A number of conjectures regarding this program were raised by R. Low and others (see for instance the results on the beautiful conjectures on the relation between causality and contact and symplectic geometry in \cite{Ch08}, \cite{Ch10}).  The consistency of the program was proved when various reconstruction Theorems were obtained showing that under natural and rather mild conditions the topological, differential and causal properties of the original spacetime are fully characterized in terms of appropriate topological, differential and geometrical structures on the space of light rays and a family of skies \cite{Ba14}, \cite{Ba15}. 

Thus the development of a topological characterization of causality relations in the space of light rays started by R. Low in \cite{Lo88} (see also \cite{Lo89}, \cite{Lo90b}, \cite{Lo90}, \cite{Lo94}) led the author to sketch a new definition of a causal boundary for a strongly causal spacetime  by considering the problem of attaching a future endpoint to a null geodesic $\gamma$ in the space of light rays $\mathcal{N}$ of the given spacetime \cite{Lo06}. The main idea is to treat all null geodesics which `focus' at the same point at infinity as the `sky' of the (common) future endpoint of these null geodesics.   In the recent paper \cite{Ba17} a precise definition of the L-boundary  was presented for 3-dimensional spaces-times as well as some preliminary properties and examples, among them some results discussing its relation to the c-boundary discussed before.   The obtained results were encouraging enough to continue the study of this new notion of causal boundary because, if it exists, it has a bundle of interesting properties that complement in a natural way the discussion above on the relation between the conformal and the c-boundary.   To begin with the L-boundary  is intrinsically conformal and is formulated entirely in terms of the space of light rays $\mathcal{N}$ of the spacetime $M$.  Secondly, and this is one of the main results of the present paper, under natural conditions if it exists is smooth, providing a natural differentiable framework to the construction of conformal envelopments needed for the setting of conformal boundaries discussed before.    Moreover the construction of the L-boundary  is explicit and provides a beautiful bridge between the structures present in the space of events $M$ and the space of light rays $\mathcal{N}$, opening the road to a new understanding of the relations between topology and causality outlined above.    Hence in the present paper the differentiable foundations for the construction of the L-boundary  are laid and a number of results are obtained, among them the existence, under suitable natural conditions, of a class of extensions of the spacetime, called in the paper L--extensions, that are the natural candidates for conformal envelopments.

The paper will be organised as follows.  Sect. \ref{sec:Preliminaries} will be devoted to succinctly review the main notions and notations regarding the space of light rays of a spacetime and their skies.   In Sect. \ref{sec:lboundary} the idea behind the notion of the L-boundary sketched before will be revised in depth and its relation to the blowing up and down techniques in algebraic and symplectic geometry will be discussed, in particular the blow up space $\widetilde{N}$ of a spacetime $M$ will be defined. Moreover the main conditions satisfied for the spacetimes considered in this paper will be clearly established and a preliminar notion of L-spacetimes will be stated.  The example of a globally hyperbolic block in 3-dimensional Minkowski space will be thoroughly worked out as a sort of roadmap that could help the reader with the more technical aspects of the theory developed in subsequent sections.

Section \ref{sec:charts} will be devoted to introduce various local descriptions for the ambient space $\mathbb{P}(\mathcal{H})$, the projectivisation of the contact structure on the space of light rays, that will be used widely in the rest of the paper.    In Sect. \ref{sec:proj-param} it will be shown that there exists a projective parametrisation for light rays that will be used sparingly in the constructions to follow.    In Sect. \ref{sec:canon-ext} the canonical extension of 3-dimensional L-manifolds will be constructed as the quotient of the natural closure $\widetilde{\mathcal{N}}$ of the extended space of light rays with respect to the natural distributions $\oplus$ and $\ominus$ defined by the tangent spaces to skies at infinity.  This section is the most technical part of the paper and where the main Theorem will be proved, Thm. \ref{main_thm}, that shows that under the natural conditions stated in Sect. \ref{sec:lboundary}, the canonical distribution $\widetilde{\mathcal{D}}$ on the blow up space $\widetilde{\mathcal{N}}$, defined by the tangent spaces to the skies,  is extended smoothly to its boundary.  This result, together with a compactness assumption on the leaves of the boundary distribution, will allow to show that the extended spacetime $\overline{M}$ obtained glueing the L-boundary to $M$ is a smooth manifold with boundary.  Finally, in Sect. \ref{sec:lextensions} the general notion of L-extension will be introduced and their main properties discussed, in particular it will be shown that L-extensions are essentially unique and that the canonical extension defined by the L-boundary of a L-manifold is a L-extension.  Some examples showing that the conditions used in characterising L-extensions cannot be relaxed will also be discussed. 


\section{The space of light rays $\mathcal{N}$ of a spacetime $M$ and other background notions and  notations}\label{sec:Preliminaries}

We will summarise first some basic facts on the space of light rays of a spacetime of dimension $m>2$ in order to introduce the objects subject of this work (see  \cite{Ba14}, \cite{Ba15} and \cite{Ba17} and references therein for more details). 

Consider a time-oriented $m$-dimensional conformal Lorentz manifold $(M,\mathcal{C})$,  that is, a time-oriented $m$--dimensional Hausdorff smooth manifold $M$ equipped with a conformal class $\mathcal{C} = \{ \mathbf{g} = \lambda \mathbf{g}_0 \mid \lambda > 0 \}$ of Lorentz metrics $\mathbf{g}$. 
We define the space of light rays $\mathcal{N}$ corresponding to $(M,\mathcal{C})$ by 
\[
\mathcal{N}=\{\gamma\left(I\right)\subset M \mid \, \gamma:I\rightarrow M \text{ is a maximal null geodesic } \} \, ,
\]
that is, as the set of all images of maximal null geodesics. As a consequence of \cite[Lem.~2.7]{Mi08} and \cite[Lem.~2.1]{Ku88}, we get that any null geodesic $\gamma=\gamma\left(t\right)$ for the metric $\mathbf{g}_0\in\mathcal{C}$ is a null pregeodesic for any other metric $\mathbf{g}\in \mathcal{C}$ and this implies that $\mathcal{N}$ does not depend on any particular metric in $\mathcal{C}$ or, in other words, it depends just on the conformal structure of the Lorenztian manifold $M$.
Each one  of these images is called a \emph{light ray} and, from the definition of $\mathcal{N}$, one can interpret a light ray as an unparametrised null geodesic. 
If there is no risk of confusion we will use the same greek letter, usually $\gamma$, to denote both an element of $\mathcal{N}$, $\gamma\in \mathcal{N}$, and the image in $M$ of the corresponding maximal null geodesic, $\gamma\subset M$. 

Actually, for $M$ strongly causal, the hausdorffness of $\mathcal{N}$ is equivalent (see \cite[Sec.~3]{Lo90b}) to the null pseudo--convexity of $M$, that is, for any compact $K\subset M$ there exist a compact $K'\subset M$ such that any segment of light ray with endpoints in $K$ is contained in $K'$.

It is possible to equip $\mathcal{N}$ with suitable topological and differentiable structures, by using coordinate charts of subbundles of the tangent bundle $TM$, as done in \cite[Sec.~2.3]{Ba14}.
Indeed, if we fix an auxiliary metric $\mathbf{g}\in \mathcal{C}$, then we can define the subbundle of future light-like vectors on $M$ by $\mathbb{N}^{+}=\{v \in TM:\mathbf{g}\left( v ,v \right) =0,v \neq 0,v
\,\,\mathrm{future}\}\subset TM$. 
Its fibre at $p\in M$ will be denoted by $\mathbb{N}^{+}_p$. 
We will denote by $\mathbb{N}^{+}\left(W\right)$ the restriction of $\mathbb{N}^{+}$ to some given set $W\subset M$, and by 
\begin{equation}\label{PNW}
\mathbb{PN}\left(W\right)=\{ \left[u\right]:  v\in\left[u\right] \Leftrightarrow \exists \lambda>0 : v=\lambda u\in \mathbb{N}^{+}\left(W\right) \} \, ,
\end{equation}
the bundle of lines in $\mathbb{N}^{+}\left(W\right)$.

Observe that two different proportional vectors $v_1 , v_2 \in \mathbb{N}^{+}_p$ define different null geodesics with the same image in $M$, therefore both $v_1$ and $v_2$ define the same light ray $\gamma\in \mathcal{N}$. 
We will denote by $\gamma_{[v]} \in \mathcal{N}$ the light ray corresponding to the image of a null geodesic $\gamma:I\rightarrow M$ such that $\gamma'\left(0\right)=v\in \mathbb{N}^{+}_{\gamma\left(0\right)}$.
Because we are assuming that $M$ is strongly causal, for any $p\in M$ there exists a globally hyperbolic, causally convex and convex normal neighbourhood $V\subset M$ with a differentiable and spacelike Cauchy surface $C\subset V$ such that any causal curve entering in $V$, intersects $C$ in a singleton (see for instance \cite{Mi08}). In particular, the intersection of any light ray passing through $V$ with $C$ is exactly one point.
Then any $\left[v\right]\in \mathbb{PN}\left(C\right)$ defines, unambiguously, a light ray passing through $V$. 
We can choose the restriction 
\[
\Omega\left(C\right)=\left\{ v\in \mathbb{N}^{+}\left(C\right):\mathbf{g}\left(v,T\right)=-1 \right\}
\]
as a model for $\mathbb{PN}\left(C\right)$, where $T\in\mathfrak{X}\left(M\right)$ is a fixed global time-like vector field. 
The submanifold $\Omega\left(C\right)$ allows to define coordinates in $\mathcal{N}_{V}=\{ \gamma\in \mathcal{N}: \gamma\cap V\neq \varnothing \} \subset \mathcal{N}$.  
Calling $\Omega\left(V\right)$ the corresponding restriction of $\mathbb{N}^{+}\left(V\right)$ analogous to $\Omega\left(C\right)$, then we have the following diagram

\begin{equation}\label{diagram-charts}
\begin{tikzpicture} [every node/.style={midway}]
\matrix[column sep={8em,between origins},
        row sep={3em}] at (0,0)
{ \node(PN1)   {$\Omega\left( V\right)$}  ; & \node(N) {$\mathcal{N}_{V}$}; \\
  \node(PN2) {$\Omega\left( C\right)$};   & \node(PNC) {$\mathbb{PN}\left(C\right)$};                \\};
\draw[->] (PN1) -- (N) node[anchor=south]  {$\boldsymbol{\upgamma}$};
\draw[->] (PN2) -- (N) node[anchor=north]  {$\xi$};
\draw[<-right hook] (PN1)   -- (PN2) node[anchor=east] {inc};
\draw[->] (PN2) -- (PNC) node[anchor=north]  {$\pi^{\mathbb{N}}_{\mathbb{PN}}$};
\draw[->] (PNC) -- (N) node[anchor=west]  {$\mu$};
\end{tikzpicture}
\end{equation}
where $\boldsymbol{\upgamma}:\Omega\left( V\right)\rightarrow \mathcal{N}_{V}$ is a submersion defined by $\boldsymbol{\upgamma}\left(u\right)=\gamma_{u}$ and the map  $\xi=\left. \boldsymbol{\upgamma}\right|_{\Omega\left( C\right)}$ is a diffeomorphism obtained by the restriction of $\boldsymbol{\upgamma}$ to the hypersurface $\Omega\left( C\right)\subset \Omega\left( V\right)$, as seen in \cite[Sec.~2.3]{Ba14}.
Notice that $\Omega\left( C\right)$ is a section of the bundle $\pi^{\mathbb{N}}_{\mathbb{PN}} : \mathbb{N}^{+}\left( C\right) \rightarrow \mathbb{PN}\left(C\right)$\footnote{We will preferably denote by $\pi^{A}_{B}$ the canonical projection of the bundle $\pi^A_B:A\rightarrow B$.} and the map $\mu : \mathbb{PN}\left(C\right) \rightarrow \mathcal{N}_{V}$ is a diffeomorphism \cite[Sec.~2.3]{Ba14}.
We will also use the notation $\gamma_{\left[v\right]} = \mu\left(\left[v\right]\right) \in \mathcal{N}$.

For any $x\in M$, the set 
\begin{equation}
S\left( x\right) = \{\gamma \in \mathcal{N}: x \in \gamma \subset M\} 
\end{equation}
will be called \emph{the sky of $x$} and consist of all light rays passing through $x$.
Since each $\left[v\right]\in \mathbb{PN}_x$ defines a light ray $\gamma_{\left[v\right]}\in \mathcal{N}$, then $S\left(x\right)$ is diffeomorphic to the standard sphere $\mathbb{S}^{m-2}$.

We will say that $M$ is \emph{sky-separating} if the \emph{sky map} $S$ from $M$ to the set of skies $\Sigma =\{X\subset \mathcal{N}:X=S\left( x\right) \,\,  \mathrm{ for \, \, some } \,\, x\in M\}$ mapping any $x\in M$ into its sky $S(x)$, is injective, that is, $S\left(x\right)=S\left(y\right)$ implies that $x=y$.
If we assume that $M$ is sky-separating, then it is possible to define the \emph{reconstructive or regular topology} \cite[Def.~1]{Ba14}, \cite[Def.~13]{Ba15}, in the set of all skies in such a way that the map $S \colon M\rightarrow \Sigma $ is a diffeomorphism \cite[Cor.~17]{Ba15}, when the differentiable structure of the \emph{space of skies} $\Sigma$  is compatible with said topology.
 
A light ray is an unparametrised curve, but if we fix an auxiliary metric $\mathbf{g}\in\mathcal{C}$, given a local Cauchy hypersurface $C$ as before, the geodesic parameter of a light ray $\gamma\in \mathcal{N}$ in the corresponding open set in $\mathcal{N}$ defined by $C$, is determined by the initial values $\gamma\left(0\right)\in C$ and $\gamma'\left(0\right)\in \Omega\left(C\right)$. 
Recall that a differentiable curve $\Gamma:\left(-\epsilon,\epsilon\right)\rightarrow \mathcal{N}$ such that $\Gamma\left(0\right)=\gamma$ defines the vector $\Gamma'\left(0\right)\in T_{\gamma}\mathcal{N}$. 
Then, assumed $\mathbf{g}\in\mathcal{C}$ is fixed, the curve $\Gamma$ corresponds to a variation $\mathbf{f}:\left(-\epsilon,\epsilon\right)\times I\rightarrow M$ of null geodesics in $M$, and $\Gamma'\left(0\right)$ can be defined by the Jacobi field defined by $\mathbf{f}$ for $s=0$, that is 
\[
J\left(\tau\right)=\left.\frac{\partial \mathbf{f}\left(s,\tau\right)}{\partial s}\right|_{s=0} \in T_{\gamma\left(\tau\right)}M   .
\]
Recall that a Jacobi field $J$ along a geodesic $\gamma$ is a vector field along $\gamma$ satisfying the differential equation 
\begin{equation}\label{eq-Jacobi-fields}
J'' = R\left(J,\gamma'\right)\gamma'
\end{equation}
such that the \textit{prime} symbol ($'$) in $J$ denotes the covariant derivative along $\gamma$ and $R$ is the curvature tensor \cite[Def.~8.2]{On83}.
When all geodesics $\gamma_s=\mathbf{f}\left(s,\cdot\right)$ of the variation $\mathbf{f}$ are such that the value of $\mathbf{g}\left(\gamma'_s , \gamma'_s\right)$ is independent of $s$ then the corresponding Jacobi field $J$ along $\gamma=\gamma_0$ satisfies the property 
\[
\mathbf{g}\left(J\left(\tau\right),\gamma'\left(\tau\right)\right)= \mathrm{constant}
\]
for all $\tau\in I$ \cite[Lem.~2.1]{Ch18}. 
We will work with variations such that $\mathbf{g}\left(\gamma'_s , \gamma'_s\right)=0$ for all $s$. 

If we choose another metric $\overline{\mathbf{g}}\in \mathcal{C}$, the light ray $\gamma\in \mathcal{N}$ can be parametrized as a null geodesic related to $\overline{\mathbf{g}}$ by $\overline{\gamma}=\overline{\gamma}\left(\overline{\tau}\right)$ where $\tau=h\left(\overline{\tau}\right)$ is the corresponding change of parameter such that $\overline{\gamma}\left(\overline{\tau}\right)=\gamma\left(h\left(\overline{\tau}\right)\right)$.
The same curve $\Gamma\subset \mathcal{N}$ defines a variation of null geodesics $\overline{\mathbf{f}}$  in $\left(M,\overline{\mathbf{g}}\right)$ such that its Jacobi field along $\overline{\gamma}=\gamma=\Gamma\left(0\right)\in \mathcal{N}$ verifies 
\[
\overline{J}\left(\overline{\tau}\right) = J\left(h\left(\overline{\tau}\right)\right) \ \left(\mathrm{mod}\ \gamma'\left(h\left(\overline{\tau}\right)\right) \right)
\]
for all $\overline{\tau}$, then the vector $\xi=\Gamma'\left(0\right)\in T_{\gamma}\mathcal{N}$ can be identified with an equivalence class of Jacobi fields on $\gamma$ given by 
\begin{equation}\label{Jmodgamma}
\langle J\rangle =J \ (\mathrm{mod} \ \gamma ^{\prime })  .
\end{equation}

The space of light rays $\mathcal{N}$ has a relevant canonical contact structure, that is, a maximal non-integrable\footnote{Let us recall that non-integrable means that given any local 1-form $\alpha$ such that locally $\mathcal{H} =\ker \alpha$, then $d\alpha$ is non-degenerate when restricted to $\mathcal{H}$.} hyperplane distribution $\mathcal{H}\subset T\mathcal{N}$, which can be described by Jacobi fields as in \cite{Lo98}, \cite{Lo06} by
\begin{equation}  \label{contact}
\mathcal{H}_{\gamma}=\lbrace \langle J\rangle\in T_{\gamma}\mathcal{N}:\mathbf{g}\left(J,\gamma ^{\prime }\right)=0 \rbrace
\end{equation}
where $\mathbf{g}\in \mathcal{C}$.   Notice that the contact structure $\mathcal{H}$ depends only on the conformal structure $\mathcal{C}$.

Given $x\in M$ then, for any parametrized $\gamma\in X=S\left(x\right)\in \Sigma$ such that $\gamma \left( s_{0}\right) =x$, we have 
\begin{equation}\label{tangent_sky-1}
T_{\gamma }X=\{\langle J\rangle\in T_{\gamma }\mathcal{N}:J\left( s_{0}\right) =0\left(
\mathrm{{mod}\gamma ^{\prime }}\right) \}  
\end{equation}%
hence, for $\langle J\rangle\in T_{\gamma }X$, we have that $\mathbf{g}\left( J,\gamma ^{\prime
}\right)=0 $ because $J\left( s_{0}\right) =0\left( \mathrm{{mod}%
\gamma ^{\prime }}\right) $ and this implies that $T_{\gamma }X$ is a $(m-2)$--dimensional subspace inside of the $(2m-4)$--dimensional vector space $\mathcal{H}_{\gamma }\subset T_{\gamma}\mathcal{N}$. 
Therefore any sky $X$ is a Legendrian submanifold of the contact structure $\mathcal{H}$ on $\mathcal{N}$.

Observe that, if $\gamma\in \mathcal{N}$ such that $x,y\in \gamma\subset M$ satisfy $0\neq \langle J\rangle \in T_{\gamma}S\left(x\right) \cap T_{\gamma}S\left(y\right)$, then $x,y$ are conjugate points of the Jacobi field $J$ along $\gamma$.
Then we will say that $M$ is \emph{light non--conjugate} if for any $x,y\in M$ such that $\gamma\in S\left(x\right) \cap S\left(y\right) \subset \mathcal{N}$ then $T_{\gamma}S\left(x\right) \cap T_{\gamma}S\left(y\right) = \left\{ 0 \right\}$. 
Along any null geodesic $\gamma$ in $\left(M,\mathbf{g}\right)$, we can always find two non--conjugate points $p,q\in\gamma$ and this means that $T_{\gamma}S\left(p\right) \cap T_{\gamma}S\left(q\right)=\{0\}$.
Since $T_{\gamma}S\left(x\right) \subset \mathcal{H}_{\gamma}$ for all $x\in \gamma$ and $\mathrm{dim}\left(T_{\gamma}S\left(p\right) \oplus T_{\gamma}S\left(q\right)\right)= \mathrm{dim}\left(\mathcal{H}_{\gamma}\right)$ then we have that
\begin{equation}\label{contact-struct}
\mathcal{H}_{\gamma} = T_{\gamma}S\left(p\right) \oplus T_{\gamma}S\left(q\right)
\end{equation} 
for any pair of non--conjugate points $p,q\in\gamma$.   Notice that because of (\ref{contact-struct}) the contact structure is spanned by tangent spaces to skies, hence again, it shows that $\mathcal{H}$ is independent from the metric in $\mathcal{C}$ and is a conformal invariant. 


\section{The L-boundary}\label{sec:lboundary}

As it was discussed in the introduction, the objective of this work is to construct a new causal boundary that complements the main features of the causal c-boundary and the conformal boundary.   Such notion of causal boundary, introduced by R. Low in  \cite{Lo06}, was started to be developed in the recent article \cite{Ba17} where some examples and preliminary results were exhibited.   The main characteristic of the L-boundary is that its construction relies only on the conformal structure of the original spacetime and, following Penrose's insight as developed by R. Low, it uses the geometrical structure of the space of light rays.  


\subsection{Blowing up and down a spacetime}\label{sec:blowing}

In this section we will offer another approach to the construction of the L-boundary, closely related the well-known `blowing up' and `blowing down'  techniques in algebraic and symplectic geometry, that could help to visualize the constructions leading to the notion of the L-boundary.  

The main idea behind the construction of points in the L-boundary can be summarised as follows.  If we identify the events $x$ of the given spacetime $M$ with their corresponding skies $X = S(x)$\footnote{Actually it can be shown that under certain natural conditions both spaces, the spacetime $M$ and the space of skies $\Sigma$, are diffeomorphic and the conformal structure of the spacetime can be recovered from the structure of the pair $\mathcal{N}$ and $\Sigma$ \cite{Ba14},\cite{Ba15}.}, then the L-boundary is obtained by adding `skies at infinity'.   

As we were indicating above, the way to do that is reminiscent of the well-known blowing up and down techniques in algebraic and symplectic geometry (see for instance \cite[Ch. 7.1]{Mc98}).     Blowing up the origin in $\mathbb{C}^n$ consists in removing the point $\mathbf{0}$ and replacing it by the lines passing through it, that is, we replace $\mathbb{C}^n$ by the manifold $\widetilde{\mathbb{C}}_0^n = \{ (z,l) \in \mathbb{C}^n \times \mathbb{CP}^{n-1} :  z \in l \in \mathbb{CP}^{n-1} \}$. 

As a set, the blown up space $\widetilde{\mathbb{C}}_0^n$ is just $\mathbb{C}^n \backslash \{ \mathbf{0}\} \sqcup \mathbb{CP}^{n-1}$ with $\mathbb{CP}^{n-1}$ being the space of complex tangent lines through the origin.  Notice that $\mathbb{CP}^{n-1} \subset \widetilde{\mathbb{C}}_0^n$ as the subset $\{ (\mathbf{0},l) :  l \in \mathbb{CP}^{n-1}\}$ (called the exceptional divisor).  The projection onto the first factor induces a diffeomorphism between $\widetilde{\mathbb{C}}_0^n \backslash \mathbb{CP}^{n-1}$ and $\mathbb{C}^n \backslash \{ \mathbf{0} \}$ and collapses (blows down)  the exceptional divisor $\mathbb{CP}^{n-1}$ onto the origin $\mathbf{0} \in \mathbf{C}^n$.  The blown up space $\widetilde{\mathbb{C}}_0^n$  can also be visualized as the tautological line bundle $L$ over $\mathbb{CP}^{n-1}$ where the fibre over $l \in \mathbb{CP}^{n-1}$ is the line $l$ itself considered as a one-dimensional space in $\mathbb{C}^n$.   Notice that the zero section of the bundle can be identified with the exceptional divisor $\mathbb{CP}^{n-1}$.   We may recover the original space by identifying back all the lines $l \in \mathbb{CP}^{n-1}$ among themselves or, if you wish, collapsing the base space of the bundle to one point.     

Actually, there is no reason to restrict ourselves to $\mathbb{C}^n$ or to blow up and down just a single point $\mathbf{0}$.  We may consider a time-oriented strongly causal pseudoconvex Lorentzian manifold $M$ whose  space of all future-oriented unparametrized causal geodesics is a smooth Hausdorff manifold \cite{Lo90b}.    For the purposes of this work we will only need the subspace of future oriented maximal null geodesics $\mathcal{N}$ as it is the one that captures the conformal properties of the original spacetime.  Thus, we may define the blown up space $\widetilde{M}$ of the conformal spacetime $(M, \mathcal{C})$ as the smooth manifold:
$$
\widetilde{M} = \{(x ,\gamma) \in M \times \mathcal{N} \colon x \in \gamma \in X = S(x) \subset \mathcal{N}\} \, ,
$$
with $S(x)$ the congruence of null geodesics passing through $x$, that is, the sky at $x$.   It is clear that the natural fibration $\pi_M \colon \widetilde{M} \to M$ defines a fibre bundle structure over $M$ (the charts defined in Sect. \ref{sec:Preliminaries} would provide the desired local trivializations of $\pi_M$) with standard fibre $\mathbb{S}^{m-2}$ ($m = \dim M$).   Hence we would construct the boundary of $M$ determining if there exists a natural boundary $\partial\widetilde{M}$ and an extension  $\tilde{\pi}_M$ of the projection $\pi_M$ to it.  Then `blowing down' the fibres of $\tilde{\pi}_M$ we will obtain the boundary of $M$ we are looking for.     

However the previous description of the blown up space $\widetilde{M}$ is not appropriate because there is not an obvious embedding into a larger space $\widetilde{M} \subset \Omega$ that would help to identify the boundary we are after.    As it will be clear in what follows, it is much more convenient to consider the projection $\pi_{\mathcal{N}} \colon \widetilde{M} \to \mathcal{N}$ induced by the projection onto the second factor of the product manifold $M \times \mathcal{N}$.   The fibre of $\pi_{\mathcal{N}}$ at $\gamma$ is the graph of the geodesic $\gamma$ in $M$, but notice that each point $x \in \gamma$, determines the sky $S(x)$, and $\gamma \in S(x)$ for all $x\in \gamma$, thus as the point $x$ moves along $\gamma$ the sky $S(x)$ changes with $x$. Thus we would like to follow the sky $S(x)$ as $x$ moves to the `end' of $\gamma$.   Now it is clear that instead of considering the sky $S(x)$ itself is more convenient to look at its tangent space at $\gamma$. The tangent space $T_\gamma S(x)$ to the sky $S(x)$ at $\gamma$ will be a $(m-2)$-plane $T_\gamma S(x)$ in $T_\gamma \mathcal{N}$.  Even more, because of (\ref{tangent_sky-1}), $T_\gamma S(x) \subset \mathcal{H}_\gamma$, and the tangent space to the sky $S(x)$ at $\gamma$ lies in the contact hyperplane $\mathcal{H}_\gamma$.     Thus we may consider the space $\widetilde{M}$ as sitting inside the Grasmannian $\mathrm{Gr}^{m-2}(\mathcal{H})$ of $(m-2)$-planes on the contact distribution $\mathcal{H}$ by means of the canonical embbeding:
$\iota \colon \widetilde{M} \to \mathrm{Gr}^{m-2}(\mathcal{H})$, $\iota (x,\gamma) = T_\gamma S(x) \in \mathrm{Gr}^{m-2}(\mathcal{H}_\gamma)$.   For the purposes of this work we will denote the range of the embbeding $\iota$ as $\widetilde{\mathcal{N}}$, thus 
$$
\widetilde{\mathcal{N}} = \{ T_\gamma S(x) \in \mathrm{Gr}^{m-2}(\mathcal{H}_\gamma) \colon x \in \gamma \in \mathcal{\mathcal{N}} \} \, ,
$$ 
and we will call it in what follows the blow up of the spacetime $M$.   Then the projection $\pi_\mathcal{N} \colon \widetilde{M} \to \mathcal{N}$ becomes the restriction of the canonical projection $\pi \colon \mathrm{Gr}^{m-2}(\mathcal{H}) \to \mathcal{N}$ (that maps every $(m-2)$-plane $W_\gamma \mapsto \gamma$) that will be denoted again, with a slight abuse of notation, by $\pi_{\mathcal{N}}$.  The fibre at $\gamma$ will be denoted for short by $\tilde{\gamma}$, that is $\pi_\mathcal{N}^{-1}(\gamma) = \tilde{\gamma} = \{T_\gamma S(x) \mid x \in \gamma \}$ .

Choosing a parametrization $\gamma(t)$ of the geodesic $\gamma$, there is a natural way of looking for the boundary of $\widetilde{\mathcal{N}}$ in $\mathrm{Gr}^{m-2}(\mathcal{H})$ by looking at the trace left by the skies at infinity.  That is, if $\gamma:\left(a,b\right)\rightarrow M$ is a future-oriented inextendible null geodesic, we can define the curve $\widetilde{\gamma}:\left(a,b\right)\rightarrow \mathrm{Gr}^{m-2}\left(\mathcal{H}_{\gamma}\right)$ defined by $\widetilde{\gamma}\left(t\right)= T_{\gamma}S\left(\gamma\left(t\right) \right) $.  If the limit points 
\begin{equation}\label{boundary-field}
\begin{tabular}{l}
$\ominus_{\gamma} = \lim_{s\mapsto a^{+}}\widetilde{\gamma}\left(s\right)\in \mathrm{Gr}^{m-2}\left(\mathcal{H}_{\gamma}\right)$ , \vspace{3mm} \\
$\oplus_{\gamma} = \lim_{s\mapsto b^{-}}\widetilde{\gamma}\left(s\right)\in \mathrm{Gr}^{m-2}\left(\mathcal{H}_{\gamma}\right)$ ,
\end{tabular}
\end{equation}
exist, they will represent the tangent space to the `sky at infinity', hence we define the boundary $\partial^+\widetilde{\mathcal{N}} = \{ \oplus_\gamma \mid \gamma \in \mathcal{N}\}$ and $\partial^-\widetilde{\mathcal{N}}$  is defined similarly (see Fig. \ref{lboundary} for a pictorical representation of such construction).    

Notice that given a point $x\in M$, its sky $X = S(x)$ determines a submanifold $\widetilde{X}= \{T_\gamma X \mid \gamma \in X \} \subset \widetilde{\mathcal{N}}$, which corresponds to the fibres of the canonical projection map $\pi_M \colon \widetilde{\mathcal{N}} \to M$ given by $\pi_M (T_\gamma S(x)) = x$  (see diagram \ref{infinitesimal_sky} below).   The fibres $\widetilde{X}$ of $\pi_M$ define a canonical distribution $\mathcal{D}^\sim$ on $\widetilde{\mathcal{N}}$.   It will be the main contribution of this paper (see Thm. \ref{main_thm}) to show that the distribution  $\mathcal{D}^\sim$ can be extended smoothly to the boundary of the blown up space $\widetilde{\mathcal{N}}$.  Hence, blowing down the integral leaves of total distribution on the closure $\overline{\widetilde{\mathcal{N}}}$, that is, considering the quotient space defined by its leaves, will provide the extension $\overline{M}$ and the L-boundary we are looking for.

\begin{figure}[h]
\centering
\begin{tikzpicture} [every node/.style={midway}]
\matrix[column sep={7em,between origins},
        row sep={3em}] at (0,0)
{ & \node(Gr) {$\mathrm{Gr}^{m-2}(\mathcal{H})$}; &  \\
\node(PN1)   {$\bigcup_{\gamma \in \mathcal{N}} \tilde{\gamma} = \widetilde{\mathcal{N}}^{2m-2}$}; & \node(cong) {$\cong$}; & \node(N) {$\widetilde{\Sigma}^{2m-2} =  \bigcup_{x \in M} \tilde{X}$}; \\
\node(PN2) {$\mathcal{N}^{2m-3}$};  &  & \node(PNC) {$M^m \cong \Sigma^m$};         \\};
\draw[<-] (PN2)   -- (PN1) node[anchor=east] {$\pi_\mathcal{N}$};
\draw[->] (N) -- (PNC) node[anchor=west]  {$\pi_M$};
\draw[<-left hook] (Gr) -- (N);  \draw[<-right hook] (Gr) -- (PN1);
\end{tikzpicture}
\caption{ Diagram summarizing the spaces related to the blow up space $\widetilde{\mathcal{N}}$.  The dimensions of the various spaces are indicated as superindices.  The space of skies $\Sigma$ is identified with $M$ and $\widetilde{\mathcal{N}}$ sits inside the contact Grassmannian $\mathrm{Gr}^{m-2}(\mathcal{H})$.}
\label{infinitesimal_sky}
\end{figure}


\subsection{L-spacetimes}

Notice that in the particular simple instance of 3-dimensional spacetimes, skies are circles and their tangent spaces are lines in the bidimensional contact plane $\mathcal{H}_\gamma$, thus the Grasmannian $\mathrm{Gr}^{m-2}(\mathcal{H}_\gamma)$ becomes $\mathbb{P}(\mathcal{H}_\gamma)$, the space of lines in the contact plane $\mathcal{H}_\gamma$ which is diffeomorphic to $\mathbb{S}^1$.  We will take advantage of this and present the construction of the L-boundary on 3-dimensional spacetimes. This choice will provide significant technical simplifications making the main ideas involved on such construction more transparent.

Thus, if the null geodesics $\gamma$ define a nice boundary $\partial^+\widetilde{\mathcal{N}}$ at $+\infty$ in $\mathbb{P}(\mathcal{H})$, and the tangent spaces to the skies at infinity $\oplus_\gamma = \lim_{s\to \infty}  T_\gamma S(\gamma(s))$ define an integrable distribution, its integral leaves will define the actual skies at infinity, hence the corresponding quotient space will define a boundary for $M$, its points being the `events' defined by the skies at infinity (see in Fig. \ref{lboundary} a pictorical representation of the correspondence between the skies at infinity and the points of the L-boundary).   

\begin{figure}[h]
  \centering
    \includegraphics[scale=0.5]{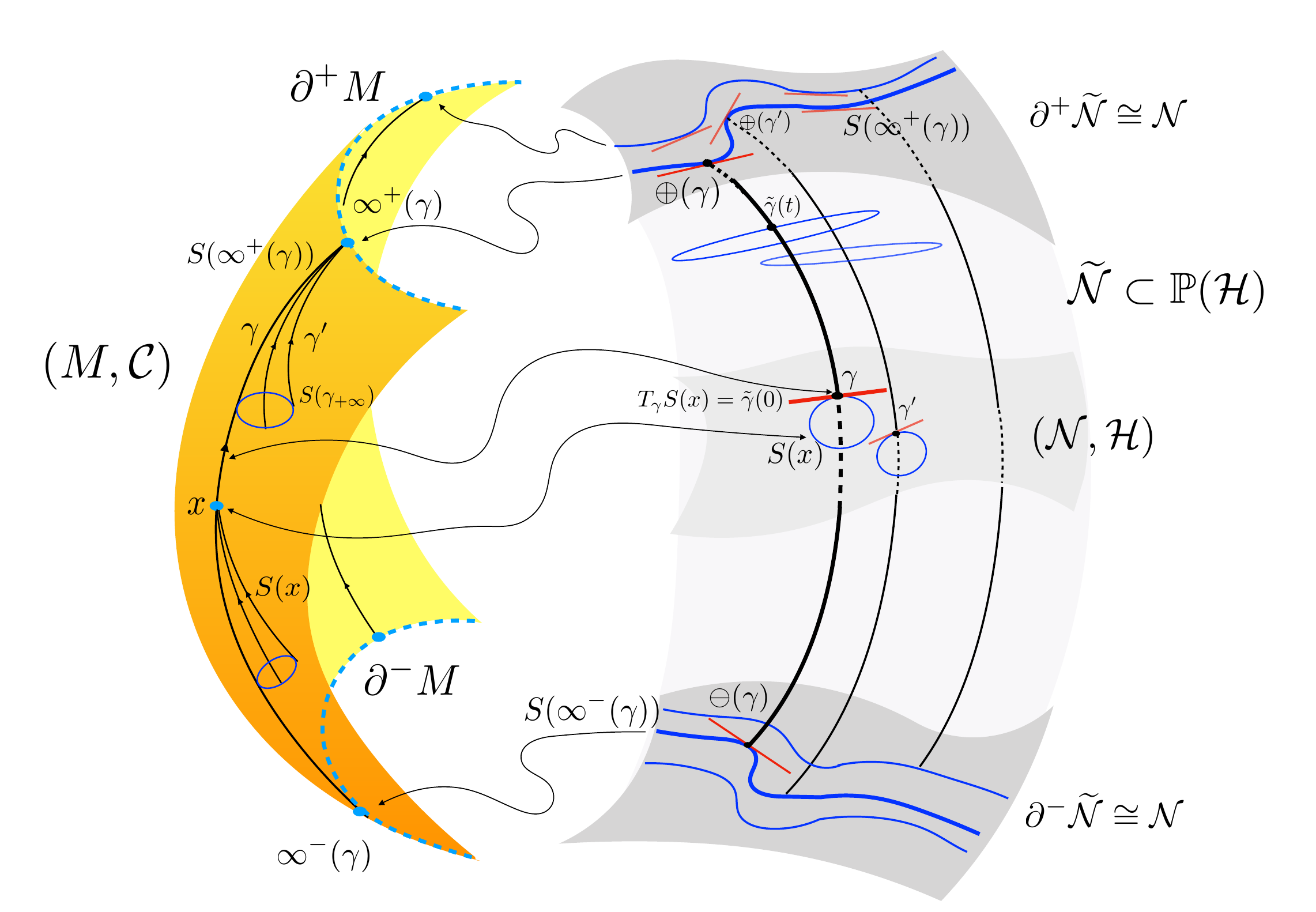}
  \caption{The construction of the L-boundary.  On the left, the spacetime of events $M$; on the right, the space of light rays $\mathcal{N}$.  A light ray $\gamma \subset M$ on the left corresponds to a point $\gamma \in \mathcal{N}$ on the right, and a sky $S(x)$ (a congruence of geodesics) on the left, corresponds to a circle on the right.   The `skies at infinity', $S(\infty^+(\gamma))$, are integral curves of the distributions $\oplus$ and $\ominus$, represented by blue lines on the boundaries $\partial^\pm \widetilde{\mathcal{N}} \cong \mathcal{N}$, or congruences of geodesics that approach a point $\infty^+(\gamma)$ at infinity.  The `skies at infinity' are glued to the spacetime $M$ forming its L-boundary.}
  \label{lboundary}
\end{figure}

Of course, in general, it is not true that the tangent spaces to the skies at infinity define a nice boundary, or that the integral curves of the distributions $\oplus$, $\ominus$ define a smooth manifold, however, as we will see in this article, in 3-dimensional spacetimes, under reasonable technical conditions, the new conformal boundary defined in this way exists and exhibits a number of interesting features: it is a smooth boundary and sets the ground for an extension of the conformal structure of the original spacetime.     At the same time we must point out that the restriction to 3-dimensional spacetimes is not a fundamental one.   As it was commented before, in the 3-dimensional case, the geometry of the Grassmannian $\mathrm{Gr}^{m-2}(\mathcal{H})$ is particularly nice and allows to simplify the technical conditions needed for the various constructions (and helps to visualise better the objects entering the full picture), thus restricting to such case is just for the purposes of brevity and clarity in the exposition, leaving a general discussion for upcoming works (see Sect. \ref{sec:discussion}).

\bigskip

From now on we will assume that the conformal Lorentz manifold $\left(M,\mathcal{C}\right)$ verifies:
\begin{enumerate}\label{conditions}
\item $\dim M = 3$. 
\item $\left(M,\mathcal{C}\right)$ is time oriented, strongly causal, null--pseudo convex, light non--conjugate and sky-separating.
\item The distributions $\oplus, \ominus : \mathcal{N} \rightarrow \mathbb{P}\left(\mathcal{H}\right)$ defined by $\oplus_{\gamma}=\lim_{s\mapsto b^{-}}T_{\gamma}S\left(\gamma\left(s\right)\right)$ and $\ominus_{\gamma}=\lim_{s\mapsto a^{+}}T_{\gamma}S\left(\gamma\left(s\right)\right)$ are differentiable and regular and such that $\oplus_{\gamma}\neq\ominus_{\gamma}$ for any maximally and future--directed parametrized light ray $\gamma:\left(a,b\right)\rightarrow M$.
\end{enumerate}

As it will be shown along the paper, see Thm. \ref{main_thm}, the set of conditions above will be sufficient for a spacetime to possess a well defined $L$-boundary (perhaps not smooth), then we may introduce the following definition. 

\begin{definition}\label{Lspace} A smooth manifold $M$ satisfying conditions 2 and 3 above will be said to be an L--spacetime.
\end{definition}

Notice that, condition 2 above summarises general conditions under which the topological and differentiable structures of the spaces $M$ and $\mathcal{N}$ are good enough (see \cite{Ba14}, \cite{Ba15} for a detailed discussion of them). In particular if we consider $\dim M = 3$, then by \cite[Lem.~2.5]{Ba17} $M$ is light non--conjugate if and only if for any $X\neq Y \in \Sigma$ with $\gamma\in X\cap Y$ then $T_{\gamma}X \neq T_{\gamma}Y$.
Notice that any null geodesic has no conjugate points contained in any given normal neighbourhood. This implies that the curve $\widetilde{\gamma}\left(s\right)=T_{\gamma}S\left(\gamma\left(s\right)\right)\in \mathbb{P}\left(\mathcal{H}_{\gamma}\right)\simeq \mathbb{S}^{1}$ is locally injective. Moreover, if we assume $M$ to be light non--conjugate, then $\widetilde{\gamma}$ must be injective and therefore the continuity of $\widetilde{\gamma}$ would imply that the limits $\oplus_{\gamma}$ and $\ominus_{\gamma}$ do exist in $\mathbb{P}\left(\mathcal{H}_{\gamma}\right)$, that is they define $1$--dimensional subspaces of $\mathcal{H}_{\gamma}$.
 
Condition 3 refers only  to the minimum regularity conditions that the distributions $\oplus$, $\ominus$ must satisfy in order to guarantee that the construction of the L-boundary makes sense.  However, see the discussion after Corollary \ref{Corollary-ext}, they will not guarantee that the constructed boundary is smooth. 


\subsection{A simple example: Minkowski spacetime}\label{sec:block}

We will illustrate the previous ideas using a simple family of $L$-spacetimes.  We will consider a globally hyperbolic block embedded in 3-dimensional Minkowski spacetime $\mathbb{M}^3$, identified with $\mathbb{R}^3 = \{(t,x,y) \mid t,x,y \in \mathbb{R} \}$ with the standard Lorentz metric $\mathbf{g}_3 = - dt^2 + dx^2 + dy^2$ and the corresponding conformal structure.  Thus let $a,b$ be two real numbers such that $a < 0$ and $1 < b$,  we will  consider the spacetime $\mathbb{M}^3(a,b) = \{ (t,x,y) \in \mathbb{R}^3 \mid a < t < b \}$ equipped with the restriction of the Lorentz metric $\mathbf{g}_3$ (see Fig. \ref{M3ab} (a)).   Since $\mathbb{M}^3(a,b) \subset \mathbb{M}^3$ is open and they share the Cauchy surface $C = \{ t = 0\} \cong \mathbb{R}^2$.  Their corresponding space of light rays can be identified with $C \times \mathbb{S}^1$, actually the local maps $\mathbb{PN}(C) \to \mathcal{N}_V$ defined in Sect. \ref{sec:Preliminaries} are now globally defined and  $\mathbb{PN}(C) \cong C \times \mathbb{S}^1\cong \mathcal{N}$.    Using the previous identification, we will introduce explicit coordinates $(x,y,\theta)$ on $\mathcal{N}$ as follows.  Given $(x,y) \in C$, $-\pi < \theta \leq \pi$,  we will denote by $\gamma_{(x,y,\theta)}\in \mathcal{N}$ the light ray determined by the null geodesic:
\begin{equation}\label{gammaxyphi}
\gamma_{(x,y,\theta)}(t) = (t, x+t\cos \theta, y + t\sin \theta ) \, , \quad a < t < b \, .
\end{equation}
Notice that $\gamma_{(x,y,\theta)}$ cuts the Cauchy hypersurface $C$ at $(x,y)$ and its projection onto $C$ determines the angle $\theta$ with respect to the $x$-axis.

Thus, given $(x_0,y_0) \in C$, $-\pi < \theta_0 \leq \pi$ and $t_0 \in (a,b)$, we denote by $p_0 \in \mathbb{M}^3(a,b)$, the event:
$$
p_0 = \gamma_{(x_0,y_0,\theta_0)}(t_0) =  (t_0, x_0+t_0\cos \theta_0, y_0 + t_0\sin \theta_0 ) \, . 
$$

Denoting by $\gamma$ the geodesic $\gamma_{(x_0,y_0,\theta_0)}$ we will denote by $X(\gamma, t, s)$, with $a< t < b$, $-\pi < s \leq \pi$,  the family of null geodesics passing through $\gamma(t)$, that is the sky $S(\gamma(t))$ at $\gamma(t)$.  In particular $X(\gamma, t_0, s)(\tau) = p_0 + \tau(1, \cos s , \sin s)$.   The intersection of the geodesic $X(\gamma, t_0, s)$ with $C$ happens when $\tau = -t_0$, and then, 
$$
X(\gamma, t_0, s) (-t_0) = (0,x_0 + t_0( \cos \theta_0 - \cos s) ,y_0 + t_0(\sin \theta_0 -\sin s)) \in C \, .
$$
Then, using back the notation for geodesics in Eq. (\ref{gammaxyphi}), we get for the light rays in $S(p_0)$:
$$
X(\gamma, t_0, s) = \gamma_{(x_0 + t_0( \cos \theta_0 - \cos s) ,y_0 + t_0(\sin \theta_0 -\sin s), s)} = \gamma_{(x(s), y(s), s)}\, .
$$
Hence the sky at $p_0$ is described as the circle centered at $p_0$ of radius $t_0$,  $(x_0 + t_0( \cos \theta_0 - \cos s) ,y_0 + t_0(\sin \theta_0 -\sin s), s)$, $-\pi < s \leq \pi$ in $C\times \mathbb{S}^1$.

\begin{figure}[h]
\centering
    \includegraphics[scale=0.6]{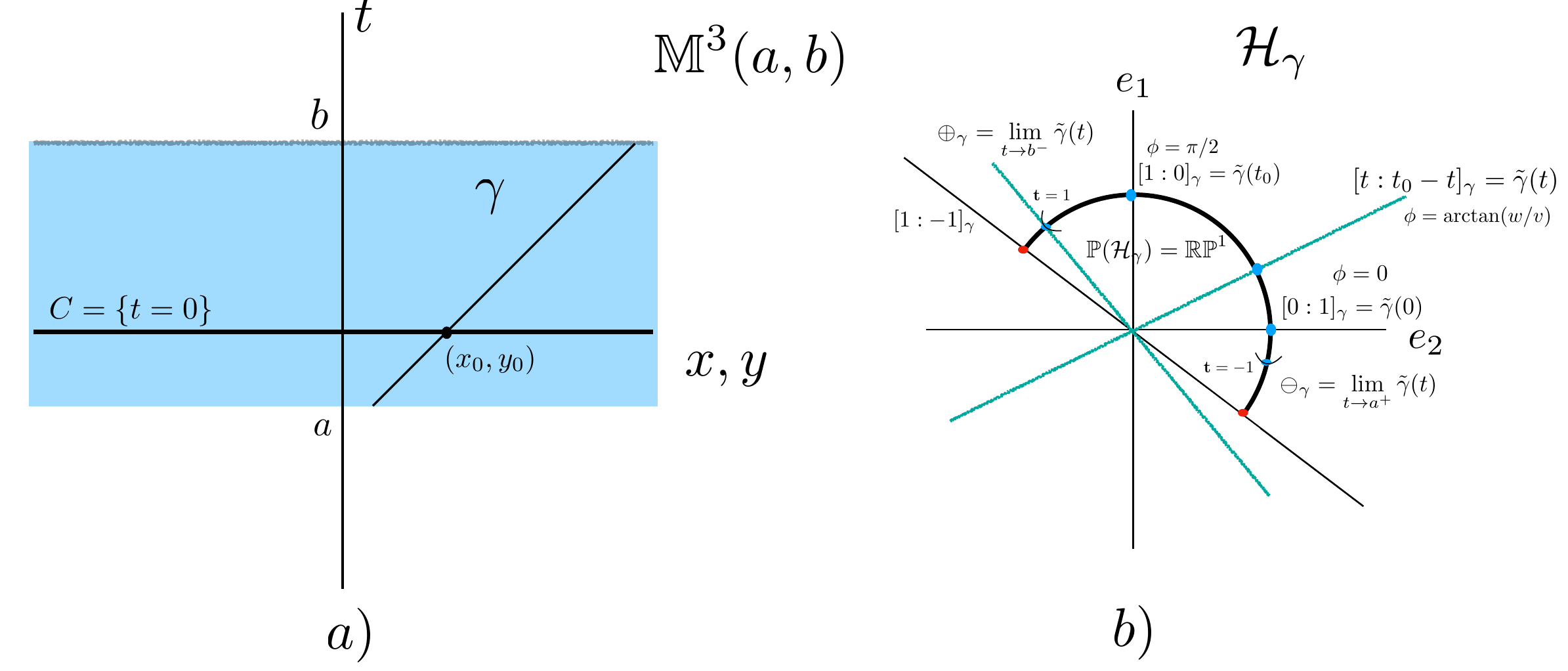}
\caption{ a) A bidimensional projection of the globally hyperbolic block $\mathbb{M}^3(a,b)$. The Cauchy hypersurface $C = \{ t = 0\}$ a light ray $\gamma$ are drawn.  The topological boundary $\{t = b\}$ is shaded.  b) The contact plane $\mathcal{H}_\gamma$ at the light ray $\gamma$.   Lines described by homogeneous coordinates $[w:v]_\gamma$ are drawn in blue.   The projective circle $\mathbb{P}(\mathcal{H}_\gamma)$ is highligthed (the red dots should be identified).  The distributions $\oplus_\gamma$, $\ominus_\gamma$ are marked with blue dots and the polar coordinate $\phi$ is indicated.} 
\label{M3ab}
\end{figure}

The space of light rays $\mathcal{N}$ is a smooth Hausdorff 3-manifold (clearly $\mathbb{M}^3$, hence $\mathbb{M}^3(a,b)$, is a L-spacetime) and its tangent bundle $T\mathcal{N}$ is 6-dimensional.   The global identification $\mathcal{N} \cong C \times \mathbb{S}^1$, provides distinguished basis $\partial/\partial x \mid_\gamma$, $\partial/\partial y \mid_\gamma$, and $\partial/\partial \theta\mid_\gamma$, for $T_\gamma \mathcal{N}$, that is:
$$
T_\gamma \mathcal{N} = \mathrm{span} \left\{ \left.\frac{\partial}{\partial x}\right|_\gamma, \left.\frac{\partial}{\partial y}\right|_\gamma, \left.\frac{\partial}{\partial \theta}\right|_\gamma \right\} \, .
$$
Thus the tangent space $T_\gamma S(p_0)$ at $\gamma$ of the sky at $p_0$  is  obtained by taking the derivative at $s= \theta_0$ of the curve $(x(s), y(s), s) = (x_0 + t_0( \cos \theta_0 - \cos s), y_0 + t_0(\sin \theta_0 -\sin s), s)$, thus,
\begin{equation}\label{tangent_sky}
T_\gamma S(p_0) = \mathrm{span} \left\{  t_0 \sin \theta_0\left. \frac{\partial}{\partial x}\right|_\gamma - t_0 \cos \theta_0 \left.\frac{\partial}{\partial y}\right|_\gamma +  \left.\frac{\partial}{\partial \theta}\right|_\gamma \right\}   \, .
\end{equation}
If instead of $p_0$ we choose another event in the geodesic $\gamma = \gamma_{(x_0,y_0,\theta_0)}$, for instance the intersection $\{ (0, x_0,y_0) \}= \gamma \cap C$ with the Cauchy hypersurface $C$, then the sky passing through it has the form $\gamma_{(x_0 ,y_0, s)}$ and its tangent space:
$$
T_\gamma S(0, x_0,y_0) = \mathrm{span} \left\{  \left.\frac{\partial}{\partial \theta}\right|_\gamma \right\} \, .
$$
Then, because of (\ref{contact-struct}), we obtain the 2-dimensional contact hyperplane $\mathcal{H}_\gamma \subset T_\gamma \mathcal{N}$ at $\gamma$, 
\begin{equation}\label{Hskies}
\mathcal{H}_\gamma =  T_\gamma S(p_0)  \oplus T_\gamma S(0, x_0,y_0)  = \mathrm{span} \left\{ t_0 \sin \theta_0\left. \frac{\partial}{\partial x}\right|_\gamma - t_0 \cos \theta_0 \left.\frac{\partial}{\partial y}\right|_\gamma +  \left.\frac{\partial}{\partial \theta}\right|_\gamma, \left.\frac{\partial}{\partial \theta}\right|_\gamma \right\} \, .
\end{equation}
Notice that the 1-form $\alpha = \cos \theta \,  dx + \sin \theta \, dy$ on $\mathcal{N}$ defines a contact 1-form with $\ker \alpha_\gamma = \mathcal{H}_\gamma$.  Then $d\alpha = (\sin \theta\,  dx - \cos \theta \, dy) \wedge d\theta$ restricted to $\mathcal{H}_\gamma$ is nondegenerate.  In fact a simple calculation shows that $\ker d\alpha$ is spanned by the vector field $R = \cos \theta\,  \partial / \partial x + \sin \theta \, \partial /\partial y$, the Reeb field of the contact 1-form $\alpha$, which is transversal to $\mathcal{H}$.

We are ready now to describe the blow up space $\widetilde{\mathcal{N}}$ of $M$.  For that we will observe first that the bundle of lines on the contact planes over the space of light rays, $\mathbb{P}(\mathcal{H}) \to \mathcal{N}$, is a bundle of circles over $\mathcal{N}$, that is, the fibre over $\gamma \in \mathcal{N}$ is the real projective space $\mathbb{P}(\mathcal{H}_\gamma)$, i.e., the space of lines passing through $\mathbf{0}$ in $\mathcal{H}_\gamma$ (see Fig. \ref{M3ab} (b)).    Such space is easily described by using homogeneous coordinates.  Given $w, v$ two real numbers, we define the line $[w:v]_\gamma \subset \mathcal{H}_\gamma$, as:
$$
[w:v]_\gamma = \mathrm{span} \{ w e_1 + v e_2 \} \, ,
$$
with $e_1, e_2$ a given linear basis of $\mathcal{H}_\gamma$.   Using the natural basis provided by the description of $\mathcal{H}_\gamma$ given in (\ref{Hskies}), that is $e_1 =  t_0 \sin \theta_0\left. \frac{\partial}{\partial x}\right|_\gamma - t_0 \cos \theta_0 \left.\frac{\partial}{\partial y}\right|_\gamma+  \left.\frac{\partial}{\partial \theta}\right|_\gamma$ and  $e_2 = \left.\frac{\partial}{\partial \theta}\right|_\gamma$, we get:
$$
[w:v]_\gamma = \mathrm{span} \left\{ w t_0 \sin \theta_0\left. \frac{\partial}{\partial x}\right|_\gamma - w t_0\cos \theta_0 \left.\frac{\partial}{\partial y}\right|_\gamma + (w + v) \left.\frac{\partial}{\partial \theta}\right|_\gamma \right\} \subset \mathcal{H}_\gamma \, .
$$
Hence, because of (\ref{tangent_sky}) the line defined by the tangent space to the sky $S(p_0)$ at $\gamma$ 
has homogeneous coordinates  $w = 1$ and $v = 0$, that is $T_\gamma S(p_0) = [1:0]_\gamma \in \mathbb{P}(\mathcal{H}_\gamma)$.  Similarly $T_\gamma S(0,x_0,y_0) = [0:1]_\gamma$ (see Fig. \ref{M3ab}).
We may also consider the polar coordinate $\phi = \arctan (w/v)$, then $\phi (T_\gamma S(p_0)) = \pi/2$ and $\phi (T_\gamma S(0,x_0,y_0) ) = 0$.  Notice that $(x,y,\theta, \phi)$ provide a coordinate chart for $\mathbb{P}(\mathcal{H})$.

Thus given a geodesic $\gamma = \gamma_{(x_0,y_0,\theta_0)}$, the curve $\tilde{\gamma}(t)$ on $\mathbb{P}(\mathcal{H}_\gamma)$ defined by the tangent spaces to the skies at $\gamma(t)$, is given by:
\begin{equation}\label{gammatildet}
 \tilde{\gamma}(t) = T_\gamma S(\gamma(t)) = [t:t_0-t]_\gamma \in \mathbb{P}(\mathcal{H}_\gamma) \, ,
\end{equation}
i.e., $\phi (\tilde{\gamma}(t)) = \arctan (t/t_0 -t)$.
Thus the blow up space $\widetilde{\mathcal{N}}$ consists of all lines $\tilde{\gamma}(t)$, $t\in (a,b)$, in $\mathbb{P}(\mathcal{H}_\gamma)$.  Thus we may write:
$$
\widetilde{\mathcal{N}} = \{ (\gamma, [t:t_0-t]_\gamma) :  \gamma\in \mathcal{N}, a< t < b \} \subset \mathbb{P}(\mathcal{H}) \, . 
$$ 
To calculate the distribution $\oplus$ (similarly for $\ominus$) we consider the limit $t \to b^-$ of (\ref{gammatildet}), that is:
$$
\oplus_\gamma = \lim_{t \to b^-}  \tilde{\gamma}(t) = [b:t_0-b]_\gamma \in \mathbb{P}(\mathcal{H}_\gamma)\, ,
$$
that is $\phi(\oplus_\gamma) = \arctan(b/t_0-b)$.
Thus the line $\oplus_\gamma$ is the line spanned by the tangent vector $b\sin \theta_0\left. \frac{\partial}{\partial x}\right|_\gamma - b \cos \theta_0 \left.\frac{\partial}{\partial y}\right|_\gamma + \left.\frac{\partial}{\partial \theta}\right|_\gamma$ and the boundary $\partial^+\widetilde{\mathcal{N}}$ is the graph of the map $\mathcal{N} \to \mathbb{P}(\mathcal{H})$, given by $\gamma \mapsto (\gamma, [b:t_0-b]_\gamma)$ or, in local coordinates $(x,y,\theta) \mapsto (x,y,\theta, \phi =  \arctan(b/t_0-b))$ ($t_0$ if fixed for all $\gamma$).

We obtain the sky at infinity of $\gamma$ as the orbit in $\mathcal{N}$ of the distribution $\oplus$ passing through $\gamma$, that is, we look for curves $c(s) = (x(s), y(s), \theta(s))$ such that:
\begin{equation}\label{IVPab}
\frac{dx}{ds}  = b\sin \theta \, , \quad \frac{dy}{ds}  = -b\cos \theta \, , \quad \frac{d\theta}{ds}  = 1  \, ,
\end{equation}
with initial value $c(0) = (x_0,y_0, \theta_0)$, that is,
\[
c(s) = (x_0 + b\cos\theta_0- b\cos(s + \theta_0), y_0 + b\sin\theta_0- b\sin(s + \theta_0), s + \theta_0)  .
\]

Then the light ray corresponding to $c(s)$ is given by (see Eq. (\ref{gammaxyphi})):
\begin{align*}
\gamma_s(t) &= (t, x(s) + t \cos \theta(s), y(s) + t\sin \theta(s)) = \\
&=(t, x_0 + b\cos\theta_0 + (t-b) \cos(s + \theta_0),  y_0 + b\sin\theta_0 +(t-b) \sin(s + \theta_0)) \, .
\end{align*}

Thus the orbit $X_\gamma$ passing through $\gamma$ consists of the family of light rays $\gamma_s$ and all of them satisfy that $\lim_{t\to b^-} \gamma_s(t) = (b,x_0+ b\cos\theta_0,y_0+ b\sin\theta_0)$ for all $s$, that is, they are exactly the sky (in $\mathbb{M}^3$) of the event $(b,x_0+ b\cos\theta_0,y_0+ b\sin\theta_0)$ at the boundary of $\mathbb{M}^3(a,b)$.   In other words, the future L-boundary of $\mathbb{M}^3(a,b)$ which is defined as the quotient of $\partial^+\widetilde{\mathcal{N}} \cong \mathcal{N}$ with respect to the integral curves of the distribution $\oplus$, is bidimensional and its leaves can be naturally identified with the points $(b,x_0,y_0)$.  Notice that the leaves of the distribution $\oplus$ (respec. $\ominus$) are compact (actually diffeomorphic to $\mathbb{S}^1$) and the distribution defined on $\overline{\widetilde{\mathcal{N}}}$ is regular.


\section{Coordinate charts in $\mathbb{P}\left(\mathcal{H}\right)$}\label{sec:charts}

We will construct coordinate charts in $\mathbb{P}\left(\mathcal{H}\right)$ from the natural atlas in $T\mathcal{N}$ (see \cite{Ba14}).   We will succinctly review the construction of such atlas adapted to the present situation.

Fix $\mathbf{g}\in \mathcal{C}$ an auxiliary metric. 
Since $M$ is assumed to be strongly causal, by \cite[Rem.~2.15]{Mi08}, there is a topological basis of globally hyperbolic, causally convex, normal  open neighbourhoods for any $q\in M$. 
Choose $U\subset M$ one of such neighbourhoods.
Since $U$ is globally hyperbolic, by \cite[Thm.~3.78]{Mi08}, there exists a smooth spacelike Cauchy surface $C\subset U$. Then, any light ray passing through $U$ intersects $C$ in a singleton $\{ q \} = \gamma\cap C$.

Let $\left\{E_{1},E_{2} ,E_{3}\right\} $ be an orthonormal frame in $U$ such that $E_{1} $ is future--oriented timelike and $\left\{E_{2},E_{3}\right\}$ are spacelike and such that $E_{2}\left(p\right)$ and $E_{3}\left(p\right)$ are tangent to $C$ for all $p\in C\subset U$. 
Consider $c\equiv\left(x,y\right)$ a coordinate system for $C$. 
Given $\theta \in \left(-\pi,\pi\right]$, the light ray passing through $p\in C$ can be defined by the null geodesic $\gamma$ such that $\gamma\left(0\right) = p$ and $\gamma'\left(0\right) = E_1\left(p\right)+\cos \theta E_2\left(p\right) + \sin \theta E_3\left(p\right)$. 
Then, denoting by $\mathcal{N}_{U}\subset \mathcal{N}$ the open set of all light rays intersecting $U$, that is 
\[
\mathcal{N}_{U}=\{ \gamma\in \mathcal{N}: \gamma\cap U\neq \varnothing \}
\]
we can define local coordinates in $\mathcal{N}_{U}$ by
\begin{equation}\label{carta-N}
\psi:\mathcal{N}_{U}\rightarrow \mathbb{R}^{3};\hspace{1cm}\psi=\left(x,y,\theta\right)
\end{equation}
Notice that if $M$ is globally hyperbolic, as in the example of the Minkowski block $\mathbb{M}^3(a,b)$ discussed in Sect. \ref{sec:block}, there exists a globally defined Cauchy surface $C$ and the space of light rays $\mathcal{N}$ can be identified with $C \times \mathbb{S}^1$.  Then any local chart $(x,y)$ in $C$ will define a local chart $(x,y,\theta)$ of $\mathcal{N}$.

Recall (see Eq. (\ref{Jmodgamma})) that a tangent vector $\langle J\rangle\in T_{\gamma}\mathcal{N}_{U}$ can be identified with a class of Jacobi fields along $\gamma$ modulo $\gamma'$. 
This Jacobi field can be uniquely determined by its initial vectors $J(0)$ and $J'(0)$ at $p = \gamma\left(0\right)$. Taking the equivalence $\mathrm{mod}\gamma'$ into account we may choose representatives of these initial vectors in the subspace $T_p C$, that is we may choose the initial vectors:
\begin{equation}\label{initial-vectors-J}
\left\{
\begin{tabular}{l}
$J\left(0\right)=w^2 E_2\left(p\right) +w^3 E_3\left(p\right)$ \\
$J'\left(0\right)=v^2 E_2\left(p\right) +v^3 E_3\left(p\right)$
\end{tabular}
\right.
\end{equation} 
and since $\mathbf{g}\left(\gamma', J'\right)=0$, then $v^2 \cos \theta + v^3 \sin \theta=0$ and therefore $v^2, v^3$ are determined one from another.
Without any lack of generality we consider $\cos\theta \neq 0$ then we choose $v = v^3$, $w^2$ and $w^3$ as local coordinates in $T\mathcal{N}_{U}$.
So, a chart in $T\mathcal{N}_{U}$ can be defined by
\[
\overline{\psi}:T\mathcal{N}_{U}\rightarrow \mathbb{R}^{6};\hspace{1cm}\overline{\psi}=\left(x,y,\theta; w^2,w^3,v\right)  .
\]

Now, we will define coordinates in $\mathcal{H}_{U}=\mathcal{H}\cap T\mathcal{N}_{U} = \bigcup_{\gamma\in \mathcal{N}_{U}}\mathcal{H}_{\gamma}$ from the chart $\overline{\psi}$. 
Any $\langle J\rangle\in \mathcal{H}_{\gamma}$ verifies $\mathbf{g}\left(\gamma', J\right)=0$ and therefore $w^2 \cos \theta + w^3 \sin \theta=0$. Then, if $\cos \theta \neq 0$, we have $w^2=-\tan \theta \cdot w^3$ and $w=w^3$ can be considered as a coordinate for $\mathcal{H}_{U}$, then 
\begin{equation}\label{carta-H}
\varphi:\mathcal{H}_{U}\rightarrow \mathbb{R}^{5} \, ;\qquad \varphi=\left(x,y,\theta,w,v\right)
\end{equation}
is a coordinate chart. 
Trivially, the initial vectors $J\left(0\right)$ and $J'\left(0\right)$ are related by 
\begin{equation}\label{relation-J-Jprima}
v\left(\langle J\rangle\right)J\left(0\right) = w\left(\langle J\rangle\right)J'\left(0\right)   .
\end{equation}
Moreover, observe that both $J\left(0\right)$ and $J'\left(0\right)$ have been chosen in the vector subspace 
\begin{equation}\label{subspace-init-vect}
T_p C \cap \left\{\gamma'\left(0\right)\right\}^{\perp}\subset T_p M
\end{equation}
where $\left\{\gamma'\left(0\right)\right\}^{\perp} = \left\{u\in T_p M: \mathbf{g}\left( \gamma'\left(0\right), u \right) = 0 \right\}$ is the subspace orthogonal to $\gamma'\left(0\right)$.

It is also easy to build coordinates in $\mathbb{P}\left(\mathcal{H}\right)$ if we adapt them from the chart $\varphi$. 
If we consider $\langle J\rangle\in \mathcal{H}_{\gamma}$ and $\langle \overline{J}\rangle=\langle \lambda J\rangle$ for some $\lambda\in \mathbb{R}$, then we trivially have
\begin{equation}\label{coordinates-H-PH}
\left\{ 
\begin{array}{l}
w\left(\langle \overline{J}\rangle\right) = \lambda w\left(\langle J\rangle\right)  \\
v\left(\langle \overline{J}\rangle\right) = \lambda v\left(\langle J\rangle\right) 
\end{array}
\right.
\end{equation}
then the homogeneous coordinate $\phi=\left[w:v\right]$ (or equivalently the polar coordinate $\phi=\arctan (w/v)$, see Fig. \ref{M3ab}(b) again) verifies 
\[
\phi\left(\langle \overline{J}\rangle\right)=\left[w\left(\langle \overline{J}\rangle\right):v\left(\langle \overline{J}\rangle\right)\right]=
\left[w\left(\langle J\rangle\right):v\left(\langle J\rangle\right)\right]=\phi\left(\langle J\rangle\right)
\]
and it determines the line $\left[J\right]=\mathrm{span}\left\{\langle J\rangle\right\}\in \mathbb{P}\left(\mathcal{H}_{\gamma}\right)$.
Therefore, a coordinate chart in $\mathbb{P}\left(\mathcal{H}\right)$ can be obtained by 
\begin{equation}\label{carta-PH}
\widetilde{\varphi}:\mathbb{P}\left(\mathcal{H}_{U}\right)\rightarrow \mathbb{R}^{4};\hspace{1cm}\widetilde{\varphi}=\left(x,y,\theta,\phi\right)  .
\end{equation}


\section{A projective parameter for light rays}\label{sec:proj-param}

Under the stated hypotheses, in this section we will show the existence of a maximal parameter $\mathbf{t}\in\left(-1,1\right)$ such that the map defined by 
\[
\begin{tabular}{rcl}
$\mathcal{N}_{U}\times \left(-1,1\right)$ & $\rightarrow$ & $M$ \\
$\left(\gamma,\mathbf{t}\right)$ & $\mapsto$ & $\gamma\left(\mathbf{t}\right)$
\end{tabular}
\]
is differentiable. 
The properties of this particular parameter will permit us to extend the conformal manifold $M$ in such a way that the extension of any light ray will be transversal to the boundary $\partial M$.  

We need some previous Lemmas.

\begin{lemma}\label{Lemma-diffeo-sigma}
Let $\pi^{\mathbb{PN}}_{M}:\mathbb{PN}\rightarrow M$ be the canonical projection. 
Then the map 
\begin{equation}\label{difeo-sigma}
\begin{tabular}{rrcl}
$\sigma:$ & $\mathbb{PN}$ & $\rightarrow$ & $\mathbb{P}\left(\mathcal{H}\right)$ \\
& $\left[u\right]$ & $\mapsto$ & $T_{\gamma_{\left[u\right]}}S\left(\pi^{\mathbb{PN}}_{M}\left(\left[u\right]\right)\right)$
\end{tabular}
\end{equation}
is differentiable.
\end{lemma}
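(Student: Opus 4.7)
The plan is to verify the differentiability of $\sigma$ by expressing it in the local charts constructed in Section \ref{sec:charts} and checking that the resulting coordinate expression is smooth. Fix $[u_0]\in\mathbb{PN}$ and set $p_0=\pi^{\mathbb{PN}}_M([u_0])$. Choose a globally hyperbolic, causally convex, normal neighbourhood $U\ni p_0$ with a spacelike Cauchy surface $C\subset U$, an orthonormal frame $\{E_1,E_2,E_3\}$, and the induced charts $\psi=(x,y,\theta)$ on $\mathcal{N}_U$ and $\widetilde\varphi=(x,y,\theta,\phi)$ on $\mathbb{P}(\mathcal{H}_U)$. Equip $\mathbb{PN}(U)$ with the natural bundle chart $(x^p,y^p,t^p,\theta^p)$, where $(x^p,y^p,t^p)$ are coordinates on $U$ adapted to $C=\{t^p=0\}$ and $\theta^p$ parametrises the fibre $\mathbb{PN}_{(x^p,y^p,t^p)}$ against the frame. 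It suffices to show that $\widetilde\varphi\circ\sigma$ is smooth in these coordinates.

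The first three components of $\widetilde\varphi\circ\sigma$ equal $\psi(\gamma_{[u]})$: starting from $(x^p,y^p,t^p)$ with initial null direction $\theta^p$, integrate the geodesic ODE backwards until it meets $C$; the crossing point and direction recorded against the frame give $(x,y,\theta)$. Smooth dependence of the geodesic flow on initial data, combined with the transversality of null geodesics to the spacelike hypersurface $C$, makes this map smooth. For the fourth coordinate, by (\ref{tangent_sky-1}) $T_\gamma S(p_0)$ consists of classes $\langle J\rangle$ of Jacobi fields along $\gamma=\gamma_{[u]}$ with $J(s_0)\equiv 0\pmod{\gamma'(s_0)}$, where $\gamma(s_0)=p_0$. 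Parametrise $J$ by its initial data (\ref{initial-vectors-J}) at $\gamma\cap C$ in the coordinates $(w,v)=(w^3,v^3)$ of $\mathcal{H}_\gamma$ from (\ref{carta-H}). The Jacobi equation (\ref{eq-Jacobi-fields}) is a linear ODE whose fundamental matrix depends smoothly on both the evolution parameter and on $(x,y,\theta)$, so $J(s_0)\pmod{\gamma'(s_0)}$ reduces to a single linear form $a(s_0;x,y,\theta)\,w+b(s_0;x,y,\theta)\,v$ with smooth coefficients; the other potential scalar condition is automatically satisfied by conservation of $\mathbf{g}(J,\gamma')$ along $\gamma$. Because $s_0$ is itself a smooth function of $(x^p,y^p,t^p,\theta^p)$ (again from the geodesic flow), the projective class $\phi=[w:v]$ is a smooth function of $[u]$ wherever $(a,b)\neq(0,0)$.

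The point requiring care is that $(a,b)$ never vanishes simultaneously, so that the kernel of the above linear form is a $1$-dimensional line at every $[u]$ and $[w:v]$ defines a bona fide smooth section of the projective bundle. This is immediate from the geometry: $S(p)\cong\mathbb{S}^{m-2}$ is a $1$-dimensional sphere in the $3$-dimensional setting, hence $T_\gamma S(p)$ is a $1$-dimensional subspace of the $2$-dimensional contact plane $\mathcal{H}_\gamma$ for every $[u]$, forcing $(a,b)\neq(0,0)$ everywhere. Assembling the four smooth components gives smoothness of $\widetilde\varphi\circ\sigma$ on the chart of $\mathbb{PN}(U)$; since $[u_0]$ was arbitrary, $\sigma$ is differentiable. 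I expect the main bookkeeping obstacle to be writing the Jacobi equation smoothly in the chosen frame $\{E_1,E_2,E_3\}$ and carrying out the projection modulo $\gamma'(s_0)$ explicitly, so that the coefficients $a,b$ are manifestly smooth in $(s_0,x,y,\theta)$.
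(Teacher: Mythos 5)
Your proof is correct, but it reaches the conclusion by a different mechanism than the paper. The paper's proof also works over a globally hyperbolic $U$ with Cauchy surface $C$ and the frame $\{E_1,E_2,E_3\}$, but instead of characterising $\sigma([u])$ implicitly, it constructs an explicit smooth non-vanishing lift $u\mapsto\langle J_u\rangle\in\mathcal{H}$ with $[J_u]=\sigma(u)$: one rotates the spatial part of $u$ to get a curve $W(u,s)$ of null directions, pulls these back to $\Omega(C)$, builds the sky-sweeping geodesic variation $f(u,s,\tau)=\exp_{\,\cdot}\left(\tau\,\overline{W}(u,s)\right)$, and takes $J_u=\partial f/\partial s|_{s=0}$; smoothness of $\sigma$ then follows by projectivising a manifestly smooth, nowhere-zero section, with no kernel or non-degeneracy argument needed. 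You instead compute the fourth coordinate $\phi=[w:v]$ as the kernel of a linear form $a\,w+b\,v$ whose coefficients come from the fundamental matrix of the Jacobi equation (\ref{eq-Jacobi-fields}) in the parallel frame, evaluated at the smoothly varying parameter $s_0$, and you must then check $(a,b)\neq(0,0)$, which you correctly deduce from $\dim T_\gamma S(p)=m-2=1$; your observation that the second scalar condition in $J(s_0)\equiv 0\ (\mathrm{mod}\ \gamma')$ is absorbed by the conservation of $\mathbf{g}(J,\gamma')$ on $\mathcal{H}_\gamma$ is exactly right. What is notable is that your route is essentially the computation the paper itself performs in the \emph{next} lemma (Lemma \ref{Lemma-gamma-tilde-regular}), where the same linear form appears as $(B_1,B_2)$; so your argument trades the paper's cleaner global-on-$U$ construction of a section for a coordinate computation that, as a by-product, sets up the machinery needed for the regularity statement that follows. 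Both approaches ultimately rest on smooth dependence of geodesics and Jacobi fields on initial data, and the minor chart restriction $\cos\theta\neq 0$ in your parametrisation of $\mathcal{H}_\gamma$ is handled by covering with a second chart, as in Section \ref{sec:charts}.
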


\begin{proof}
Since $M$ is light non--conjugate, by \cite[Lem.~2.5]{Ba17}, then $\sigma$ is injective.

Now, let us show that $\sigma$ is differentiable. 
Fix some auxiliary metric $\mathbf{g}\in\mathcal{C}$ and consider a globally hyperbolic open set $U\subset M$ with a Cauchy surface $C\subset U$ . 
We can assume the existence of an orthogonal frame $\{ E_1, E_2, E_3 \}$ such that $E_2, E_3$ are spacelike and $E_1$ is timelike with respect to the metric $\mathbf{g}$. 
Also assume that $E_2\left(c\right),E_3\left(c\right)\in T_c C$ for all $c\in C$.

Let us define, as in (\ref{PNW}):
\[
\begin{tabular}{c}
$\mathbb{PN}\left(U\right)=\{ \left[u\right]\in \mathbb{PN}: \pi^{\mathbb{PN}}_{M}\left(\left[u\right]\right)\in U \}$ \\
$\mathbb{N}^{+}\left(U\right)=\{ u\in \mathbb{N}^{+}: \pi^{\mathbb{N}}_{M}\left(u\right)\in U \}$ 
\end{tabular}
\]
where $\pi^{\mathbb{PN}}_{M}$ and $\pi^{\mathbb{N}}_{M}$ are the corresponding canonical projection on $M$.
Clearly, we can identify diffeomorphically $\mathbb{PN}\left(U\right)$ with $\Omega\left(U\right)=\{ u\in \mathbb{N}^{+}\left(U\right): \mathbf{g}\left(u,E_1\right)=-1 \}$ and, with a slight abuse of notation, we will prove that $\sigma:\Omega\left(U\right)\rightarrow\mathbb{P}\left(\mathcal{H}\right)$ given by $\sigma\left(u\right)=T_{\gamma_{u}}S\left(\pi^{\mathbb{N}}_{M}\left(u\right)\right)$ is differentiable.

Define the angle of the projection of $u\in \Omega\left(U\right)$ on $\mathrm{span}\{E_2,E_3\}$ by 
\[
\theta\left(u\right)=\arctan \left(\frac{\mathbf{g}\left(u,E_3\right)}{\mathbf{g}\left(u,E_2\right)}\right)\in \mathbb{R}
\]
and the curve of vectors at $\pi^{\mathbb{N}}_{M}\left(u\right)\in U$ given by
\[
W\left(u,s\right)=E_1\left(\pi^{\mathbb{N}}_{M}\left(u\right)\right) + \cos\left( \theta\left(u\right) + s \right)E_2\left(\pi^{\mathbb{N}}_{M}\left(u\right)\right) + \sin\left( \theta\left(u\right) + s \right)E_3\left(\pi^{\mathbb{N}}_{M}\left(u\right)\right) \in T_{\pi^{\mathbb{N}}_{M}\left(u\right)}M   .
\]

Assuming the notation of diagram (\ref{diagram-charts}), now we can define
\[
\overline{W}\left(u,s\right)=\xi^{-1}\circ \boldsymbol{\upgamma}\left(W\left(u,s\right)\right)\in \Omega\left(C\right)
\]
where $\xi=\left.\boldsymbol{\upgamma}\right|_{\Omega\left(C\right)}$, and then we can build
\[
f\left(u,s,\tau\right)=\mathrm{exp}_{\pi^{\mathbb{N}}_{M}\circ \overline{W}\left(u,s\right)}\left(\tau\cdot \overline{W}\left(u,s\right) \right) \in M   .
\]
Notice that for every fixed $u\in\Omega\left(U\right)$ the map $f_{u}\left(s,\tau\right)=f\left(u,s,\tau\right)$ is a lightlike geodesic variation with initial values at $C$ running the sky $S\left(\pi^{\mathbb{N}}_{M}\left(u\right)\right)$ and then its Jacobi field along $\gamma_u$ is 
\[
J_u\left(\tau\right)=\frac{\partial f}{\partial s}\left(u,0,\tau\right)
\]
that, by construction, it satisfies 
\begin{equation}\label{prop-sigma-1}
\langle J_u\rangle \in T_{\gamma_u}S\left(\pi^{\mathbb{N}}_{M}\left(u\right) \right) = \sigma\left(u\right)
\end{equation}
and its initial vectors for $\tau =0$ are given by 
\[
\left\{ 
\begin{tabular}{l}
$J_u\left(0\right)=\frac{\partial f}{\partial s}\left(u,0,0\right)$ \\
$J'_u\left(0\right)=\frac{D}{d \tau}\frac{\partial f}{\partial s}\left(u,0,0\right)$
\end{tabular}
\right.
\]
depending differentially on $u\in\Omega\left(U\right)$, so the map 
\begin{equation}\label{map-sigma-H}
\begin{tabular}{rcl}
$\Omega\left(C\right)$ & $\rightarrow$ & $\mathcal{H}$ \\
$u$ & $\mapsto$ & $\langle J_u\rangle$
\end{tabular}
\end{equation}
is differentiable.
Since $0 \neq \langle J_u\rangle \in \mathcal{H}$ for all $u$ then we can pass to the quotient obtaining that the map 
\[
u\mapsto \left[ J_u\right]= T_{\gamma_u}S\left(\pi^{\mathbb{N}}_{M}\left(u\right) \right) \in \mathbb{P}\left(\mathcal{H}\right)
\]
is differentiable. 
Then, in virtue of (\ref{prop-sigma-1}), $\sigma\left(u\right)=\left[ J_u\right]= T_{\gamma_u}S\left(\pi^{\mathbb{N}}_{M}\left(u\right) \right)$ is differentiable.   
\end{proof}

Recall that the distributions $\oplus$ and $\ominus$ defined in section~\ref{sec:Preliminaries} assign to each  $\gamma\in \mathcal{N}$, if they exist, the endpoints of the curve 
\begin{equation}\label{def-gammatilde}
\widetilde{\gamma}\left(\tau\right) = T_{\gamma}S\left(\gamma\left(\tau\right)\right) \in \mathbb{P}\left(\mathcal{H}_{\gamma}\right)
\end{equation}
where $\gamma=\gamma\left(\tau\right)$ is a parametrization of the light ray $\gamma$.  
Observe that it is possible to define the curve $\widetilde{\gamma}$ by 
\[
\widetilde{\gamma}\left(\tau\right)=\sigma\left( \left[\gamma'\left(\tau\right) \right] \right)
\]
that, because of Lemma \ref{Lemma-diffeo-sigma}, it is a differentiable curve. 

\begin{lemma}\label{Lemma-gamma-tilde-regular}
Given a maximal null geodesic $\gamma:I\rightarrow M$, then the curve $\widetilde{\gamma}\left(\tau\right)= T_{\gamma}S\left(  \gamma\left(\tau\right)\right)$ is regular for all $\tau\in I$.
\end{lemma}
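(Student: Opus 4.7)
The idea is to reduce the Jacobi equation along $\gamma$ to a scalar second order linear ODE in the (one-dimensional) screen direction and then to exploit the constancy of the Wronskian of a pair of linearly independent solutions. Since $\dim M=3$ and $\gamma$ is null, the orthogonal complement $\gamma'(\tau)^{\perp}$ is $2$-dimensional and the screen quotient $\gamma'(\tau)^{\perp}/\mathrm{span}(\gamma'(\tau))$ has dimension $1$. I would fix an auxiliary metric $\mathbf{g}\in\mathcal{C}$ and choose a parallel spacelike unit vector field $E(\tau)$ along $\gamma$ with $\mathbf{g}(E,\gamma')=0$, which exists and is globally defined along $\gamma$ by ODE theory. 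Then any $\langle J\rangle\in\mathcal{H}_\gamma$ has a representative with $\mathbf{g}(J,\gamma')\equiv 0$, and modulo $\gamma'$ we can write uniquely
\[
J(\tau)\equiv g(\tau)\,E(\tau)\pmod{\gamma'(\tau)}
\]
for a smooth scalar function $g$.

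Projecting the Jacobi equation $J''=R(J,\gamma')\gamma'$ onto $E$, and using that $E$ is parallel (so $J''\equiv g''(\tau)E(\tau)\pmod{\gamma'}$), the Jacobi equation reduces to the scalar equation
\[
g''(\tau)=\rho(\tau)\,g(\tau),\qquad \rho(\tau)=\mathbf{g}\bigl(R(E,\gamma')\gamma',E\bigr).
\]
Fix two linearly independent solutions $g_1,g_2$ and let $J_1,J_2$ be the corresponding Jacobi fields with $J_i(\tau)\equiv g_i(\tau)E(\tau)\pmod{\gamma'}$. Then $\{\langle J_1\rangle,\langle J_2\rangle\}$ is a basis of the $2$-dimensional contact space $\mathcal{H}_\gamma$, and the line $T_\gamma S(\gamma(\tau))$ consists of those $\langle a J_1+b J_2\rangle$ with $a g_1(\tau)+b g_2(\tau)=0$. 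In the associated homogeneous coordinates on $\mathbb{P}(\mathcal{H}_\gamma)$,
\[
\widetilde{\gamma}(\tau)=\bigl[g_2(\tau):-g_1(\tau)\bigr]\in\mathbb{P}(\mathcal{H}_\gamma).
\]

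The curve $\widetilde{\gamma}$ is already known to be differentiable by Lemma~\ref{Lemma-diffeo-sigma}, so regularity at $\tau\in I$ amounts to the non-proportionality of $(g_2(\tau),-g_1(\tau))$ and $(g_2'(\tau),-g_1'(\tau))$ in $\mathbb{R}^{2}$; this is precisely the non-vanishing of the Wronskian $W(g_1,g_2)(\tau)=g_1(\tau)g_2'(\tau)-g_1'(\tau)g_2(\tau)$. Since the ODE $g''-\rho g=0$ has no first-order term, Abel's identity gives $W'(g_1,g_2)\equiv 0$, so $W(g_1,g_2)$ is a nonzero constant (nonzero by linear independence of $g_1,g_2$). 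Therefore $\widetilde{\gamma}'(\tau)\neq 0$ for every $\tau\in I$. The construction does not depend on the choice of $\mathbf{g}\in\mathcal{C}$, because a conformal change rescales the null parameter and modifies the Jacobi fields only modulo $\gamma'$ (cf.\ (\ref{Jmodgamma})). I do not anticipate a real obstacle: the only point to be careful with is verifying that the screen frame $E$ can be parallel-transported along the whole maximal interval $I$, and that the homogeneous-coordinate identification of $\widetilde{\gamma}(\tau)$ indeed descends from the basis $\{\langle J_1\rangle,\langle J_2\rangle\}$ of $\mathcal{H}_\gamma$.
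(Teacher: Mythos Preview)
Your argument is correct and is a genuinely different route from the paper's. The paper fixes a point $p=\gamma(0)$, constructs a one-parameter family of Jacobi fields $J_t$ with $J_t(t)\equiv 0\pmod{\gamma'}$, tracks their initial data $(u_1(t),u_2(t))$ at $p$ through the fundamental matrix of the Jacobi system, and then computes explicitly that the affine coordinate $-B_2(t)/B_1(t)$ of $\widetilde{\gamma}(t)$ has derivative $-1$ at $t=0$. Your approach instead exploits that in dimension $3$ the screen quotient $\gamma'^{\perp}/\mathrm{span}(\gamma')$ is one-dimensional, so the Jacobi equation projects to a scalar equation $g''=\rho g$ with no first-order term; Abel's identity then gives the non-vanishing Wronskian globally in one stroke. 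What you gain is a cleaner, coordinate-free statement that works uniformly along the whole interval $I$ and makes the mechanism transparent (regularity of $\widetilde{\gamma}$ is precisely constancy of a nonzero Wronskian). What the paper's computation buys is that it is phrased directly in the charts $\varphi$ and $\widetilde{\varphi}$ of Section~\ref{sec:charts} and in terms of the map~(\ref{map-sigma-H}) from Lemma~\ref{Lemma-diffeo-sigma}, so it dovetails with the explicit coordinate machinery used immediately afterwards in Proposition~\ref{prop-difeo-sigma} and Section~\ref{sec:proj-param}. Both arguments ultimately encode the same linear-algebraic fact, but yours isolates it more sharply.
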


\begin{proof}
Let us assume that $\gamma:I\rightarrow M$ is a null geodesic related to the metric $\mathbf{g}\in\mathcal{C}$.
Moreover, without any lack of generalization, we assume that $0\in I$, then it is sufficient to prove that $\widetilde{\gamma}'\left(0\right)\neq 0$. 
If $p=\gamma\left(0\right)\in M$, we can consider an orthonormal frame $\{ E_1\left(p\right), E_2\left(p\right), E_3\left(p\right) \}\subset T_p M$ taken from the frame used to define the charts of section~\ref{sec:charts} such that $p\in C\subset U$.
Recall that $E_1$ is timelike and $E_2, E_3$ are spacelike then we can write $\gamma'\left(0\right) = E_1\left(p\right)+\cos \theta ~E_2\left(p\right) + \sin \theta ~E_3\left(p\right)$. 
Let $\{ \mathbf{E}_1, \mathbf{E}_2, \mathbf{E}_3 \}$ be the basis of parallel vector fields, transported from  $\{ E_1\left(p\right), E_2\left(p\right), E_3\left(p\right) \}$, along $\gamma$. 
Then we have that 
\[
\gamma'\left(\tau\right) = \mathbf{E}_1\left(\gamma\left(\tau\right)\right)+\cos \theta ~\mathbf{E}_2\left(\gamma\left(\tau\right)\right) + \sin \theta ~\mathbf{E}_3\left(\gamma\left(\tau\right)\right)  . 
\]

Let us define $J_t$ as a Jacobi field along $\gamma$ such that $J_{t}\left(t\right)= 0 ~\left(\mathrm{mod}~\gamma'\left(t\right)\right)$, then $\langle J_t\rangle \in \widetilde{\gamma}\left(t\right)$ and so, by equations (\ref{contact}) and (\ref{relation-J-Jprima}), its initial vectors can be chosen as 
\begin{equation}\label{eq-gammatilde-initial}
\left\{ 
\begin{tabular}{l}
$J_t\left(0\right)= u_1\left(t\right)\left(-\sin \theta ~E_2\left(p\right) + \cos \theta ~E_3\left(p\right)\right)$ \\
$J'_t\left(0\right)= u_2\left(t\right)\left(-\sin \theta ~E_2\left(p\right) + \cos \theta ~E_3\left(p\right)\right)$
\end{tabular}
\right.
\end{equation}
Observe that the Jacobi field $J_t$ is just the image of $\gamma'\left(t\right)$ under the differentiable map (\ref{map-sigma-H}), that is $J_t \equiv J_{\gamma'\left(t\right)}$.
This implies that the functions $u_1\left(t\right)$ and $u_2\left(t\right)$ are differentiable.
Moreover, since $J_t \neq 0$ and $J_0\left(0\right)=0$, then $J'_0\left(0\right)\neq 0$ and so $u_1\left(0\right)=0$ and $u_2\left(0\right)\neq 0$.

The Jacobi field $J_t$ can be written by
\begin{equation}\label{J_t-gammatilde}
J_{t}\left(\tau\right) = \sum_{i=1}^{3}\eta_{i}\left(t,\tau\right) \mathbf{E}_i\left(\gamma\left(\tau\right)\right)  .
\end{equation}
If we substitute the expression (\ref{J_t-gammatilde}) in the differential equation (\ref{eq-Jacobi-fields}) then we have
\[
\sum_{i=1}^{3}\frac{d^2 \eta_{i}}{d\tau^2} \mathbf{E}_i = \sum_{i=1}^{3}\eta_{i}  R\left(\mathbf{E}_i, \gamma'\right)\gamma'
\]
and hence we obtain a system of three linear ordinary differential equations given by 
\[
\frac{d^2 \eta_{j}}{d\tau^2} = \sum_{i=1}^{3}\eta_{i} \mathbf{g}\left(\mathbf{E}_j , \mathbf{E}_j  \right) \mathbf{g}\left(R\left(\mathbf{E}_i, \gamma'\right)\gamma' , \mathbf{E}_j\right) \qquad  \text{ for } j=1,2,3  .
\]
whose solutions depend smooth and linearly on the initial values, then 
\[
\eta_{j}\left(t,\tau\right)=\sum_{i=1}^{2} a_{ji}\left(\tau\right)u_i\left(t\right) \qquad  \text{ for } j=1,2,3 .
\]
where $A\left(\tau\right)=\left(a_{ji}\left(\tau\right)\right)$ is a differentiable matrix.
Therefore
\begin{equation}\label{eq-gammatilde-tau}
\left\{ 
\begin{array}{l}
J_t\left(\tau\right)= \sum_{j=1}^{3}\sum_{i=1}^{2} a_{ji}\left(\tau\right)u_i\left(t\right)\mathbf{E}_j\left(\gamma\left(\tau\right)\right) \\
J'_t\left(\tau\right)= \sum_{j=1}^{3}\sum_{i=1}^{2} a'_{ji}\left(\tau\right)u_i\left(t\right)\mathbf{E}_j\left(\gamma\left(\tau\right)\right)
\end{array}
\right.  .
\end{equation}
If we evaluate (\ref{eq-gammatilde-tau}) at $\tau=0$ and compare it with (\ref{eq-gammatilde-initial}), we obtain the values 
\begin{equation}\label{eq-system-A-0}
A\left(0\right)=\left( a_{ji}\left(0\right) \right) = \left(\begin{array}{cr} 0 & 0 \\ -\sin\theta & 0 \\ \cos\theta & 0 \end{array} \right) ; \qquad  A'\left(0\right)=\left( a'_{ji}\left(0\right) \right) = \left(\begin{array}{cr} 0 & 0 \\  0 & -\sin\theta  \\ 0 & \cos\theta  \end{array} \right)  .
\end{equation}

By the condition $J_{t}\left(t\right)= 0 ~\left(\mathrm{mod}~\gamma'\left(t\right)\right)$, we get the system
\begin{equation}\label{eq-system-a-t}
\left\{
\begin{array}{l} a_{11}\left(t\right)u_1\left(t\right)+a_{12}\left(t\right)u_2\left(t\right)=\lambda \\ a_{21}\left(t\right)u_1\left(t\right)+a_{22}\left(t\right)u_2\left(t\right)=\lambda \cos\theta\\ a_{31}\left(t\right)u_1\left(t\right)+a_{32}\left(t\right)u_2\left(t\right)=\lambda \sin\theta
\end{array}
\right.  
\end{equation}
and calling $\left(B_1,B_2\right)=\left(a_{21}\sin\theta - a_{31}\cos\theta, \quad a_{22}\sin\theta - a_{32}\cos\theta\right)$, from the second and third equation of the system~(\ref{eq-system-a-t}), we obtain
\[
B_1\left(t\right)u_1\left(t\right)+B_2\left(t\right)u_2\left(t\right)=0\, , \qquad  \mathrm{ for \, 	\, all \,\, } t\, , 
\]
and by the values in (\ref{eq-system-A-0}), we can have that $\left(B_1\left(0\right),B_2\left(0\right)\right)= \left(-1,0\right)$ and $\left(B'_1\left(0\right),B'_2\left(0\right)\right)= \left(0,-1\right)$. 
Since $u_2\left(0\right)\neq 0$ and  $B_1\left(0\right)\neq 0$, there exists $\epsilon>0$ such that $u_2\left(t\right)\neq 0$ and  $B_1\left(t\right)\neq 0$ for all $t\in \left(-\epsilon,\epsilon\right)$, so we have
\[
u_1\left(t\right)= -\frac{B_2\left(t\right)}{B_1\left(t\right)}u_2\left(t\right) \text{ for all } t\in \left(-\epsilon,\epsilon\right)
\] 
and the curve $t\mapsto \langle J_t\rangle\in \mathcal{H}_{\gamma}$ is written in the coordinates (\ref{carta-H}) by 
\[
\varphi\left(J_t\right)=\left(x_0,y_0, \theta, -\frac{B_2\left(t\right)}{B_1\left(t\right)}u_2\left(t\right), u_2\left(t\right)\right) 
\]
whence the coordinates (\ref{carta-PH}) of $\widetilde{\gamma}\left(t\right)$ for $t\in \left(-\epsilon,\epsilon\right)$ are 
\[
\widetilde{\varphi}\left(\widetilde{\gamma}\left(t\right)\right)=\widetilde{\varphi}\left(\left[J_t\right]\right)=\left(x_0,y_0, \theta, \left[ -\frac{B_2\left(t\right)}{B_1\left(t\right)}u_2\left(t\right): u_2\left(t\right)\right]\right)= \left(x_0,y_0, \theta, \left[ -\frac{B_2\left(t\right)}{B_1\left(t\right)}: 1\right]\right)
\]
and because 
\[
\left. \left( -\frac{B_2\left(t\right)}{B_1\left(t\right) }\right)\right|_{t = 0} '= \frac{B_2\left(0\right)B'_1\left(0\right)-B'_2\left(0\right)B_1\left(0\right)}{B^2_1\left(0\right)}=-1\neq 0 
\]
then $\widetilde{\gamma}'\left(0\right)\neq 0$ as we claimed.
\end{proof}


\begin{proposition}\label{prop-difeo-sigma}
The map $\sigma:\mathbb{PN}\rightarrow\mathbb{P}\left(\mathcal{H}\right)$ defined in Lemma \ref{Lemma-diffeo-sigma}
is a diffeomorphism onto its image.
\end{proposition}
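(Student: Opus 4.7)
The plan is to use the fact that $\sigma$ is already known to be injective and smooth (Lemma \ref{Lemma-diffeo-sigma}) and combine it with a dimension count and the regularity statement of Lemma \ref{Lemma-gamma-tilde-regular}. Since $\dim M = 3$, both $\mathbb{PN}$ and $\mathbb{P}(\mathcal{H})$ have dimension $4$: the sphere bundle $\mathbb{PN} \to M$ has $3+1$-dimensional total space, and $\mathbb{P}(\mathcal{H}) \to \mathcal{N}$ has $3+1$-dimensional total space since $\mathcal{N}$ is $3$-dimensional and each $\mathcal{H}_\gamma$ is a plane. Consequently, once $\sigma$ is shown to be a smooth injective immersion, it will automatically be a local diffeomorphism (by the inverse function theorem), and combined with the already established injectivity this gives a diffeomorphism onto its image, which is then open in $\mathbb{P}(\mathcal{H})$.

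The core step therefore is to verify that $d\sigma$ has full rank at every $[u]\in\mathbb{PN}$. First I would set up adapted charts: using the construction in Section \ref{sec:charts}, fix a globally hyperbolic normal neighbourhood $U\subset M$ with Cauchy surface $C$ and frame $\{E_1,E_2,E_3\}$, and equip $\mathcal{N}_U$ with the chart $\psi=(x,y,\theta)$ of (\ref{carta-N}) and $\mathbb{P}(\mathcal{H}_U)$ with the chart $\widetilde{\varphi}=(x,y,\theta,\phi)$ of (\ref{carta-PH}). On $\mathbb{PN}(U)$ I would use the natural chart $(x,y,\theta,\tau)$, where $(x,y,\theta)$ are the coordinates of the light ray $\gamma_{[u]}\in\mathcal{N}_U$ (the base point of the sky-fibration) and $\tau$ is the affine parameter along $\gamma_{[u]}$ of the foot point $\pi^{\mathbb{PN}}_M([u])\in\gamma_{[u]}$. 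In these coordinates, the definition $\sigma([u])=T_{\gamma_{[u]}}S(\pi^{\mathbb{PN}}_M([u]))$ together with (\ref{def-gammatilde}) reads
\[
\widetilde{\varphi}\circ\sigma(x,y,\theta,\tau)=\bigl(x,y,\theta,\phi(x,y,\theta,\tau)\bigr),
\]
where $\phi(x,y,\theta,\tau)$ is the polar coordinate of $\widetilde{\gamma}(\tau)\in\mathbb{P}(\mathcal{H}_\gamma)$ for $\gamma=\gamma_{(x,y,\theta)}$.

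The Jacobian of $\widetilde{\varphi}\circ\sigma\circ(\psi,\tau)^{-1}$ is therefore block triangular, with a $3\times 3$ identity block acting on the $(x,y,\theta)$ coordinates and the single entry $\partial\phi/\partial\tau$ in the bottom-right corner. Hence $d\sigma$ is injective at $[u]$ if and only if $\partial\phi/\partial\tau\neq 0$ at the corresponding parameter value, which is precisely the statement that the curve $\widetilde{\gamma}$ of (\ref{def-gammatilde}) is regular at $\tau$. This is exactly what Lemma \ref{Lemma-gamma-tilde-regular} provides: for any maximal null geodesic $\gamma$ and any parameter $\tau$, $\widetilde{\gamma}'(\tau)\neq 0$.

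I do not expect any serious obstacle. The only subtle point is the verification that $(x,y,\theta,\tau)$ really does define a smooth chart on $\mathbb{PN}(U)$ that is compatible with the way $\sigma$ is expressed, i.e.\ that the map $(x,y,\theta,\tau)\mapsto [\gamma'_{(x,y,\theta)}(\tau)]\in\mathbb{PN}$ is a diffeomorphism onto its image, which is standard since $\gamma'_{(x,y,\theta)}(\tau)$ is smooth in all arguments (using the exponential map as in the proof of Lemma \ref{Lemma-diffeo-sigma}) and $\tau\mapsto\gamma(\tau)$ is a regular parametrisation. With this in place, the conclusion follows: $\sigma$ is a smooth injective immersion between $4$-manifolds, hence a diffeomorphism onto its (open) image.
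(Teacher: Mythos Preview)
Your proposal is correct and follows essentially the same strategy as the paper: both arguments use Lemma~\ref{Lemma-diffeo-sigma} for smoothness and injectivity, split the tangent space of $\mathbb{PN}$ into a piece transversal to the light ray (your $(x,y,\theta)$-directions, the paper's $T_u\Omega(C)$) and a piece along the light ray (your $\tau$-direction, the paper's curve $c(t)=\gamma'(t)$), and then invoke Lemma~\ref{Lemma-gamma-tilde-regular} for the latter. Your explicit block-triangular Jacobian makes the linear independence of the two pieces immediate, whereas the paper leaves this implicit in the observation that $\widetilde{\gamma}'(0)$ is tangent to the fibre $\mathbb{P}(\mathcal{H}_\gamma)$ while $d\sigma_u(T_u\Omega(C))$ projects isomorphically onto $T_\gamma\mathcal{N}$; otherwise the arguments are the same.
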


\begin{proof}
Using the same notation as in Lemma \ref{Lemma-diffeo-sigma}, we will show the statement for the map $\sigma:\Omega\left(U\right) \rightarrow \mathbb{P}\left(\mathcal{H}\right)$ given by $\sigma\left(u\right)=T_{\gamma_{u}}S\left(\pi^{\mathbb{N}}_{M}\left(u\right)\right)$.

Fix some $u\in \Omega\left(U\right)$.
With no lack of generality we can assume that $p=\pi^{\mathbb{N}}_{M}\left(u\right)\in C$, because in other case, since the neighbourhood $U$ is globally hyperbolic it is possible to choose another Cauchy surface containing $p$.

By Lemma \ref{Lemma-diffeo-sigma}, $\sigma$ is a differentiable and injective map. 
If we consider the restriction of $\sigma$ to $\Omega\left(C\right)$, then $\pi^{\mathbb{P}\left(\mathcal{H}\right)}_{\mathcal{N}}\circ \left.\sigma\right|_{\Omega\left(C\right)}=\xi$  where $\xi=\left.\boldsymbol{\upgamma}\right|_{\Omega\left(C\right)}:\Omega\left(C\right)\rightarrow \mathcal{N}_{U}$ is the diffeomorphism of diagram~(\ref{diagram-charts}). 
So, consider the differential 
\[
\left(d\pi^{\mathbb{P}\left(\mathcal{H}\right)}_{\mathcal{N}}\right)_{\sigma\left(u\right)}\circ \left(d\left.\sigma\right|_{\Omega\left(C\right)}\right)_{u}=\left(d\xi\right)_{u}  .
\]
Notice that $\left(d\pi^{\mathbb{P}\left(\mathcal{H}\right)}_{\mathcal{N}}\right)_{\sigma\left(u\right)}$ is surjective of rank equal to 3 and $\left(d\xi\right)_{u}$ is an isomorphism of rank equal to 3, then the rank of $\left(d\left.\sigma\right|_{\Omega\left(C\right)}\right)_{u}$ must be 3.

Now, we will study $d\sigma_u$ for a vector in $T_u \Omega\left(U\right)$ transversal to $T_u \Omega\left(C\right)$.
Take the null geodesic $\gamma=\gamma\left(t\right)$ such that $\gamma'\left(0\right)=u$, then the curve $c\left(t\right)=\gamma'\left(t\right)\in \Omega\left(U\right)$ is regular and transversal to $\Omega\left(C\right)$ at $u$. 
Observe that 
\[
\sigma\left(c\left(t\right)\right)= T_{\gamma}S\left(\gamma\left(t\right)\right) = \widetilde{\gamma}\left(t\right)
\]
and by Lemma~\ref{Lemma-gamma-tilde-regular}, we have 
\[
d\sigma_u\left(c'\left(t\right)\right)=  \widetilde{\gamma}'\left(t\right) \neq 0
\] 
and this show that $d\sigma_u$ is an isomorphism.
Then, the Inverse function Theorem assures that $\sigma$ is a local diffeomorphism for any $u\in \mathbb{PN}$ and, due to its injectivity, then $\sigma$ is a diffeomorphism onto its image. 
  \end{proof}

Since $\dim \mathbb{PN} = 4$ and $\dim \mathbb{P}(\mathcal{H}) = 4$, then the previous Proposition has the following consequence.

\begin{corollary}\label{corol-Ntilde}
The map $\sigma$ induces a differentiable structure on $\widetilde{\mathcal{N}}=\mathrm{Im}\left(\sigma \right)$ such that $\widetilde{\mathcal{N}} \subset \mathbb{P}(\mathcal{H})$ is an open submanifold.
\end{corollary}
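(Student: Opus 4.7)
The plan is to derive this essentially as an immediate dimensional consequence of Proposition \ref{prop-difeo-sigma}. First I would carry out the two dimension counts: $\mathbb{PN}$ is the projectivised future null cone bundle over $M$, with fibre $\mathbb{P}(\mathbb{N}^{+}_p) \cong \mathbb{S}^{1}$ since $\dim M = 3$, so $\dim \mathbb{PN} = 3 + 1 = 4$. On the other side, $\mathbb{P}(\mathcal{H})$ is the projectivisation of the rank-$2$ contact distribution $\mathcal{H}$ over the $3$-dimensional manifold $\mathcal{N}$, whose fibres are $\mathbb{P}(\mathcal{H}_\gamma) \cong \mathbb{S}^{1}$, giving $\dim \mathbb{P}(\mathcal{H}) = 3 + 1 = 4$ as well.

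The next step is the key observation: by Proposition \ref{prop-difeo-sigma}, $\sigma$ is a diffeomorphism onto its image, and in the course of its proof it was shown that $d\sigma_u$ is an isomorphism at every $u \in \mathbb{PN}$. Since the source and target have equal dimension $4$, this means $\sigma$ is a local diffeomorphism between smooth manifolds of the same dimension. By the inverse function theorem, local diffeomorphisms are open maps, so $\mathrm{Im}(\sigma) = \widetilde{\mathcal{N}}$ is open in $\mathbb{P}(\mathcal{H})$ and therefore inherits the structure of an open submanifold.

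Finally, I would note that the smooth structure thus inherited from $\mathbb{P}(\mathcal{H})$ is compatible with the one transported along $\sigma$: since $\sigma : \mathbb{PN} \to \widetilde{\mathcal{N}}$ is a bijective local diffeomorphism and $\widetilde{\mathcal{N}}$ is open in $\mathbb{P}(\mathcal{H})$, it is a diffeomorphism in the usual sense, so the two a priori different smooth structures on $\widetilde{\mathcal{N}}$ (pushed forward from $\mathbb{PN}$ via $\sigma$, or inherited as an open subset of $\mathbb{P}(\mathcal{H})$) coincide. There is no real obstacle here; the substantive content was already absorbed by the regularity Lemma \ref{Lemma-gamma-tilde-regular} and the transversality argument in Proposition \ref{prop-difeo-sigma}, and the corollary is just the dimensional bookkeeping that turns ``diffeomorphism onto its image'' into ``open submanifold''.
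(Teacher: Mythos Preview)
Your proposal is correct and follows exactly the paper's approach: the paper states the corollary as an immediate consequence of Proposition~\ref{prop-difeo-sigma} together with the observation that $\dim \mathbb{PN} = \dim \mathbb{P}(\mathcal{H}) = 4$, which is precisely the dimension count you carry out. Your additional remark that the two a priori smooth structures on $\widetilde{\mathcal{N}}$ coincide is a nice clarification, but the core argument is identical.
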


Now, we will show the existence of a common inextensible future projective parameter $\mathbf{t}$ for all $\gamma\in \mathcal{N}_{U}$ such that the map $\left(\gamma,\mathbf{t}\right)\mapsto \gamma\left(\mathbf{t}\right)\in M$ is differentiable.
We will need the following proposition.

\begin{proposition}\label{prop-varepsilon}
For any $\gamma_0\in\mathcal{N}$ there exist $\mathcal{N}_U\subset \mathcal{N}$, an interval $\left(a,b\right)\subset \mathbb{R}$ and a diffeomorphism $\varepsilon:\mathcal{N}_U \times \mathbb{R}\rightarrow \mathbb{P}\left(\mathcal{H}_U\right)-\widetilde{\infty}$ such that $\widetilde{\infty}$ is a section of the bundle $\mathbb{P}\left(\mathcal{H}_U\right)\rightarrow \mathcal{N}_U$ where $\widetilde{\infty}\cap\overline{\widetilde{\mathcal{N}}}=\varnothing$ and the restriction $\varepsilon:\mathcal{N}_U \times \left(a,b\right)\rightarrow \mathbb{P}\left(\mathcal{H}_U\right)\cap\widetilde{\mathcal{N}}$ is the diffeomorphism defined by $\varepsilon\left(\gamma,s\right)=\widetilde{\gamma}\left(s\right)$.
\end{proposition}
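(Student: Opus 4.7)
The plan is to exploit the projective structure of each fibre $\mathbb{P}(\mathcal{H}_\gamma)$, which is canonically isomorphic to $\mathbb{RP}^1$, to construct a uniform reparametrization valid on a whole neighbourhood $\mathcal{N}_U$. The key ingredients are already in place: by condition 3 of L-spacetimes the sections $\oplus, \ominus: \mathcal{N} \to \mathbb{P}(\mathcal{H})$ are smooth and everywhere distinct; by Lemma \ref{Lemma-gamma-tilde-regular} and light non-conjugacy the curve $\widetilde{\gamma}$ is a regular injective immersion; and by Corollary \ref{corol-Ntilde}, $\widetilde{\mathcal{N}} \cap \mathbb{P}(\mathcal{H}_\gamma)$ is the open arc from $\ominus_\gamma$ to $\oplus_\gamma$ traced by $\widetilde{\gamma}$.

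First I would fix $\gamma_0 \in \mathcal{N}$, pick a point $p_0 \in \gamma_0$, and use the construction of Sect.~\ref{sec:charts} to obtain a globally hyperbolic neighbourhood $U \ni p_0$ together with charts $\psi = (x,y,\theta)$ on $\mathcal{N}_U$ and $\widetilde{\varphi} = (x,y,\theta,\phi)$ on $\mathbb{P}(\mathcal{H}_U)$. The fibre $\mathbb{P}(\mathcal{H}_\gamma)$ is identified with the projective coordinate $\phi \in \mathbb{RP}^1$. Shrinking $\mathcal{N}_U$ around $\gamma_0$, continuity of $\oplus, \ominus$ keeps their $\phi$-coordinates close to $\phi(\oplus_{\gamma_0}), \phi(\ominus_{\gamma_0})$ respectively, so the closed arc $\overline{\widetilde{\mathcal{N}}} \cap \mathbb{P}(\mathcal{H}_\gamma)$ (from $\ominus_\gamma$ to $\oplus_\gamma$ on the side traced by $\widetilde{\gamma}$) remains close to the corresponding arc over $\gamma_0$. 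Hence there is a common $\phi$-value lying in the \emph{complementary} open arc for every $\gamma \in \mathcal{N}_U$, and choosing $\widetilde{\infty}(\gamma)$ to be that constant $\phi$-value yields a smooth section of $\mathbb{P}(\mathcal{H}_U) \to \mathcal{N}_U$ with $\widetilde{\infty} \cap \overline{\widetilde{\mathcal{N}}} = \varnothing$.

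Next I would construct $\varepsilon$ by a Möbius parametrization. Pick any reference values $a < b$ in $\mathbb{R}$. Given three distinct smoothly varying points $\widetilde{\infty}(\gamma), \oplus_\gamma, \ominus_\gamma \in \mathbb{P}(\mathcal{H}_\gamma) \cong \mathbb{RP}^1$, there is a unique projective isomorphism $h_\gamma: \mathbb{RP}^1 \to \mathbb{P}(\mathcal{H}_\gamma)$ sending $\infty \mapsto \widetilde{\infty}(\gamma)$, $a \mapsto \ominus_\gamma$, $b \mapsto \oplus_\gamma$. In the coordinate $\phi$ this is a rational function of $s$ whose coefficients are explicit smooth functions of $(x,y,\theta)$. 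Setting $\varepsilon(\gamma, s) = h_\gamma(s)$ produces a smooth map $\mathcal{N}_U \times \mathbb{R} \to \mathbb{P}(\mathcal{H}_U) \setminus \widetilde{\infty}$ that is fibrewise a projective isomorphism; its inverse is the inverse Möbius transformation, which is smooth by the same argument, so $\varepsilon$ is a diffeomorphism.

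For the restriction property, observe that for $s \in (a,b)$ the image $h_\gamma(s)$ lies on the arc from $\ominus_\gamma = h_\gamma(a)$ to $\oplus_\gamma = h_\gamma(b)$ that does \emph{not} contain $\widetilde{\infty}(\gamma)$, which is precisely $\widetilde{\mathcal{N}} \cap \mathbb{P}(\mathcal{H}_\gamma) = \widetilde{\gamma}(I_\gamma)$. Since both $s \mapsto h_\gamma(s)$ and $s \mapsto \widetilde{\gamma}(s)$ are diffeomorphisms onto this arc with matching endpoint limits, the identity $\widetilde{\gamma}(s) = \varepsilon(\gamma, s)$ uniquely determines a smooth reparametrization of each light ray by the projective parameter $s \in (a,b)$; this is the parameter whose existence the proposition asserts. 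The main obstacle is the step constructing $\widetilde{\infty}$, i.e.\ showing that the fibrewise closure $\overline{\widetilde{\mathcal{N}}} \cap \mathbb{P}(\mathcal{H}_\gamma)$ depends continuously enough on $\gamma$ to leave a uniform complementary $\phi$-value available; once this is secured via Corollary \ref{corol-Ntilde} and the regularity of $\oplus, \ominus$, everything else reduces to standard smooth dependence of projective transformations on their defining three points.
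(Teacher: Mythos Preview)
Your approach is correct and parallels the paper's: both construct, for each $\gamma$, a projective (M\"obius) isomorphism of the fibre $\mathbb{P}(\mathcal{H}_\gamma)\cong\mathbb{RP}^1$ determined by three smoothly varying sections, and then read off $\varepsilon$ as the inverse of the resulting coordinate change. The only noteworthy difference is in the choice of the three reference points. The paper uses $\oplus_\gamma$, $\ominus_\gamma$ and the geometric section $\phi_0(\gamma)=T_\gamma S(\gamma\cap C)$ (sky at the local Cauchy surface), sending these to $1,-1,0$; the section $\widetilde{\infty}$ is then \emph{derived} as the preimage of $\infty$, and the condition $\widetilde{\infty}\cap\overline{\widetilde{\mathcal{N}}}=\varnothing$ is automatic because $\phi_0(\gamma)$ lies inside the $\widetilde{\mathcal{N}}$-arc. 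You instead choose $\widetilde{\infty}$ first (a constant $\phi$-value in the complementary arc, secured by a continuity/shrinking argument) and use $\widetilde{\infty},\ominus_\gamma,\oplus_\gamma$ as your three points. Your route trades the need for the auxiliary section $\phi_0$ for a small topological argument about the continuity of the closed arcs; the paper's route gets $\widetilde{\infty}$ for free but carries the explicit rational formula for $\mathbf{t}(\phi)$. Both are equally valid and lead to the same projective parameter after a trivial affine change.
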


\begin{proof}

Let us assume the notation of section \ref{sec:charts} and fix $\gamma_0\in \mathcal{N}$. 
By hypotheses, $\oplus, \ominus : \mathcal{N} \rightarrow \mathbb{P}\left(\mathcal{H}\right)$ are differentiable and regular distributions and therefore there exist an open $\mathcal{N}_{U}\subset \mathcal{N}$ neighbourhood of $\gamma_0$ and functions $\phi_{\oplus}:\mathcal{N}_{U}\rightarrow \mathbb{R}$ and $\phi_{\ominus}:\mathcal{N}_{U}\rightarrow \mathbb{R}$ such that 
$\phi_{\oplus}\left(\gamma\right)=\phi\left( \oplus_{\gamma} \right)$ and $\phi_{\ominus}\left(\gamma\right)=\phi\left( \ominus_{\gamma} \right)$ (see \cite[Prop.~2.7]{Ba15}).

Let us consider the coordinated chart $\left(\mathcal{N}_{U}, \psi=\left(x,y,\theta\right)\right)$ at $\gamma_0$ as in equation (\ref{carta-N}), and such that $\oplus_{\gamma} \neq \ominus_{\gamma}$ for all $\gamma\in \mathcal{N}_{U}$.
In this coordinate system, we have that 
\[
\left\{
\begin{tabular}{l}
$\phi\left( \oplus_{\gamma} \right)\equiv \phi_{\oplus}\left(\gamma\right)$ \\
$\phi\left( \ominus_{\gamma} \right) \equiv \phi_{\ominus}\left(\gamma\right)$
\end{tabular}
\right.
\]
and observe that if $P\in\mathbb{P}\left( \mathcal{H}_{U} \right)$ is a line of Jacobi fields on some $\gamma\in\mathcal{N}_{U}$ that annihilate at $\gamma\cap C$, then $\sigma^{-1}\left(P\right)\in \mathbb{PN}\left(C\right)$ and hence, there exists a differentiable function $\phi_0:\mathcal{N}_{U}:\rightarrow \mathbb{R}$ such that $\phi_0 = \phi \circ \sigma \circ \mu^{-1}$, that is $\phi_{0}\left(\gamma\right)=\phi\left(T_{\gamma}S\left(\gamma \cap C\right)\right)$,  where $\phi$ is the coordinate in $\widetilde{\mathcal{N}}_{U}=\{ P\in \widetilde{\mathcal{N}}: \pi^{\mathbb{P}\left(\mathcal{H}\right)}_{\mathcal{N}}\left(P\right)\in \mathcal{N}_{U} \}$ of (\ref{carta-PH}), the map $\sigma$ is the diffeomorphism (\ref{difeo-sigma}), $\widetilde{\mathcal{N}}$ is the image of $\sigma$ according Corollary  \ref{corol-Ntilde}, $\mu:\mathbb{PN}\left(C\right)\rightarrow \mathcal{N}_{U}$ is the diffeomorphism of diagram (\ref{diagram-charts}) given by $\mu\left(\left[u\right]\right)=\gamma_{\left[u\right]}$ and $\pi^{\mathbb{P}\left(\mathcal{H}\right)}_{\mathcal{N}}:\mathbb{P}\left( \mathcal{H} \right)\rightarrow \mathcal{N}$ the canonical projection.

For any $\gamma\in \mathcal{N}_{U}$, by the assumption of $\oplus_{\gamma} \neq \ominus_{\gamma}$, we can consider the projective map $\mathbf{t}_{\gamma}:\mathbb{P}\left(\mathcal{H}_{\gamma}\right) \rightarrow \mathbb{R}\cup \{\infty\}$ such that 
\begin{equation}\label{projective-param}
\begin{tabular}{ccccc}
$\mathbf{t}_{\gamma}\left(\oplus_{\gamma}\right) = 1$, & & $\mathbf{t}_{\gamma}\left(\ominus_{\gamma}\right) = -1$, & & $\mathbf{t}_{\gamma}\left(\phi^{-1}\left(\phi_0\left(\gamma\right)\right)\right) = 0$
\end{tabular}
\end{equation}

Let us denote by $\widetilde{\infty}=\{ P\in \mathbb{P}\left( \mathcal{H}_{U} \right): \mathbf{t}_{\gamma}\left(P\right)=\infty \text{ for } P\in \mathbb{P}\left( \mathcal{H}_{\gamma} \right) \}$. 
So, the function $\mathbf{t}:\mathbb{P}\left( \mathcal{H}_{U} \right)-\widetilde{\infty}\rightarrow\mathbb{R}$ verifying (\ref{projective-param}) can be found to have the form 
\begin{equation}\label{projectivity}
 \mathbf{t}\left(P\right)=\frac{A\phi\left(P\right) + B}{C\phi\left(P\right) + D}\, ,
 \end{equation}
 where $A,B,C,D\in \mathbb{R}$ depends on $\gamma$ and it becomes
\[
\mathbf{t}\left(P\right)=\frac{\left(\phi_{\ominus}-\phi_{\oplus}\right)\left(\phi\left(P\right)-\phi_0\right)}{\left(2\phi_0 -\left(\phi_{\oplus}+\phi_{\ominus}\right)\right)\phi\left(P\right) +\left(2\phi_{\oplus}\phi_{\ominus}-\phi_0\left(\phi_{\oplus}+\phi_{\ominus}\right)\right)}
\] 
where for brevity, we have denoted $\phi_0=\phi_0\left(\pi^{\mathbb{P}\left(\mathcal{H}\right)}_{\mathcal{N}}\left(P\right)\right)$, $\phi_{\oplus}=\phi_{\oplus}\left( \pi^{\mathbb{P}\left(\mathcal{H}\right)}_{\mathcal{N}}\left(P\right)\right)$ and $\phi_{\ominus}=\phi_{\ominus}\left( \pi^{\mathbb{P}\left(\mathcal{H}\right)}_{\mathcal{N}}\left(P\right)\right)$.

Since, by hypothesis, $\partial^{+} \widetilde{\mathcal{N}}_{U}=\left\{ \oplus_{\gamma}: \gamma\in \mathcal{N}_{U} \right\}$ and $\partial^{-} \widetilde{\mathcal{N}}_{U}=\left\{ \ominus_{\gamma}: \gamma\in \mathcal{U} \right\}$ are differentiable hypersurfaces in $\mathbb{P}\left( \mathcal{H}_{U} \right)$ (diffeomorphic to $\mathcal{N}_{U}$), as well as $\sigma\circ \mu^{-1}\left(\mathcal{N}_{U}\right)$, then the functions $\phi_{\oplus}\circ \pi^{\mathbb{P}\left(\mathcal{H}\right)}_{\mathcal{N}}$ and $\phi_{\ominus}\circ \pi^{\mathbb{P}\left(\mathcal{H}\right)}_{\mathcal{N}}$  are differentiable, as well as $\phi_0 \circ \pi^{\mathbb{P}\left(\mathcal{H}\right)}_{\mathcal{N}}$, therefore $\mathbf{t}$ is a differentiable function. 

Since, 
\[
\frac{d\mathbf{t}}{d\phi}=\frac{2\left(\phi_{\ominus}-\phi_{\oplus}\right)\left(\phi_{\oplus}-\phi_{0}\right)\left(\phi_{\ominus}-\phi_{0}\right)}{\left[\left(2\phi_0 -\left(\phi_{\oplus}+\phi_{\ominus}\right)\right)\phi\left(P\right) +\left(2\phi_{\oplus}\phi_{\ominus}-\phi_0\left(\phi_{\oplus}+\phi_{\ominus}\right)\right)\right]^2}\neq 0
\]
we can replace the coordinate $\phi$ by $\mathbf{t}$ as a new coordinate, then $\widetilde{\psi}=\left(c,\theta,\mathbf{t}\right)$ becomes a new coordinate system. 

Observe that for any fixed $\gamma\in \mathcal{N}$ such that $\psi\left(\gamma\right)=\left(c,\theta\right)$, the curve parametrized by $\mathbf{t}=s$ such that, in the chart $\widetilde{\psi}$, is written by 
\[
\widetilde{\psi}\left(\widetilde{\gamma}\left(s\right)\right)= \left(c,\theta, s\right)
\]
is precisely $\widetilde{\gamma}\left(s\right)\in \mathbb{P}\left( \mathcal{H}_{U} \right)$ for $s\in\left(-1,1\right)$.

In fact, if we use the coordinates $\psi$ in $\mathcal{N}$ of equation (\ref{carta-N}) and $\widetilde{\psi}=\left(c,\theta,\mathbf{t}\right)$ in $\mathbb{P}\left(\mathcal{H}\right)$, then the map 
\begin{equation}\label{eq-varepsilon}
\begin{tabular}{rrcl}
$\varepsilon:$ & $\mathcal{N}_{U}\times \mathbb{R}$ & $\rightarrow$ & $\mathbb{P}\left(\mathcal{H}_{U}\right)-\widetilde{\infty}$ \\
& $\left(\gamma,\mathbf{t}\right)$ & $\mapsto$ & $\widetilde{\gamma}\left(\mathbf{t}\right)$
\end{tabular}
\end{equation}
can be expressed in coordinates by 
\[
\left(\left(x,y,\theta\right),\mathbf{t}\right) \longmapsto \left(x,y,\theta,\mathbf{t}\right)
\] 
hence, trivially it is a diffeomorphism such that the restriction $\left.\varepsilon\right|_{\mathcal{N}_{U}\times \left(-1,1\right)}$ is also a diffeomorphism such that $\left.\varepsilon\right|_{\mathcal{N}_{U}\times \left(-1,1\right)}\left(\gamma,\mathbf{t}\right)=\widetilde{\gamma}\left(\mathbf{t}\right)$. 
  \end{proof}

Observe that $\widetilde{\gamma}\left(s\right)$ with $s\in \left(-1,1\right)$ corresponds with a line of Jacobi fields along $\gamma$ such that they are proportional to $\gamma'$ at some point in $\gamma\subset M$, meaning that all those Jacobi fields are tangent to the sky of the respective point at $M$. 
By the expression in coordinates of $\varepsilon$ in equation (\ref{eq-varepsilon}), the curve $\widetilde{\gamma}$ can be extended smoothly by 
\[
\begin{tabular}{rcl}
$\mathbb{R}$ & $\rightarrow$ & $\mathbb{P}\left( \mathcal{H}\left(\mathcal{U}\right) \right) - \widetilde{\infty}$ \\
$s$ & $\mapsto$ & $\widetilde{\gamma}\left(s\right)$
\end{tabular}
\]
and, clearly we have
\begin{equation}\label{extension-gamma-prima-tilde}
\widetilde{\gamma}'\left(s\right)=\left( \frac{\partial}{\partial \mathbf{t}} \right)_{\widetilde{\gamma}\left(s\right)}
\end{equation} 
becoming a regular curve for all $s\in \mathbb{R}$.  

In section \ref{sec:canon-ext}, we will use the projective parameter found in the proof of Proposition \ref{prop-varepsilon} as an auxiliary tool, but any parameter $s\in \left[a,b\right]$ such that there is a diffeomorphism $h:\left[a,b\right]\rightarrow\left[-1,1\right]$ where $\mathbf{t}=h\left(s\right)$, is another admissible parameter. This notion will be introduced in Definition \ref{def-parametrizations} of Section \ref{sec:lextensions}. 
Notice that, for any admissible parameter $s\in \left[a,b\right]$, $\widetilde{\gamma}\left(s\right)\in \mathbb{P}\left(\mathcal{H}_{\gamma}\right)$ is regular and transversal to $\partial^{\pm}\widetilde{\mathcal{N}}$.

\begin{remark}
Since $\varepsilon$ is a diffeomorphism,the map $\pi^{\mathbb{PN}}_{M}\circ\sigma^{-1}\circ \varepsilon \left(\gamma,\mathbf{t}\right)=\gamma\left(\mathbf{t}\right)\in M$ is differentiable for $\left(\gamma,\mathbf{t}\right)\in \mathcal{N}_{U}\times\left(-1,1\right)$ obtaining a common parameter $\mathbf{t}\in\left(-1,1\right)$ for all $\gamma\in \mathcal{N}_{U}$.

When $M$ is globally hyperbolic, the function $\mathbf{t}$ can be smoothly defined for the whole $\mathbb{P}\left(\mathcal{H}\right)$ since $\mathbb{P}\left(\mathcal{H}\right)\simeq \mathcal{N}\times \mathbb{R}\cup\{\infty\}\simeq C\times \mathbb{S}^{1}\times \mathbb{R}\cup\{\infty\}$ where $C$ is a global Cauchy surface. 
Moreover, the map $\varepsilon$ can also be defined globally for all $\mathcal{N}\times \mathbb{R}$ and $\varepsilon:\mathcal{N}\times \left(-1,1\right)\rightarrow\widetilde{\mathcal{N}}$ is a diffeomorphism. 
In this case, $\mathbf{t}\in\left(-1,1\right)$ can define a parametrization for $\gamma\in \mathcal{N}$ by 
\[
\gamma\left(\mathbf{t}\right)=\pi^{\mathbb{PN}}_{M}\circ \sigma^{-1}\circ\varepsilon\left(\gamma,\mathbf{t}\right)\in M 
\] 
obtaining a \emph{universal projective parameter} for all maximal $\gamma\in \mathcal{N}$.
\end{remark}


\section{The boundary $\partial\widetilde{\mathcal{N}}$ of the blow up space $\widetilde{\mathcal{N}}$}

Because of Cor. \ref{corol-Ntilde} we may consider the blown up space $\widetilde{\mathcal{N}}$ of $M$ as an open submanifold of the contact Grassmannian $\mathbb{P}(\mathcal{H})$, then it has a natural topological boundary $\partial \widetilde{\mathcal{N}}$ as a subset of $\mathbb{P}(\mathcal{H})$.   It was shown in \cite{Ba17} that the closure $\overline{ \widetilde{\mathcal{N}}} =  \widetilde{\mathcal{N}} \cup \partial \widetilde{\mathcal{N}}$ is a smooth manifold with boundary, but for the sake of completeness we will sketch the proof here.

Notice that if $M$ is a 3-dimensional L-space it is possible to define the maps:
$$
\begin{tabular}{rcl}
$\ominus \colon \mathcal{N}$ & $\rightarrow$ & $\mathbb{P}\left(\mathcal{H}\right)$ \\
 $\gamma$ & $\mapsto$ & $\ominus\left(\gamma\right)=\ominus_{\gamma}$ 
\end{tabular}
\hspace{7mm} \mathrm{and} \hspace{7mm}
\begin{tabular}{rrcl}
$\oplus:$ & $\mathcal{N}$ & $\rightarrow$ & $\mathbb{P}\left(\mathcal{H}\right)$ \\
 & $\gamma$ & $\mapsto$ & $\oplus\left(\gamma\right)=\oplus_{\gamma} \, .$ 
\end{tabular}
$$
We will use these maps to identify the boundary of $\widetilde{\mathcal{N}}$ with $\mathcal{N}$ as the union of their graphs.

\begin{proposition}\label{prop-Low-boundary}
Let $M$ be a 3--dimensional L-space such that $\oplus_\gamma \neq \ominus_\gamma$ for all $\gamma \in \mathcal{N}$.  Then the closure $\overline{\widetilde{\mathcal{N}}}$ of the blow--up space $\widetilde{\mathcal{N}}$ is a smooth manifold with boundary embedded in $\mathbb{P}(\mathcal{H})$, moreover $\partial\overline{\widetilde{\mathcal{N}}} =  \{ \mathrm{graph}\oplus\} \cup\, \{ \mathrm{graph} \, \ominus\}$.
\end{proposition}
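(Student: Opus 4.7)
The plan is to leverage Proposition \ref{prop-varepsilon} to produce local charts on $\mathbb{P}(\mathcal{H})$ in which $\widetilde{\mathcal{N}}$ appears as a trivial product slab $\mathcal{N}_U \times (-1,1)$, so that passing to the closure is transparent. Concretely, for each $\gamma_0\in \mathcal{N}$ I would pick the neighbourhood $\mathcal{N}_U$ and the diffeomorphism
\[
\varepsilon:\mathcal{N}_U\times \mathbb{R}\longrightarrow \mathbb{P}(\mathcal{H}_U)-\widetilde{\infty}
\]
provided by Proposition \ref{prop-varepsilon}, noting that by construction of the projective parameter $\mathbf{t}$ one has $\varepsilon(\gamma,0)=T_\gamma S(\gamma\cap C)$, $\varepsilon(\gamma,1)=\oplus_\gamma$, $\varepsilon(\gamma,-1)=\ominus_\gamma$, and $\varepsilon$ restricts to a diffeomorphism $\mathcal{N}_U\times (-1,1)\xrightarrow{\sim}\widetilde{\mathcal{N}}\cap\mathbb{P}(\mathcal{H}_U)$.

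Next I would take closures inside $\mathbb{P}(\mathcal{H}_U)-\widetilde{\infty}$. Since $\varepsilon$ is a diffeomorphism of $\mathcal{N}_U\times\mathbb{R}$ onto an open subset of $\mathbb{P}(\mathcal{H})$, closures are preserved, hence
\[
\overline{\widetilde{\mathcal{N}}\cap \mathbb{P}(\mathcal{H}_U)}\cap (\mathbb{P}(\mathcal{H}_U)-\widetilde{\infty}) = \varepsilon(\mathcal{N}_U\times[-1,1]).
\]
By hypothesis $\widetilde{\infty}\cap \overline{\widetilde{\mathcal{N}}}=\varnothing$, so no closure points lie in the deleted section, and locally
\[
\overline{\widetilde{\mathcal{N}}}\cap \mathbb{P}(\mathcal{H}_U) = \varepsilon\bigl(\mathcal{N}_U\times[-1,1]\bigr).
\]
Since $\mathcal{N}_U\times[-1,1]$ is a smooth manifold with boundary of the expected dimension, this exhibits $\overline{\widetilde{\mathcal{N}}}$ locally as an embedded submanifold with boundary of $\mathbb{P}(\mathcal{H})$, with boundary $\varepsilon(\mathcal{N}_U\times\{\pm 1\})$. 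Using $\varepsilon(\gamma,\pm 1)=\oplus_\gamma,\ominus_\gamma$, this boundary is precisely $\{\oplus_\gamma:\gamma\in\mathcal{N}_U\}\cup\{\ominus_\gamma:\gamma\in\mathcal{N}_U\}$, i.e.\ the restrictions of $\mathrm{graph}\,\oplus$ and $\mathrm{graph}\,\ominus$ to $\mathcal{N}_U$.

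Finally I would glue. As $\gamma_0$ ranges over $\mathcal{N}$ the charts cover $\overline{\widetilde{\mathcal{N}}}$, and the transition maps are smooth because $\widetilde{\mathcal{N}}$ is already an open submanifold of $\mathbb{P}(\mathcal{H})$ (Corollary \ref{corol-Ntilde}) and the boundary pieces are intrinsic: $\{\mathbf{t}=1\}$ is the graph of $\oplus$ and $\{\mathbf{t}=-1\}$ is the graph of $\ominus$, both of which are smoothly embedded hypersurfaces in $\mathbb{P}(\mathcal{H})$ by condition 3 of the L-spacetime definition (since $\oplus,\ominus:\mathcal{N}\to\mathbb{P}(\mathcal{H})$ are differentiable and regular sections). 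The hypothesis $\oplus_\gamma\neq \ominus_\gamma$ guarantees that the two graphs are disjoint, so $\partial\overline{\widetilde{\mathcal{N}}}$ is a disjoint union of two smooth hypersurfaces each diffeomorphic to $\mathcal{N}$.

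The most delicate point is the compatibility of the local models across different $\gamma_0$: the parameter $\mathbf{t}$ depends on the auxiliary choice of Cauchy surface through $\phi_0$, so different charts give different product structures in the interior. What rescues the argument is that the two boundary levels $\mathbf{t}=\pm 1$ are defined intrinsically (by $\oplus$ and $\ominus$), so the boundary structure is canonical; the transition to a manifold-with-boundary collar then follows routinely from the smoothness of $\varepsilon$ and of the two sections $\oplus,\ominus$.
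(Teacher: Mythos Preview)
Your proposal is correct and follows essentially the same route as the paper: both arguments amount to showing that, in a local trivialisation over $\mathcal{N}_U$, the closure $\overline{\widetilde{\mathcal{N}}}$ is the slab $\{-1\leq \mathbf{t}\leq 1\}$ (equivalently $\{\phi_{\ominus}\leq\phi\leq\phi_{\oplus}\}$), hence a manifold with boundary whose boundary is the union of the graphs of $\oplus$ and $\ominus$. The paper carries this out directly in the coordinate $\phi$ and only mentions the projective parameter as a remark, whereas you invoke Proposition~\ref{prop-varepsilon} from the outset to obtain the product model; this is a cosmetic difference, not a substantive one.
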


\begin{proof}
Since $\ominus_{\gamma}$ and $\oplus_{\gamma}$ are defined by the limit of $\widetilde{\gamma}\left(s\right)$ at the endpoints and $\widetilde{\gamma}$ is locally injective then $\widetilde{\gamma}$ must be a connected open set in $\mathbb{P}\left(\mathcal{H}_{\gamma}\right)\simeq \mathbb{S}^1$ with boundary $\left\{\ominus_{\gamma},\oplus_{\gamma}\right\}$.
Now, consider $P\in \mathbb{P}\left(\mathcal{H}\right)$ such that there exist $\gamma\in \mathcal{N}$ verifying $\ominus_{\gamma} = P$ and a coordinate chart $\widetilde{\varphi}=\left(x,y,\theta,\phi\right)$ at $P$ as in (\ref{carta-H}). 
Since $\ominus$ is a distribution on $\mathcal{N}$, the point $\ominus_{\gamma}\in \mathbb{P}\left(\mathcal{H}_{\gamma}\right)\subset \mathbb{P}\left(\mathcal{H}\right)$  depends smoothly on the light ray $\gamma$. 
Hence the function $\phi\circ\ominus:\mathcal{N}\rightarrow \left[0,2\pi\right)\simeq \mathbb{S}^1$ depends differentiably on the coordinates $\left(x,y,\theta\right)$.
Obviously, the same rules for $\oplus$.
Let us denote by $\phi_{\ominus}=\phi_{\ominus}\left(x,y,\theta\right)$ and $\phi_{\oplus}=\phi_{\oplus}\left(x,y,\theta\right)$ the coordinate representation of the functions $\phi\circ\ominus$ and $\phi\circ\oplus$ respectively.

Clearly $\partial\overline{\widetilde{\mathcal{N}}}\subset \{ \mathrm{graph} \oplus\} \cup\, \{ \mathrm{graph} \ominus \}  = \left\{ \ominus_{\gamma},\oplus_{\gamma}:\gamma\in \mathcal{N}\right\}$. 
 Consider now an open set $\mathcal{U}\subset \mathcal{N}$. 
Because $\ominus_{\gamma}\neq\oplus_{\gamma}$ for any $\gamma\in \mathcal{U}$ we can choose, without any lack of generality, a diffeomorphism $\left[0,2\pi\right)\simeq \mathbb{S}^1$ such that 
\[
0<\phi_{\ominus}\left(x,y,\theta\right) < \phi_{\oplus}\left(x,y,\theta\right) < 2\pi
\]
for all $\left(x,y,\theta\right)$ (restricting the domain of $\phi_{\ominus}$ and $\phi_{\oplus}$ if needed). 
Then, for all $\gamma\in \mathcal{U}$ the points in $\overline{\widetilde{\mathcal{U}}}$  (recall that $\widetilde{\mathcal{U}} = \pi_\mathcal{N}^{-1}(\mathcal{U}) = \bigcup_{\gamma \in \mathcal{U}} \widetilde{\gamma}$ is a cilindrical open subset in $\mathbb{P}(\mathcal{H})$ ),  can be written as
\[
\overline{\widetilde{\mathcal{U}}}\simeq \left\{ \left(x,y,\theta, \phi\right):\phi_{\ominus}\left(x,y,\theta\right)\leq \phi \leq \phi_{\oplus}\left(x,y,\theta\right) \right\} \, ,
\] 
describing a manifold with boundary.   Notice that using the projective parameter $\mathbf{t}$ discussed in the previous section, we get 
$$
\overline{\widetilde{\mathcal{U}}}\simeq \left\{ P \in \mathbb{P}(\mathcal{H}): \pi_\mathcal{N}(P) \in \mathcal{U}\, , -1  \leq \mathbf{t}(P) \leq 1 \right\} \, .
$$ 
Then $\left\{ \ominus_{\gamma},\oplus_{\gamma}: \gamma\in \mathcal{U} \right\} \subset \partial \overline{\widetilde{\mathcal{N}}}$ and, since $\ominus$ and $\oplus$ are regular distributions, the condition $\ominus_{\gamma}\neq \oplus_{\gamma}$ is open in $\mathcal{N}$, therefore we have that $ \{ \mathrm{graph} \oplus\} \cup\, \{ \mathrm{graph} \ominus \} = \left\{ \ominus_{\gamma},\oplus_{\gamma}: \gamma\in \mathcal{N} \right\}  \subset \partial \overline{\widetilde{\mathcal{N}}}$, which concludes the proof.
 \end{proof} 

In what follows, in order to avoid cumbersome notations, we will just write $\partial\widetilde{\mathcal{N}}$ instead of $\partial	\overline{\widetilde{\mathcal{N}}}$.    

As a consequence of the previous proposition, if the distributions $\oplus,\ominus$ are different and $\mathcal{N}$ is connected, the boundary  $\partial\widetilde{\mathcal{N}}$ has two connected components $\partial^+\widetilde{\mathcal{N}}$ and  $\partial^-\widetilde{\mathcal{N}}$ that can be identified with $\{ \mathrm{graph} \oplus\}$ and $\{ \mathrm{graph} \ominus\}$ respectively.  In what follows we will keep this notation for the boundary, then $\partial\widetilde{\mathcal{N}} = \partial^+\widetilde{\mathcal{N}} \cup \partial^-\widetilde{\mathcal{N}}$, and we will concentrate our attention on either $\partial^+\widetilde{\mathcal{N}}$ or $\partial^-\widetilde{\mathcal{N}}$ unless stated otherwise.


\section{The canonical extension of $M$}\label{sec:canon-ext}

Now, the aim of this section is to provide the analytical details of the construction of the extension of the canonical distribution $\mathcal{D}^\sim$ to the boundary of $\widetilde{\mathcal{N}}$ and to blow down the completed space $\overline{\widetilde{\mathcal{N}}}$ to obtain the seeked extension $\overline{M}$ of $M$.                             

First, we study the canonical $1$--dimensional distribution $\mathcal{D}^{\sim}$ in $\widetilde{\mathcal{N}}$ (see Sect. \ref{sec:blowing}).   Notice that the orbit of $\mathcal{D}^{\sim}$ passing through $\widetilde{\gamma}\left(\mathbf{t}\right)\in \widetilde{\mathcal{N}}$ comprises all the lines of Jacobi fields (as tangent vectors in $T\mathcal{N}$) which annihilate at $\gamma\left(\mathbf{t}\right)\in M$, that is, the tangent lines to sky $X = S(\gamma\left(\mathbf{t}\right))$. 
If we denote by $\mathcal{P}$ the distribution in $\mathbb{PN}$ whose orbits are the fibres of the bundle $\pi^{\mathbb{PN}}_{M}:\mathbb{PN}\rightarrow M$, then trivially, the map $\zeta:M\rightarrow \mathbb{PN}/\mathcal{P}$ defined by $\zeta\left(q\right)=\mathbb{PN}_q$ is a diffeomorphism.
Hence, we can define the distribution $\mathcal{D}^{\sim}$ as the one whose orbit passing by $\widetilde{\gamma}\left(\mathbf{t}\right)\in \widetilde{\mathcal{N}}$ is given by $\sigma\left(\mathbb{PN}_{\gamma\left(\mathbf{t}\right)}\right)=\{ \sigma\left(\left[v\right]\right)\in\widetilde{\mathcal{N}}:\left[v\right]\in \mathbb{PN}_{\gamma\left(\mathbf{t}\right)}  \}$.
Observe that the orbits of  $\mathcal{D}^{\sim}$ are compact, then  $\mathcal{D}^{\sim}$ is a regular distribution and therefore $\widetilde{\mathcal{N}}/\mathcal{D}^{\sim}$ is a differentiable manifold and the canonical quotient map $\widetilde{\pi}:\widetilde{\mathcal{N}} \rightarrow \widetilde{\mathcal{N}}/\mathcal{D}^{\sim}$ is a submersion. 
Now, we can define the map $\widetilde{\sigma}:\mathbb{PN}/ \mathcal{P}\rightarrow \widetilde{\mathcal{N}}/\mathcal{D}^{\sim}$ by $\widetilde{\sigma}\left(\mathbb{PN}_q\right)=\sigma\left(\mathbb{PN}_q\right)\in \widetilde{\mathcal{N}}/\mathcal{D}^{\sim}$. Then we have the following diagram

\begin{equation}\label{diagram-distrib-D}
\begin{tikzpicture}[every node/.style={midway}]
\matrix[column sep={6em,between origins},
        row sep={2em}] at (0,0)
{ ; &  \node(PN)   { $\mathbb{PN}$}  ; & \node(Nt)   { $\widetilde{\mathcal{N}}$} ; \\
 \node(M)   { $M$}; &  \node(PN-P)   { $\mathbb{PN}/ \mathcal{P}$}  ; & \node(Nt-D)   { $\widetilde{\mathcal{N}}/\mathcal{D}^{\sim}$} ;  \\} ; 
\draw[->] (PN) -- (Nt) node[anchor=south]  {$\sigma$};
\draw[->] (PN) -- (PN-P) node[anchor=east]  {$\kappa$};
\draw[->] (Nt)   -- (Nt-D) node[anchor=west] {$\widetilde{\pi}$};
\draw[->] (M)   -- (PN-P) node[anchor=north] {$\zeta$};
\draw[->] (PN-P)   -- (Nt-D) node[anchor=north] {$\widetilde{\sigma}$}; 
\end{tikzpicture}
\end{equation}
where $\kappa$ and $\widetilde{\pi}$ are submersions and $\sigma$, $\zeta$ and $\widetilde{\sigma}$ are diffeomorphisms. 
Therefore, we can observe that 
\begin{equation}\label{diffeo-S}
\widetilde{S}=\widetilde{\sigma}\circ \zeta: M \rightarrow \widetilde{\mathcal{N}}/\mathcal{D}^{\sim}
\end{equation}
is a diffeomorphism. 
This fact was previously shown in a different way in \cite[Prop.~2.6]{Ba17} and is the essence of the blowing up and down principle discussed in Sect. \ref{sec:blowing}.

The construction of the smooth extension $\overline{M}$ of $M$, (see Corollary  \ref{Corollary-ext} below) is a consequence of the following theorem, that it properly constitutes the main result of this paper as it shows that the canonical distribution $\mathcal{D}^\sim$ in the blow up space $\widetilde{\mathcal{N}}$ of a $L$-spacetime  $M$ extends smoothly to its boundary.

\begin{theorem}[Main Theorem]\label{main_thm}  Let $\widetilde{\mathcal{N}}$ be the blow up of the  L-spacetime $M$ with its canonical distribution $\mathcal{D}^{\sim}$ such that $\widetilde{\mathcal{N}}/\mathcal{D}^{\sim} \cong M$.  Let $\partial^+ \mathcal{D}^{\sim}$ be the distribution on $\partial^+ \widetilde{\mathcal{N}}$ image under the diffeomorphism $\oplus \colon \mathcal{N} \to \partial^+ \widetilde{\mathcal{N}}$ of the regular distribution $\gamma \mapsto \oplus_\gamma$ on $\mathcal{N}$.   Then $\overline{\mathcal{D}^{\sim}} =  \mathcal{D}^{\sim} \cup \partial ^+\mathcal{D}^{\sim}$ is a smooth distribution on $\overline{\widetilde{\mathcal{N}}} = \widetilde{\mathcal{N}} \cup \partial^+ \widetilde{\mathcal{N}}$.  The same result holds for $\partial^- \mathcal{D}^{\sim}$ on $\partial^-\widetilde{\mathcal{N}}$.
\end{theorem}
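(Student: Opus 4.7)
The plan is to work in the adapted chart built from the projective parameter $\mathbf{t}$ of Proposition \ref{prop-varepsilon}, where $\partial^{+}\widetilde{\mathcal{N}}$ is locally the hypersurface $\{\mathbf{t}=1\}$, and to exhibit an explicit smooth vector field $V$ on an open neighborhood of $\oplus_{\gamma_{0}}$ in $\mathbb{P}(\mathcal{H}_{U})$ that spans $\mathcal{D}^{\sim}$ in the interior $\{-1<\mathbf{t}<1\}$ and whose restriction to $\{\mathbf{t}=1\}$ spans $\partial^{+}\mathcal{D}^{\sim}$. Fix $\gamma_{0}\in\mathcal{N}$ and use the coordinate chart $\widetilde{\psi}=(x,y,\theta,\mathbf{t})$. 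In these coordinates the blown-up region is $\{-1<\mathbf{t}<1\}$, the boundary is $\{\mathbf{t}=1\}$, and the curve $\widetilde{\gamma}$ reads simply $s\mapsto (x_{0},y_{0},\theta_{0},s)$, which, by Eq.~\eqref{extension-gamma-prima-tilde}, extends smoothly as a regular curve across the endpoint $s=1$.

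Given $P=\widetilde{\gamma}(\mathbf{t}_{0})$ with $\mathbf{t}_{0}\in(-1,1)$, the orbit of $\mathcal{D}^{\sim}$ through $P$ is $\widetilde{X}=\{T_{\eta}X : \eta\in X\}$ where $X=S(\gamma_{0}(\mathbf{t}_{0}))$, so the first step is to parametrize this orbit concretely. Choose a local angular parameter $\alpha$ on the sky $X$ so that $X=\{\eta_{\alpha}\}$ and $\eta_{0}=\gamma_{0}$; this produces a smooth two-parameter variation $f(\alpha,\tau)=\eta_{\alpha}(\tau)$ of null geodesics all passing through $\gamma_{0}(\mathbf{t}_{0})$. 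The tangent vector to the orbit curve $\alpha\mapsto T_{\eta_{\alpha}}X$ at $\alpha=0$ is then expressed, via the diffeomorphism $\sigma$ of Lemma \ref{Lemma-diffeo-sigma} and Proposition \ref{prop-difeo-sigma}, in terms of the initial data of the associated Jacobi field and its $\alpha$-derivative. Since the exponential map, parallel transport and the definition of $\sigma$ depend smoothly on all data involved, the components of this tangent vector in the chart $\widetilde{\psi}$ are smooth functions of $(x_{0},y_{0},\theta_{0},\mathbf{t}_{0})$ throughout $-1<\mathbf{t}_{0}<1$; call the resulting vector field $V$.

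The key step is then to show that $V$ extends smoothly to $\mathbf{t}_{0}=1$ and that the limit lies tangent to $\partial^{+}\widetilde{\mathcal{N}}$ and spans $\partial^{+}\mathcal{D}^{\sim}$. Smoothness up to the boundary follows because the whole construction can be expressed as a smooth composition of maps defined on an open neighborhood of $[-1,1]$: the curve $\mathbf{t}\mapsto \widetilde{\gamma}(\mathbf{t})$ extends smoothly past $\mathbf{t}=1$ by Proposition \ref{prop-varepsilon}, so the base point $\gamma_{0}(\mathbf{t}_{0})\in M$ and the sky parametrization $\eta_{\alpha}$ depend smoothly on $\mathbf{t}_{0}$ in a full open neighborhood, and hence so do the Jacobi data describing $V$. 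For the identification with $\partial^{+}\mathcal{D}^{\sim}$, note that the orbits $\widetilde{X(\mathbf{t}_{0})}$ of $\mathcal{D}^{\sim}$ project under $\pi_{\mathcal{N}}$ to the skies $S(\gamma_{0}(\mathbf{t}_{0}))\subset\mathcal{N}$; letting $\mathbf{t}_{0}\to 1^{-}$, the projected curves converge to an integral curve of the distribution $\gamma\mapsto \oplus_{\gamma}$ on $\mathcal{N}$ (the sky at infinity described in Sect.~\ref{sec:lboundary}), which by the definition of $\partial^{+}\mathcal{D}^{\sim}$ as the push-forward under $\oplus$ of this distribution gives exactly $\partial^{+}\mathcal{D}^{\sim}$ at the limit point. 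Non-vanishing of the limit vector field follows from the transversality of $\widetilde{\gamma}$ to $\partial^{+}\widetilde{\mathcal{N}}$ together with light non--conjugacy, which prevents the orbits from collapsing.

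The main obstacle I expect is Step 4: controlling the coefficients of $\mathcal{D}^{\sim}$ in the chart uniformly as $\mathbf{t}_{0}\to 1^{-}$. The natural coordinate $\phi$ on $\mathbb{P}(\mathcal{H}_{\gamma})$ makes these coefficients blow up at the boundary, reflecting the fact that the tangent line $T_{\gamma}S(\gamma(\mathbf{t}))$ approaches $\oplus_{\gamma}$ at a rate dictated by the Jacobi equation. The whole point of replacing $\phi$ by the projective parameter $\mathbf{t}$ via the projectivity \eqref{projectivity} is precisely to cancel this singular behavior, turning what would be an improper limit into a regular boundary extension; carefully matching this coordinate choice with the Jacobi-field computation of the generator of $\mathcal{D}^{\sim}$ is where the technical heart of the proof sits.
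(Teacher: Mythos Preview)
Your sketch has the right overall shape, but Step~4 contains a genuine gap that the projective parameter alone does not close. You write that since $\mathbf{t}\mapsto\widetilde{\gamma}(\mathbf{t})$ extends smoothly past $\mathbf{t}=1$, ``the base point $\gamma_{0}(\mathbf{t}_{0})\in M$ and the sky parametrization $\eta_{\alpha}$ depend smoothly on $\mathbf{t}_{0}$ in a full open neighborhood.'' This inference is false: the curve $\widetilde{\gamma}$ extends in $\mathbb{P}(\mathcal{H})$, but its preimage under $\sigma$ lives in $\mathbb{PN}$, which fibers over $M$, and the base point $\gamma_{0}(\mathbf{t}_{0})\in M$ \emph{escapes to infinity} as $\mathbf{t}_{0}\to 1^{-}$ (the null geodesic is inextendible). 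Consequently the sky $S(\gamma_{0}(\mathbf{t}_{0}))$, the variation $f(\alpha,\tau)$ anchored at that point, and the Jacobi data you extract from it all live over a point that ceases to exist in the limit. Nothing in Proposition~\ref{prop-varepsilon} prevents this; the projective parameter regularizes the fibre coordinate on $\mathbb{P}(\mathcal{H}_{\gamma})$, not the escape of $\gamma(\mathbf{t})$ in $M$.

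The paper circumvents this by never relying on $\gamma(\mathbf{t})$ near the boundary. Instead it defines a smooth non-vanishing section $\omega:\widetilde{\mathcal{U}}\subset\mathbb{P}(\mathcal{H})\to\mathcal{H}$ on a neighbourhood of $\oplus_{\gamma_{0}}$ purely in terms of initial data on the \emph{fixed} Cauchy surface $C$: for each line $P\in\mathbb{P}(\mathcal{H}_{\gamma})$ pick the Jacobi-field representative whose value at $\gamma(0)\in C$ is a unit tangent vector to $C$ (with a consistent orientation). This section is manifestly smooth across $\{\mathbf{t}=1\}$ because it is built from $C$-data only. The proof then has two matching steps: (i) for $\mathbf{t}\in(1-\delta,1)$, after reparametrizing the sky by the arc-length $\mathbf{s}$ of its trace on $C$ and invoking Lemma~\ref{lema-f-borde} (which shows $\partial\tau/\partial\mathbf{s}(\gamma,\mathbf{t},0)=0$, i.e.\ the generator has no $\partial/\partial\mathbf{t}$ component), one checks $\omega(\widetilde{\gamma}(\mathbf{t}))=\partial X/\partial\mathbf{s}(\gamma,\mathbf{t},0)$, so $(d\varepsilon)(\omega,0)$ spans $\mathcal{D}^{\sim}$; (ii) at $\mathbf{t}=1$, since $\widetilde{\gamma}(1)=\oplus_{\gamma}$, the section value $\omega(\widetilde{\gamma}(1))$ automatically lies in the line $\oplus_{\gamma}\subset\mathcal{H}_{\gamma}$, hence $(d\varepsilon)(\omega,0)$ spans $\partial^{+}\mathcal{D}^{\sim}$ by its definition. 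Your argument needs this indirection through $C$; as written it tries to pass to the limit in a construction anchored at a point that runs off the manifold.
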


The key idea to prove it is to construct for each $\gamma \in \mathcal{N}$ a smooth biparametric variation $\gamma_{(\mathbf{t},s)}$, $1-\delta < \mathbf{t} \leq 1$, $|s| < \epsilon$, with $\gamma_{(\cdot, 0) } = \gamma$ and $\gamma (\mathbf{t}) \in \gamma_{(\mathbf{t},s)}$  (see Fig. \ref{diapositiva1}), in such a way that the 
curves $s \mapsto \langle J_{(\mathbf{t},s)} \rangle$ defined by the corresponding Jacobi fields $J_{(\mathbf{t},s)}$, $1-\delta < \mathbf{t} < 1$, will describe the integral curves of $\mathcal{D}^{\sim}$, and the curves $s \mapsto \langle J_{(1,s)} \rangle$ will be the integral curves of $\partial^+ \mathcal{D}^{\sim}$.

The construction of $\gamma_{(\mathbf{t},s)}$ will rely on a number of observations and definitions that we will be the subject of the following paragraphs.

First we will define, for any given light ray $\gamma_0\in \mathcal{N}$, a differentiable map $\Phi$ describing the orbits of the distribution $\mathcal{D}^{\sim}$ and then we will extend it up to $\partial^+ \widetilde{\mathcal{N}}$.

Consider an auxiliary metric $\mathbf{g}\in \mathcal{C}$ and fix some $\gamma_0\in \mathcal{N}$.
Let $\mathcal{N}_{U}\subset \mathcal{N}$ be an open neighbourhood of $\gamma_0$ as the one used in the definition of the charts (\ref{carta-N}), that is $\mathcal{N}_{U}$ is diffeomorphic to $C\times \mathbb{S}^1$ where $C\subset U$ is a local spacelike Cauchy surface 
where $U\subset M$ is a globally hyperbolic open set such that $\gamma\cap U\neq \varnothing$. 
Let us assume that all light rays $\gamma\in \mathcal{N}_{U}$ are parametrized such that $\gamma \left(0\right)\in C$.

Without any lack of generality $U$ can be assumed to be relatively compact, and since $M$ is strongly causal, then there is no imprisoned light ray in $\overline{U}$ \cite[Prop.~6.4.7]{HE}.

Moreover, consider $\{ E_1\left(c\right), E_2\left(c\right), E_3\left(c\right) \}$, $c =(x,y)$ local coordinates for points in $C$, be the orthonormal frame on the local Cauchy surface $C$ used in the definition of the charts of section \ref{sec:charts} such that $E_2, E_3$ are tangent to $C$ and $E_1$ is timelike.

For a light ray $\gamma$ with coordinates $\psi\left(\gamma\right)=\left(c,\theta\right)$, define $\left\{\mathbf{E}_i\left(\gamma,\mathbf{t}\right)\right\}_{i=1,2,3}$ as the extension of the frame $\left\{E_i\left(c\right)\right\}_{i=1,2,3}$ by parallel transport to $\gamma\left(\mathbf{t}\right)$ along $\gamma$ with respect to the metric $\mathbf{g}$. 

The smooth dependence of the frames $\left\{\mathbf{E}_i\left(\gamma,\mathbf{t}\right)\right\}_{i=1,2,3}$ on $\left(\gamma,\mathbf{t}\right)$ follows from regular dependence on parameters of solutions of initial value problems of ODEs \cite[Ch.~5]{Ha}.

Now, it is possible to define the lightlike vector
\[
V\left(\gamma,\mathbf{t},s\right) = \mathbf{E}_1\left(\gamma,\mathbf{t}\right) + \cos\left(\theta + s \right)\mathbf{E}_2\left(\gamma,\mathbf{t}\right)+\sin\left(\theta + s \right)\mathbf{E}_3\left(\gamma,\mathbf{t}\right)\in \mathbb{N}
\]
depending smoothly on $\left(\gamma,\mathbf{t}\right)$ and let us denote its corresponding line by
\[
\Lambda\left(\gamma,\mathbf{t},s\right)=\left[V\left(\gamma,\mathbf{t},s\right)\right] = \mathrm{span}\{V\left(\gamma,\mathbf{t},s\right)\}\in \mathbb{PN}  .
\]
Using the maps $\sigma$ and $\varepsilon$ and the canonical projections $p_1:\mathcal{N}\times \left(-1,1\right)\rightarrow \mathcal{N}$ and $p_2:\mathcal{N}\times \left(-1,1\right)\rightarrow \left(-1,1\right)$, we can define the differentiable maps
\begin{equation}\label{eq-def-distribution}
\begin{tabular}{c}
$\widetilde{X}\left(\gamma,\mathbf{t},s\right)=\sigma\left( \Lambda\left(\gamma,\mathbf{t},s\right)  \right) \in \widetilde{\mathcal{N}}$ \\
$X\left(\gamma,\mathbf{t},s\right)= p_1 \circ \varepsilon^{-1}\left(\widetilde{X}\left(\gamma,\mathbf{t},s\right) \right) \in \mathcal{N}$ \\
$\tau\left(\gamma,\mathbf{t},s\right)=p_2 \circ \varepsilon^{-1}\left(\widetilde{X}\left(\gamma,\mathbf{t},s\right) \right) \in \left(-1,1\right)$
\end{tabular}
\end{equation}
where, for fixed $\left(\gamma,\mathbf{t}\right)\in\mathcal{N}_{U}\times\left(-1,1\right)$, the curve $X_{\left(\gamma,\mathbf{t}\right)}\left(s\right)=X\left(\gamma,\mathbf{t},s\right)$ describes the segment of the sky of $\gamma\left(\mathbf{t}\right)$ intersecting the neighbourhood $U\subset M$ (see Fig. \ref{diapositiva1}), the function $\tau_{\left(\gamma,\mathbf{t}\right)}\left(s\right)=\tau\left(\gamma,\mathbf{t},s\right)$ corresponds to the value of the parameter at $\gamma\left(\mathbf{t}\right)$ from $C$ along the light ray $X_{\left(\gamma,\mathbf{t}\right)}\left(s\right)$; and $\widetilde{X}_{\left(\gamma,\mathbf{t}\right)}\left(s\right)=\widetilde{X}\left(\gamma,\mathbf{t},s\right)$ is a curve of lines of Jacobi fields tangent to their corresponding light ray $X_{\left(\gamma,\mathbf{t}\right)}\left(s\right)$ at the point $\gamma\left(\mathbf{t}\right)$.

Then the family of light rays $X(\gamma, \mathbf{t},s)$ is the biparametric variation we are looking for:
\[
\gamma_{\left(\mathbf{t},s\right)}= X\left(\gamma,\mathbf{t},s\right)\in \mathcal{N} \, .
\] 
Moreover, for fixed $\left(\gamma,\mathbf{t}\right)$, we define the curves $V_{\left(\gamma,\mathbf{t}\right)}:\left[0,2\pi\right)\rightarrow \mathbb{N}_{\gamma\left(\mathbf{t}\right)}$ by $V_{\left(\gamma,\mathbf{t}\right)}\left(s\right)=V\left(\gamma,\mathbf{t}, s \right)$ and $\Lambda_{\left(\gamma,\mathbf{t}\right)}:\left[0,2\pi\right)\rightarrow \mathbb{PN}_{\gamma\left(\mathbf{t}\right)}$ by $\Lambda_{\left(\gamma,\mathbf{t}\right)}\left(s\right)=\Lambda\left(\gamma,\mathbf{t}, s \right)$. 
Then we have that $\varepsilon^{-1}\circ\sigma\left(\Lambda_{\left(\gamma,\mathbf{t}\right)}\left(s\right)\right)=\left(\gamma_{\left(\mathbf{t},s\right)}, \tau_{\left(\gamma,\mathbf{t}\right)}\left(s\right)\right)$ and so $\widetilde{\gamma}_{\left(\mathbf{t},s\right)}\left(\tau_{\left(\gamma,\mathbf{t}\right)}\left(s\right)\right)=\widetilde{X}_{\left(\gamma,\mathbf{t}\right)}\left(s\right)$. 
The following diagram, Fig. \ref{diagram-distrib-D-2}, illustrates these relations.

\begin{figure}[h]
\centering

\begin{tikzpicture}[every node/.style={midway}]
\matrix[column sep={8em,between origins},
        row sep={2em}] at (0,0)
{ ; &  \node(PN)   { $\mathbb{PN}_{\gamma\left(\mathbf{t}\right)}$}  ; & ; \\
 \node(Is) {$\left[0,2\pi\right)$} ; & ; & \node(St)  { $\widetilde{S}\left(\gamma\left(\mathbf{t}\right)\right)\subset\widetilde{\mathcal{N}}$ } ; \\
  ;  & \node(It) {$\left(-1,1\right)$} ; &  ; \\
 \node(N)   { $\mathcal{N}$}; & ; &  \node(N-It)   { $\mathcal{N}\times \left(-1,1\right)$} ;  \\} ;
\draw[->] (Is) -- (PN) node[anchor=south east]  {$\Lambda_{\left(\gamma,\mathbf{t}\right)}$};
\draw[->] (PN) -- (St) node[anchor=south west]  {$\sigma$};
\draw[->] (Is) -- (N) node[anchor=east] {$X_{\left(\gamma,\mathbf{t}\right)}$};
\draw[->] (N-It)  -- (N) node[anchor=north] {$p_1$};
\draw[->] (St)   -- (N-It) node[anchor=west] {$\varepsilon^{-1}$};
\draw[->] (Is)   -- (It) node[anchor=north east] {$\tau_{\left(\gamma,\mathbf{t}\right)}$}; 
\draw[->] (Is)   -- (St) node[anchor=south] {$\widetilde{X}_{\left(\gamma,\mathbf{t}\right)}$}; 
\draw[->] (N-It)   -- (It) node[anchor=south west] {$p_2$};
\end{tikzpicture}
\caption{ Diagram summarizing the relations between the maps $X_{(\gamma,\mathbf{t})}$,  $\widetilde{X}_{(\gamma,\mathbf{t})}$ and $\Lambda_{(\gamma,\mathbf{t})}$, and the structural maps $\sigma$ and $\epsilon$.}

\label{diagram-distrib-D-2}
\end{figure}

It can be observed that, since $\widetilde{\gamma}_{\left(\mathbf{t},s\right)}\left(\tau_{\left(\gamma,\mathbf{t}\right)}\left(s\right)\right)\in \widetilde{S}\left(\gamma\left(\mathbf{t}\right)\right) = \sigma (\mathbb{PN}_{\gamma(\mathbf{t})})$ then there exists a light ray $\mu\in S\left(\gamma\left(\mathbf{t}\right)\right)$ such that 
\[
T_{\gamma_{\left(\mathbf{t},s\right)}}S\left(\gamma_{\left(\mathbf{t},s\right)}\left(\tau_{\left(\gamma,\mathbf{t}\right)}\left(s\right)\right)\right)=T_{\mu}S\left(\gamma\left(\mathbf{t}\right)\right)
\]
hence $\mu=\gamma_{\left( \mathbf{t},s \right)}$.
Due to $M$ is light non--conjugate, then
\[
S\left(\gamma_{\left(\mathbf{t},s\right)}\left(\tau_{\left(\gamma,\mathbf{t}\right)}\left(s\right)\right)\right)= S\left(\gamma\left(\mathbf{t}\right)\right)
\]
and because $M$ distinguishes skies, then we obtain the following equation
\begin{equation}\label{eq-gamma-tau}
\gamma_{\left(\mathbf{t},s\right)}\left( \tau\left(\gamma,\mathbf{t},s\right)\right)=\gamma\left(\mathbf{t}\right) \, .
\end{equation}  
Observe that for $s=0$ we have $X\left(\gamma,\mathbf{t},0\right)=\gamma_{\left(\mathbf{t},0\right)}=\gamma$ for all $\mathbf{t}\in\left(-1,1\right)$ and hence $\tau\left(\gamma,\mathbf{t},0\right)=\mathbf{t}$.

Now, we will change the parameter $s$ to a more adequate one in some neighbourhood of the previously fixed light ray $\gamma_0$.  
For the auxiliary metric $\mathbf{g}\in \mathcal{C}$ in $M$, we consider the curves $c_{\left(\gamma,\mathbf{t}\right)}\left(s\right)= X\left(\gamma,\mathbf{t},s\right)\cap C \in C\subset U$ (see Fig. \ref{diapositiva1}). 
Since $C$ is a differentiable spacelike hypersurface, then the restriction $\left.\mathbf{g}\right|_{TC\times TC}$ is a Riemannian metric on $C$ and therefore we can parametrize the curves $c_{\left(\gamma,\mathbf{t}\right)}$ with the arc length parameter $\mathbf{s}$ defined in $C$ by the restriction of $\mathbf{g}$. 
If we take a Jacobi field $J_{\left(\gamma,\mathbf{t},s\right)}\in \widetilde{X}\left(\gamma,\mathbf{t},s\right)$, because $M$ is light non--conjugate, then
\[
0\,  (\mathrm{mod} \gamma'_{\left(\mathbf{t},s\right)}\left(0\right)) \neq J_{\left(\gamma,\mathbf{t},s\right)}\left(0\right)= \frac{d c_{\left(\gamma,\mathbf{t}\right)}\left(s\right)}{d s}= c'_{\left(\gamma,\mathbf{t}\right)}\left(s\right)
\] 
and therefore, $c_{\left(\gamma,\mathbf{t}\right)}$ is a regular curve and there exist a differentiable function $s=h\left(\gamma,\mathbf{t},\mathbf{s}\right)$ which permits to change the parameter. 
Abusing of the notation, we will keep the names of the maps $\widetilde{X}\left(\gamma,\mathbf{t},\mathbf{s}\right)$, $X\left(\gamma,\mathbf{t},\mathbf{s}\right)$, $\tau\left(\gamma,\mathbf{t},\mathbf{s}\right)$ with this new variable $\mathbf{s}$.

\begin{figure}[h]
  \centering
    \includegraphics[scale=1]{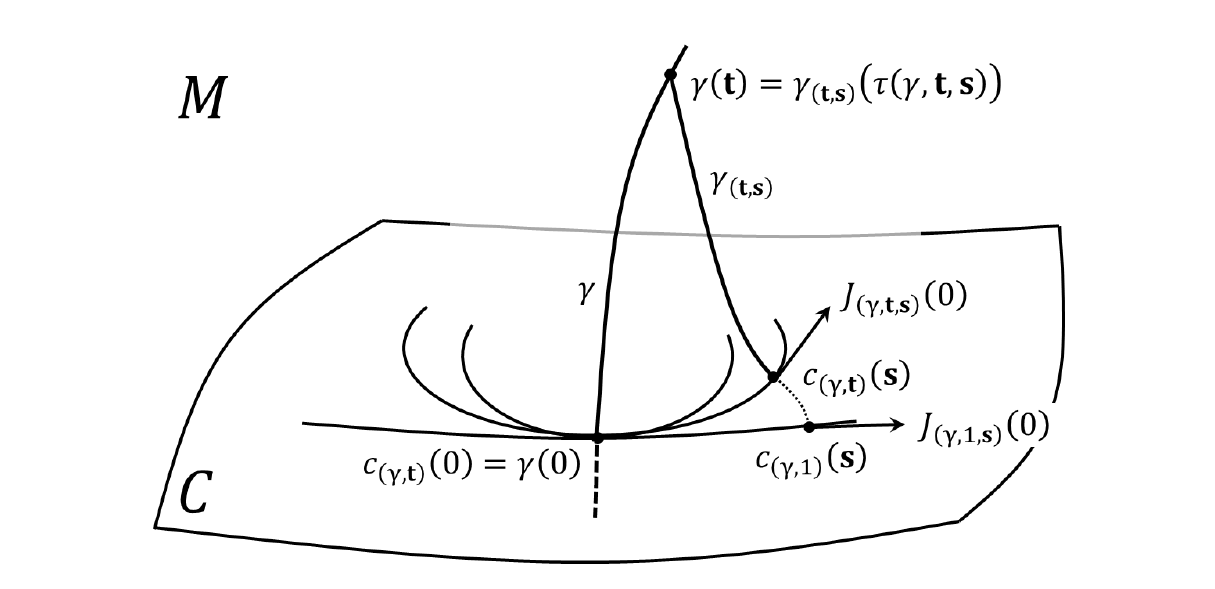}
  \caption{Illustration of the family $\gamma_{(\mathbf{t},s)}$ of light rays in the globally hyperbolic neighborhood $U$ determined by a local Cauchy surface $C$.  The trace of the sky $S(\gamma(\mathbf{t}))$ consists on the segments defined by the curves $\gamma_{(\mathbf{t},s)} = X(\gamma, \mathbf{t},s)$.  The curves $c_{\left(\gamma,\mathbf{t}\right)}$ are the traces in $C\subset M$ of $\gamma_{(\mathbf{t},s)}$.}
  \label{diapositiva1}
\end{figure}

Let us denote by $B\left(p, r\right)\subset C$ the ball centered at $p\in C$ of radius $r>0$ related to the metric $\left.\mathbf{g}\right|_{TC\times TC}$. 
Choose $\epsilon>0$ such that $B\left(\gamma_0 \cap C, 2\epsilon\right)\subset C$ and define 
\[
\mathcal{N}^{\epsilon}_{U}=\left\{ \gamma\in\mathcal{N}_{U}: \gamma \cap C \in B\left(\gamma_0  \cap C, \epsilon\right)\right\}  .
\]
Since $\mathbf{s}$ is the arc length parameter of $c_{\left(\gamma,\mathbf{t}\right)}$, then $c_{\left(\gamma,\mathbf{t}\right)}\left(\mathbf{s}\right)\in B\left(\gamma_0  \cap C, 2\epsilon\right)$ for any $\left(\gamma,\mathbf{t},\mathbf{s}\right)\in \mathcal{N}^{\epsilon}_{U}\times\left(-1,1\right)\times \left(-\epsilon,\epsilon\right)$ then, it is possible to restrict the maps $\widetilde{X}$, $X$ and $\tau$ as
\[
\begin{tabular}{l}
$\widetilde{X}:\mathcal{N}^{\epsilon}_{U}\times\left(-1,1\right)\times \left(-\epsilon,\epsilon\right)\longrightarrow \widetilde{\mathcal{N}}_{U}$ \\
$X: \mathcal{N}^{\epsilon}_{U}\times\left(-1,1\right)\times \left(-\epsilon,\epsilon\right)\longrightarrow  \mathcal{N}_{U}$ \\
$\tau: \mathcal{N}^{\epsilon}_{U}\times\left(-1,1\right)\times \left(-\epsilon,\epsilon\right)\longrightarrow \left(-1,1\right)$
\end{tabular}
\]

By construction, the maps in (\ref{eq-def-distribution}) satisfy
\begin{equation} \label{eq-Xtilde-X}
\widetilde{X}\left(\gamma,\mathbf{t},\mathbf{s}\right)=\varepsilon\left(X\left(\gamma,\mathbf{t},\mathbf{s}\right), \tau\left(\gamma,\mathbf{t},\mathbf{s}\right) \right)
\end{equation}
and the curve $\widetilde{X}_{\left(\gamma,\mathbf{t}\right)}$ describes the submanifold $\widetilde{S}\left(\gamma\left(\mathbf{t}\right)\right) \subset \widetilde{\mathcal{N}}$, in fact an orbit of the distribution $\mathcal{D}^{\sim}$. 
This implies that $\mathcal{D}^{\sim}$ is generated by the tangent vectors $\frac{\partial \widetilde{X}}{\partial \mathbf{s}}\left(\gamma,\mathbf{t},0\right) \in T\widetilde{\mathcal{N}}$, but we have
\begin{equation} \label{eq-partial-Xtilde-X}
\frac{\partial \widetilde{X}}{\partial \mathbf{s}}\left(\gamma,\mathbf{t},0\right)=\left(d\varepsilon\right)_{\left(\gamma,\mathbf{t}\right)}\left( \frac{\partial X}{\partial \mathbf{s}}\left(\gamma,\mathbf{t},0\right) , \frac{\partial \tau}{\partial \mathbf{s}}\left(\gamma,\mathbf{t},0\right) \right) .
\end{equation} 

\begin{lemma}\label{lema-f-borde}
$\displaystyle{ \frac{\partial \tau}{\partial s}\left(\gamma,\mathbf{t},0\right)=0  }$ for all $\left(\gamma,\mathbf{t}\right) \in \mathcal{N}^{\epsilon}_{U}\times\left(-1,1\right)$.
\end{lemma}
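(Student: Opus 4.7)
First, I would differentiate the identity (\ref{eq-gamma-tau}), $\gamma_{(\mathbf{t}, s)}(\tau(\gamma, \mathbf{t}, s)) = \gamma(\mathbf{t})$, with respect to $s$ at $s=0$. Since the right side is constant, the chain rule yields
\[
\partial_s \gamma_{(\mathbf{t}, s)}(\mathbf{t})\big|_{s=0} \, + \, \frac{\partial \tau}{\partial s}(\gamma, \mathbf{t}, 0)\, \gamma'(\mathbf{t}) \, = \, 0
\]
in $T_{\gamma(\mathbf{t})}M$. Writing $J(\tau) = \partial_s \gamma_{(\mathbf{t}, s)}(\tau)|_{s=0}$ for the associated variation vector, we conclude that $J(\mathbf{t})$ is a scalar multiple of $\gamma'(\mathbf{t})$, which is in any case forced by the fact that $\langle J \rangle \in T_\gamma S(\gamma(\mathbf{t})) = \widetilde\gamma(\mathbf{t})$. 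The lemma reduces to showing that the specific representative $J$ produced by our construction satisfies $J(\mathbf{t}) = 0$ exactly, not just modulo $\gamma'(\mathbf{t})$.

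To pin this down, I would introduce the comparison variation $\widehat f(s, r) = \exp_{\gamma(\mathbf{t})}(r\, V(\gamma, \mathbf{t}, s))$, in which every null geodesic is $\mathbf{g}$-affinely parametrized from $\gamma(\mathbf{t})$ with prescribed initial tangent $V(\gamma, \mathbf{t}, s)$. Since $\widehat f(s, 0) = \gamma(\mathbf{t})$ is independent of $s$, the associated Jacobi field $\widehat J(r) = \partial_s \widehat f(0, r)|_{s=0}$ vanishes exactly at $r = 0$ and has $\widehat J'(0) = \partial_s V(\gamma, \mathbf{t}, 0) = -\sin\theta\, \mathbf{E}_2 + \cos\theta\, \mathbf{E}_3$, which is $\mathbf{g}$-orthogonal to $\gamma'(\mathbf{t}) = V(\gamma, \mathbf{t}, 0)$. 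By the constancy of $\mathbf{g}(\widehat J, \widehat\gamma')$ along null variations quoted after (\ref{eq-Jacobi-fields}), $\widehat J$ stays $\mathbf{g}$-orthogonal to $\gamma'$ throughout.

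The two parametrizations of the same family of null geodesics are linked by an affine reparametrization $\widehat f(s, r) = \gamma_{(\mathbf{t}, s)}(\Phi(s, r))$ with $\Phi(s, 0) = \tau(\gamma, \mathbf{t}, s)$ and $\partial_r \Phi(s, 0) = 1/\lambda(s)$, where $\lambda(s)$ is the scaling in $\gamma_{(\mathbf{t}, s)}'(\tau(s)) = \lambda(s)\, V(\gamma, \mathbf{t}, s)$. A direct chain-rule expansion of $\widehat J(0) = 0$ merely recovers the identity from the first paragraph, so the comparison by itself is insufficient; one must feed in the $\Omega(C)$-normalization $\mathbf{g}(\gamma_{(\mathbf{t}, s)}'(0), T) = -1$ of the original parametrization from the Cauchy surface. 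Evaluating $\widehat J$ at the affine parameter where $\widehat f(s, \cdot)$ crosses $C$, and matching with the explicit initial datum $J(0) = c'_{(\gamma, \mathbf{t})}(0) \in T_p C$ (using the arc-length $\mathbf{s}$ so that $|c'_{(\gamma, \mathbf{t})}(0)| = 1$), produces a second relation that, combined with the first, forces the $\gamma'(\mathbf{t})$-component of $J(\mathbf{t})$ to vanish and hence $\tau'(0) = 0$.

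The main obstacle is the bookkeeping between the projective parametrization of $\gamma_{(\mathbf{t}, s)}$ (which defines $\tau$) and the two $\mathbf{g}$-affine parametrizations---from $\gamma(\mathbf{t})$ via the exponential and from $C$ via $\Omega(C)$---which are needed for the Jacobi field analysis; controlling the scaling $\lambda(s)$ to first order and exploiting the parallel-transport construction of $\{\mathbf{E}_i\}$ along $\gamma$ that makes $V(\gamma, \mathbf{t}, 0) = \gamma'(\mathbf{t})$ hold exactly in the $\mathbf{g}$-affine frame is where the technical work concentrates.
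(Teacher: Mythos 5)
Your first paragraph reproduces the paper's own argument: differentiating (\ref{eq-gamma-tau}) in $s$ gives $J\left(\mathbf{t}\right)+\frac{\partial \tau}{\partial s}\left(\gamma,\mathbf{t},0\right)\gamma'\left(\mathbf{t}\right)=0$, so everything reduces to showing that the variation field of the family $s\mapsto\gamma_{\left(\mathbf{t},s\right)}$, \emph{in the parametrization that defines $\tau$}, vanishes exactly at $\gamma\left(\mathbf{t}\right)$ and not merely modulo $\gamma'$. The paper closes that step directly, by reading $\frac{\partial f}{\partial \mathbf{s}}\left(\gamma,\mathbf{t},0,\cdot\right)$ as the Jacobi representative $J_{\left(\gamma,\mathbf{t},0\right)}$ of the class $\widetilde{X}\left(\gamma,\mathbf{t},0\right)$ and evaluating it at $\gamma\left(\mathbf{t}\right)$. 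You correctly single out this exact vanishing as the real content of the lemma, but your proposal never actually establishes it.

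The gap sits in your third paragraph: the announced ``second relation'' is never derived, and the tools you invoke cannot produce it. First, orthogonality arguments are blind to precisely the component you must control: $\gamma'$ is null, so adding any multiple of $\gamma'\left(\mathbf{t}\right)$ to $J\left(\mathbf{t}\right)$ leaves $\mathbf{g}\left(J,\gamma'\right)$ unchanged, hence the constancy of $\mathbf{g}\left(\widehat{J},\gamma'\right)$ can never force the $\gamma'$-component of $J\left(\mathbf{t}\right)$ to vanish. Second, and decisively, the parametrization defining $\tau$ is not the $\mathbf{g}$-affine one normalized in $\Omega\left(C\right)$: by (\ref{eq-def-distribution}), $\tau$ is read off through $\varepsilon^{-1}$, i.e.\ through the projective parameter of Prop.~\ref{prop-varepsilon}, fixed on each ray $\delta=\gamma_{\left(\mathbf{t},s\right)}$ by sending $\ominus_{\delta}\mapsto -1$, $T_{\delta}S\left(\delta\cap C\right)\mapsto 0$, $\oplus_{\delta}\mapsto 1$. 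In that parametrization the variation field is a Jacobi field only modulo $\gamma'$, so matching genuine Jacobi data (your $\widehat{J}$ from $\exp_{\gamma\left(\mathbf{t}\right)}$, the initial value $J\left(0\right)=c'_{\left(\gamma,\mathbf{t}\right)}\left(0\right)$, the normalization $\mathbf{g}\left(\cdot,T\right)=-1$ at $C$) pins down nothing about the multiple of $\gamma'$ you are after. Worse, $\tau\left(\gamma,\mathbf{t},s\right)$ is determined by where $T_{\gamma_{\left(\mathbf{t},s\right)}}S\left(\gamma\left(\mathbf{t}\right)\right)$ sits in $\mathbb{P}\left(\mathcal{H}_{\gamma_{\left(\mathbf{t},s\right)}}\right)$ relative to $\oplus_{\gamma_{\left(\mathbf{t},s\right)}}$ and $\ominus_{\gamma_{\left(\mathbf{t},s\right)}}$ --- asymptotic data of the neighbouring rays that your outline never touches --- so no relation built solely from affine data at $C$ and at $\gamma\left(\mathbf{t}\right)$ can determine $\frac{\partial \tau}{\partial s}$; different projective normalizations compatible with the same affine data give different values of it. Any completion must use the defining three-point normalization of $\mathbf{t}$ (equivalently the coordinate form of $\varepsilon$), and that is exactly the ingredient missing from your sketch.
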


\begin{proof}
If we consider the map
\[
f\left(\gamma,\mathbf{t},\mathbf{s},\tau\right)=\gamma_{\left(\mathbf{t},\mathbf{s}\right)}\left(\tau\right)
\]
then we have that
\[
\frac{\partial f}{\partial \mathbf{s}}\left(\gamma,\mathbf{t},0,\mathbf{t}\right) = J_{\left(\gamma,\mathbf{t},0\right)}\left(\mathbf{t}\right) = 0
\]
since it is the value of the Jacobi field $J_{\left(\gamma,\mathbf{t},0\right)}\in \widetilde{X}\left(\gamma,\mathbf{t},0\right)$ along $\gamma$ at the point $\gamma\left(\mathbf{t}\right)$, and moreover
\[
\frac{\partial f}{\partial \tau}\left(\gamma,\mathbf{t},0,\mathbf{t}\right) = \gamma'\left(\mathbf{t}\right)  .
\]

Now, defining
\[
\Psi\left(\gamma,\mathbf{t},\mathbf{s}\right)=f\left(\gamma,\mathbf{t},\mathbf{s},\tau\left(\gamma,\mathbf{t},\mathbf{s}\right)\right)
\]
then we have that the equation (\ref{eq-gamma-tau}) becomes
\[
\Psi\left(\gamma,\mathbf{t},\mathbf{s}\right)=\gamma\left(\mathbf{t}\right)
\]
and hence, since $\tau\left(\gamma,\mathbf{t},0\right)=\mathbf{t}$ 
\begin{align*}
\frac{\partial \Psi}{\partial \mathbf{s}}\left(\gamma,\mathbf{t},0\right)=0 &\Rightarrow \frac{\partial f}{\partial \mathbf{s}}\left(\gamma,\mathbf{t},0,\mathbf{t}\right) + \frac{\partial f}{\partial \tau}\left(\gamma,\mathbf{t},0,\mathbf{t}\right)\cdot \frac{\partial \tau}{\partial \mathbf{s}}\left(\gamma,\mathbf{t},0\right) = 0 \Rightarrow \\
& \Rightarrow J_{\left(\gamma,\mathbf{t},0\right)}\left(\mathbf{t}\right) + \gamma'\left(\mathbf{t}\right) \cdot \frac{\partial \tau}{\partial \mathbf{s}}\left(\gamma,\mathbf{t},0\right) = 0 \Rightarrow \\
& \Rightarrow 0 + \gamma'\left(\mathbf{t}\right) \cdot \frac{\partial \tau}{\partial \mathbf{s}}\left(\gamma,\mathbf{t},0\right) = 0 \Rightarrow \\
& \Rightarrow \frac{\partial \tau}{\partial \mathbf{s}}\left(\gamma,\mathbf{t},0\right)= 0
\end{align*}
as we want to prove.
  \end{proof}

After all these preparations we are ready to prove Thm. \ref{main_thm}.

\begin{proof} (Thm.  \ref{main_thm}, Main Theorem)
Because of Lemma \ref{lema-f-borde} and equations (\ref{eq-Xtilde-X}) and (\ref{eq-partial-Xtilde-X}), we have that the distribution $\mathcal{D}^{\sim}$ in $\widetilde{\mathcal{N}}$ can be defined at any $\widetilde{\gamma}\left(\mathbf{t}\right)\in\widetilde{\mathcal{N}}$ by 
\begin{equation}\label{eq-distrib-D}
\mathcal{D}^{\sim}_{\widetilde{\gamma}\left(\mathbf{t}\right)}= \mathrm{span}\left\{ \frac{\partial \widetilde{X}}{\partial \mathbf{s}}\left(\gamma,\mathbf{t},0\right) \right\} = \mathrm{span}\left\{ \left(d\varepsilon\right)_{\left(\gamma,\mathbf{t}\right)}\left( \frac{\partial X}{\partial \mathbf{s}}\left(\gamma,\mathbf{t},0\right) , 0 \right) \right\} .
\end{equation}

On the other hand, notice that  $\Gamma\left(s\right)$ is a integral curve of $\oplus:\mathcal{N}\rightarrow \mathbb{P}\left(\mathcal{H}\right)$ if $\Gamma'\left(s\right)\in \oplus_{\Gamma\left(s\right)}$.
So, the curve $\widetilde{\Gamma}\left(s\right)=\varepsilon\left(\Gamma\left(s\right),1\right)$ is a leaf of the distribution $\partial^{+}\mathcal{D}^{\sim}$ if $\Gamma'\left(s\right)\in \oplus_{\Gamma\left(s\right)}$, that is  
\begin{equation*}
\widetilde{\Gamma}'\left(s\right)=\left(d\varepsilon\right)_{\left(\Gamma\left(s\right),1\right)}\left( \Gamma'\left(s\right) , 0 \right)\in \partial^{+}\mathcal{D}^{\sim} \Longleftrightarrow  \Gamma'\left(s\right)\in \oplus_{\Gamma\left(s\right)}  
\end{equation*}
and therefore we have 
\begin{equation}\label{eq-distrib-Dborde}
\partial^{+}\mathcal{D}^{\sim}_{\widetilde{\gamma}\left(\mathbf{t}\right)} = \mathrm{span}\left\{ \left(d\varepsilon\right)_{\left(\gamma,1\right)}\left( \langle J\rangle , 0 \right) \right\} \text{ where } \langle J\rangle\in \oplus_{\Gamma\left(s\right)}  .
\end{equation}

In order to find a vector field in $\overline{\widetilde{\mathcal{N}}}$ defining the distribution $\overline{\mathcal{D}^{\sim}}=\mathcal{D}^{\sim} \cup \partial^{+}\mathcal{D}^{\sim}$, we can take a non--zero differentiable local section $\omega:\widetilde{\mathcal{U}}\subset\mathbb{P}\left(\mathcal{H}\right)\rightarrow \mathcal{H}$ at $\widetilde{\gamma}_0\left(1\right)\in\partial^{+}\widetilde{\mathcal{N}}$ by choosing representatives $\langle J_{\left(\gamma,\mathbf{t}\right)}\rangle\in \mathcal{H}_{\gamma}$ such that $J_{\left(\gamma,\mathbf{t}\right)}\left(0\right)\in T_{\gamma\left(0\right)}C$ such that $\mathbf{g}\left(J_{\left(\gamma,\mathbf{t}\right)}\left(0\right),J_{\left(\gamma,\mathbf{t}\right)}\left(0\right)\right)=1$.  

Since $\dim\left(T_{\gamma}S\left(\gamma\left(\mathbf{t}\right)\right)\right)=1$, we can choose two different representatives, selecting the one such that $J_{\left(\gamma,\mathbf{t}\right)}\left(0\right)= c'_{\left(\gamma,\mathbf{t}\right)}\left(0\right) $.
It is important to notice that these conditions determine the section $\omega$ without any condition on $J'_{\left(\gamma,\mathbf{t}\right)}\left(0\right)$, indeed, if $Y_1 , Y_2 \in \widetilde{\gamma}\left(\mathbf{t}\right)$ then $Y_1\left(\mathbf{t}\right)= Y_2\left(\mathbf{t}\right)=0~\left(\mathrm{mod}~\gamma'\right)$, and if moreover $Y_1$ and $Y_2$ are such that $Y_1\left(0\right)= Y_2\left(0\right)\left(\mathrm{mod}~\gamma'\right)$, hence the Jacobi field $K=Y_1-Y_2$ verifies $K\left(0\right)=0~\left(\mathrm{mod}~\gamma'\right)$ and $K\left(\mathbf{t}\right)=0~\left(\mathrm{mod}~\gamma'\right)$, and since $M$ is light non--conjugate, therefore $K=0~\left(\mathrm{mod}~\gamma'\right)$.

It is possible to assume, without any lack of generality, that $\widetilde{\mathcal{N}}^{\left(\epsilon,\delta\right)}_{U}=\varepsilon\left(\mathcal{N}^{\epsilon}_{U}\times \left(1-\delta, 1+\delta\right)\right)\subset \widetilde{\mathcal{U}}$ for some small enough $\delta>0$.   Indeed, by construction of $X\left(\gamma,\mathbf{t},\mathbf{s}\right)$, we have that, since the curve has been parametrized by arc length, then $J_{\left(\gamma,\mathbf{t}\right)}\left(0\right)= c'_{\left(\gamma,\mathbf{t}\right)}\left(0\right) $, and hence the section verifies  
\begin{equation}\label{omegaX}
\omega\left(\widetilde{\gamma}\left(\mathbf{t}\right)\right)=\langle J_{\left(\gamma,\mathbf{t}\right)}\rangle =\frac{\partial X}{\partial \mathbf{s}}\left(\gamma,\mathbf{t},0\right) 
\end{equation}
for all $\left(\gamma,\mathbf{t}\right)\in \mathcal{N}_{U}\times \left(1-\delta, 1\right)$.

Using the previous constructions we can define $\overline{\widetilde{\mathcal{N}}}_{U}=\varepsilon\left(\mathcal{N}^{\epsilon}_{U}\times \left(1-\delta, 1\right]\right)$ and the map  (see Fig. \ref{diapositiva2} for a graphical representation of the map $ \overline{\Phi}$):
\[
\begin{tabular}{rccl}
$\overline{\Phi}:$ & $\overline{\widetilde{\mathcal{N}}}_{U}\subset \overline{\widetilde{\mathcal{N}}}$ & $\rightarrow$ & $T_{\widetilde{\gamma}\left(\mathbf{t}\right)}\mathbb{P}\left(\mathcal{H}\right)$ \\
 & $\widetilde{\gamma}\left(\mathbf{t}\right)$ & $\mapsto$ & $\left(d\varepsilon\right)_{(\gamma,\mathbf{t})}\left( \omega\left(\widetilde{\gamma}\left(\mathbf{t}\right)\right) , 0\right)$
\end{tabular}
\]
which is clearly differentiable by composition of differentiable maps.

Now, let us see that $\overline{\Phi}$ defines $\overline{\mathcal{D}^{\sim}}$. 
Then, by equations (\ref{eq-distrib-D}) and (\ref{omegaX}), we have that 
\[
\mathcal{D}^{\sim}_{\widetilde{\gamma}\left(\mathbf{t}\right)}=\mathrm{span}\left\{ \overline{\Phi}\left(\widetilde{\gamma}\left(\mathbf{t}\right)\right) \right\}  
\]
for all  $\left(\gamma,\mathbf{t}\right)\in \mathcal{N}_{U}\times \left(1-\delta, 1\right)$.
Moreover, since $\omega$ is a non--zero local section and recalling that $\widetilde{\gamma}\left(\tau\right)=T_{\gamma}S\left(\gamma\left(\tau\right)\right)$ and $\oplus_{\gamma}=\lim_{\mathbf{t}\mapsto 1}\widetilde{\gamma}\left(\mathbf{t}\right)$, then we have that, for $\mathbf{t}=1$ (see Fig. \ref{diapositiva2}),
\[
\omega\left(\widetilde{\gamma}\left(1\right)\right)\in \oplus_{\gamma}
\] 
whence, using equation (\ref{eq-distrib-Dborde}), we obtain
\[
\partial^{+}\mathcal{D}^{\sim}_{\widetilde{\gamma}\left(1\right)}=\mathrm{span}\left\{ \overline{\Phi}\left(\widetilde{\gamma}\left(1\right)\right) \right\}  
\]
for $\gamma\in \mathcal{N}_{U}$.  Clearly, an analogous construction can be done for $\partial^-\mathcal{D}^\sim$.
So, we have 
\[
\overline{\mathcal{D}^{\sim}}=\mathrm{span}\left\{ \overline{\Phi}\left(\widetilde{\gamma}\left(\mathbf{t}\right)\right) : \mathbf{t}\in [-1,1]  \right\}
\]
and the distribution  $\overline{\mathcal{D}^{\sim}}$ is a differentiable extension of $\mathcal{D}^{\sim}$.
 \end{proof}

\begin{figure}[h]
  \centering
    \includegraphics[scale=1]{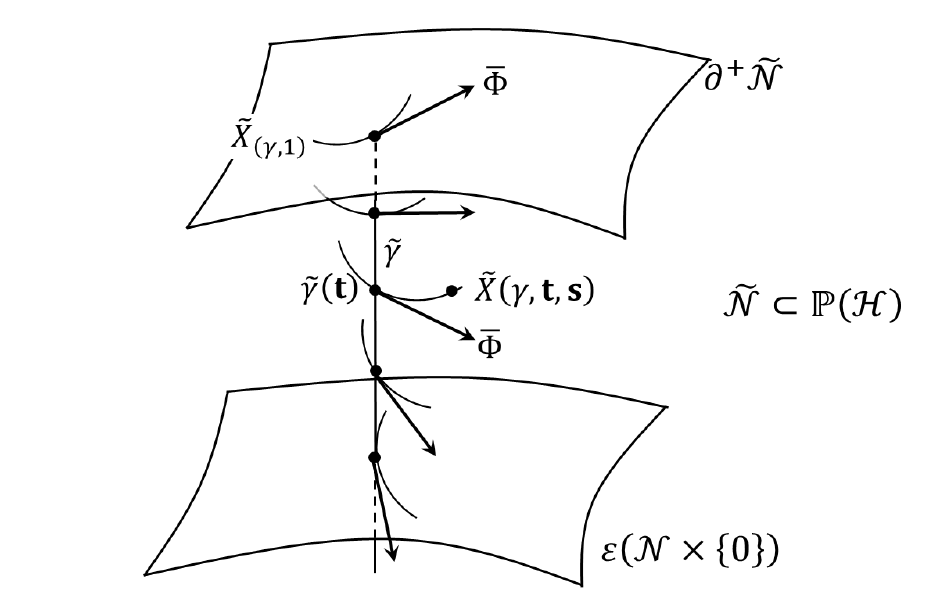}
  \caption{The vector field $\overline{\Phi}$ in $\overline{\widetilde{\mathcal{N}}}$.}
  \label{diapositiva2}
\end{figure}

\begin{corollary}\label{Corollary-ext} If the orbits of the regular distribution $\oplus$ are compact, then
the quotient $\overline{\widetilde{\mathcal{N}}} / \overline{\mathcal{D}^{\sim}}$ is a differentiable manifold with boundary  $ \partial^{+}\widetilde{\mathcal{N}} / \partial^{+}\mathcal{D}^{\sim}  \cup \partial^{-}\widetilde{\mathcal{N}} / \partial^{-}\mathcal{D}^{\sim}$.
Moreover, if $\overline{M} = M \cup \partial M $ where 
\[
\partial M =\partial^{+} M \cup \partial^{-}M = \partial^{+}\widetilde{\mathcal{N}} / \partial^{+}\mathcal{D}^{\sim}  \cup \partial^{-}\widetilde{\mathcal{N}} / \partial^{-}\mathcal{D}^{\sim} 
\]
then the diffeomorphism $\widetilde{S}:M\rightarrow \widetilde{\mathcal{N}} / \mathcal{D}^{\sim}$ of equation (\ref{diffeo-S}) can be extended by 
\[
\widetilde{S}:\overline{M}\rightarrow \overline{\widetilde{\mathcal{N}}} / \overline{\mathcal{D}^{\sim}}
\]
such that $\left.\widetilde{S}\right|_{\partial M}$ is the identity map, inducing in $\partial M$ a differentiable structure such that the extension $\widetilde{S}$ is a diffeomorphism.
\end{corollary}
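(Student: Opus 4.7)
The plan is to promote $\overline{\mathcal{D}^{\sim}}$ from a smooth distribution to a foliation with well-behaved leaf space, and then pull the resulting structure back to $\overline{M}$ via an extension of $\widetilde{S}$ that is the identity on the boundary.

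First I would record what the Main Theorem already gives: $\overline{\mathcal{D}^{\sim}} = \mathcal{D}^{\sim}\cup\partial^{+}\mathcal{D}^{\sim}\cup\partial^{-}\mathcal{D}^{\sim}$ is a smooth $1$-dimensional distribution on the manifold with boundary $\overline{\widetilde{\mathcal{N}}}$, and by construction it is tangent to each boundary component $\partial^{\pm}\widetilde{\mathcal{N}}$, since $\partial^{\pm}\mathcal{D}^{\sim}$ was defined as a distribution on $\partial^{\pm}\widetilde{\mathcal{N}}$ itself via the diffeomorphisms $\oplus,\ominus\colon\mathcal{N}\to\partial^{\pm}\widetilde{\mathcal{N}}$. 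Being $1$-dimensional, $\overline{\mathcal{D}^{\sim}}$ is automatically involutive, so it defines a regular foliation of $\overline{\widetilde{\mathcal{N}}}$ tangent to the boundary. Every leaf is compact: the interior leaves are $\widetilde{S}(x)=\widetilde{\sigma}(\mathbb{PN}_x)$, diffeomorphic to the projectivised light cone $\mathbb{P}(\mathcal{H}_\gamma)\cong\mathbb{S}^{1}$ since $\dim M=3$, while the boundary leaves are compact by the standing hypothesis on the orbits of $\oplus$ (and, by the symmetric argument, of $\ominus$).

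Next I would build local product charts for the quotient around every compact leaf. For an interior leaf the fibration $\widetilde{\pi}\colon\widetilde{\mathcal{N}}\to\widetilde{\mathcal{N}}/\mathcal{D}^{\sim}$ together with (\ref{diffeo-S}) already supplies such a chart, so the only real work is at a boundary leaf $L\subset\partial^{+}\widetilde{\mathcal{N}}$. Inside $\partial^{+}\widetilde{\mathcal{N}}$, the restricted distribution $\partial^{+}\mathcal{D}^{\sim}$ is a regular $1$-dimensional distribution with compact leaves, so a generalised Reeb stability argument yields a tubular neighbourhood $L\times W\subset\partial^{+}\widetilde{\mathcal{N}}$ on which the restricted foliation is trivial. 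I would then propagate this trivialisation transversally into $\overline{\widetilde{\mathcal{N}}}$ using the flow of the vector field $\overline{\Phi}$ from the proof of Theorem \ref{main_thm}: since $\overline{\Phi}$ spans $\overline{\mathcal{D}^{\sim}}$ near $\partial^{+}\widetilde{\mathcal{N}}$ and the projective parameter $\mathbf{t}$ is transverse to the boundary, composition with the $\mathbf{t}$-flow identifies a collar neighbourhood of $L$ in $\overline{\widetilde{\mathcal{N}}}$ with $L\times W\times(1-\delta,1]$, in which the leaves of $\overline{\mathcal{D}^{\sim}}$ are the slices $L\times\{w\}\times\{\mathbf{t}\}$. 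The transversals $\{l_{0}\}\times W\times(1-\delta,1]$ furnish the required boundary charts for the leaf space. An identical construction handles $\partial^{-}\widetilde{\mathcal{N}}$, and together these ensure that $\overline{\widetilde{\mathcal{N}}}/\overline{\mathcal{D}^{\sim}}$ is Hausdorff with a smooth manifold-with-boundary structure whose boundary is exactly $\partial^{+}\widetilde{\mathcal{N}}/\partial^{+}\mathcal{D}^{\sim}\cup\partial^{-}\widetilde{\mathcal{N}}/\partial^{-}\mathcal{D}^{\sim}$.

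Finally, to extend $\widetilde{S}$, I would define $\widetilde{S}|_{M}$ as in (\ref{diffeo-S}) and $\widetilde{S}|_{\partial M}$ as the identity, where $\partial M$ is defined precisely as the quotient of the boundary, so bijectivity is tautological. Pulling back the atlas of the previous paragraph through $\widetilde{S}^{-1}$ endows $\overline{M}$ with a smooth structure; this structure is unique as an extension of the given smooth structure on $M$, because $M$ is open and dense in $\overline{M}$ and $\widetilde{S}|_{M}$ is already a diffeomorphism onto the interior of the quotient. The hard part of this plan is the third paragraph: quotients of foliations with compact leaves may fail to be Hausdorff when holonomy is nontrivial, so the crucial point is that the smoothness of $\overline{\mathcal{D}^{\sim}}$ across the boundary provided by Theorem \ref{main_thm}, combined with the compactness of the $\oplus$- and $\ominus$-orbits, is what trivialises the holonomy and yields the collared product structure needed for the leaf space to be a genuine smooth manifold with boundary.
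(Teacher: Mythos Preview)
Your proposal follows essentially the same route as the paper, which simply invokes regularity of $\overline{\mathcal{D}^{\sim}}$ (smoothness from the Main Theorem together with compactness of all leaves) to conclude that the quotient is a smooth manifold with boundary, and then pulls the structure back to $\overline{M}$ via $\widetilde{S}$; your Reeb-stability and collar construction is a more explicit unpacking of the step the paper treats as standard. One small slip: the interior leaves are $\sigma(\mathbb{PN}_x)\cong\mathbb{PN}_x\cong\mathbb{S}^{1}$, not $\mathbb{P}(\mathcal{H}_\gamma)$ (though in dimension $3$ both are circles, so compactness is unaffected).
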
 

\begin{proof}
The main Theorem, Thm. \ref{main_thm}, shows that $\overline{\mathcal{D}^{\sim}}$ is a differentiable distribution. 
Now, observe that $\partial^{\pm}\mathcal{D}^{\sim}$ are regular distribution by hypothesis, and $\mathcal{D}^{\sim}$ is also a regular distribution because its orbits are compact, then because the orbits of $\partial^{\pm}\mathcal{D}^{\sim}$ are assumed to be compact too, then $\overline{\mathcal{D}^{\sim}}$ is also a regular distribution. 
Then, trivially, the quotient $\overline{\widetilde{\mathcal{N}}} / \overline{\mathcal{D}^{\sim}}$ is a differentiable manifold.  
Because $\partial^{\pm}\widetilde{\mathcal{N}}$ is the boundary of $\overline{\widetilde{\mathcal{N}}}$, then $ \partial^{\pm}\widetilde{\mathcal{N}} / \partial^{\pm}\mathcal{D}^{\sim}$ is the boundary of $\overline{\widetilde{\mathcal{N}}} / \overline{\mathcal{D}^{\sim}}$.

Moreover, since $\widetilde{S}$ restricted to $M$ is a diffeomorphism and $\overline{\widetilde{\mathcal{N}}} / \overline{\mathcal{D}^{\sim}}$ is a differentiable manifold, then there exists a differentiable structure in $\overline{M}$, compatible with the one in $M$, such that the extension $\widetilde{S}$ is a diffeomorphism. 
  \end{proof}

\begin{remark}\label{M3atob} Notice that the compactness assumption on the orbits of the boundary distributions $\oplus$ (respect. $\ominus$) is a natural one as they represent the skies at infinity.    Actually, this is exactly the situation that will happen if the spacetime $M$ would possess a compact Cauchy surface $C$ (as in the FRW cosmological models).  Notice that in such case the space of light rays $\mathcal{N}$ will be isomorphic to $C \times \mathbb{S}^1$, hence compact.  Then if $M$ is a L-spacetime, because of Thm. \ref{main_thm}, the total distribution $\overline{\mathcal{D}^\sim}$ will be regular, hence their leaves will be closed, but because the future component of the boundary $\partial^+\widetilde{\mathcal{N}}$ (respec., the past component) is diffeomorphic to $\mathcal{N}$, Prop. \ref{prop-Low-boundary}, then the leaves of the total distribution at the boundary will be compact and the conclusion of Cor. \ref{Corollary-ext} will hold.
\end{remark}

The extension $\overline{M}$ of $M$ of Corollary \ref{Corollary-ext} will be called \emph{the canonical extension of $\left(M,\mathcal{C}\right)$} and $\partial^{+} M$ and $\partial^{-} M$ are the boundaries toward the future and past of the light rays respectively.

\begin{definition}
We will say that a spacetime $M$ is a \emph{proper L-spacetime} if it is a L--spacetime such that the total smooth distribution $\overline{\mathcal{D}^\sim}$ is regular with Hausdorff space of leaves. 
\end{definition}

Notice that, the same argument used in the proof of Cor. \ref{Corollary-ext}  shows that if $M$ is a proper L-spacetime, the L-boundary $\partial M = \partial^{+} M \cup \partial^{-}M = \partial^{+}\widetilde{\mathcal{N}} / \partial^{+}\mathcal{D}^{\sim}  \cup \partial^{-}\widetilde{\mathcal{N}} / \partial^{-}\mathcal{D}^{\sim}$ defines a smooth boundary for the manifold $\overline{M} = M \cup \partial M$.  Moreover the assumption that the quotient space of the regular distribution $\overline{\mathcal{D}^\sim}$ is Hausdorff guarantess that $\overline{M}$ is Hausdorff. Then we obtain the following consequence.

\begin{corollary}\label{Corollary-ext-2}
If $M$ is a proper L--spacetime then the conclusion of Cor. \ref{Corollary-ext} holds, that is, the canonical extension $\overline{M}$ of $\left(M,\mathcal{C}\right)$ exists.
\end{corollary}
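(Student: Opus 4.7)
The plan is to rerun the argument of Corollary \ref{Corollary-ext} while replacing the compactness assumption on the orbits of $\oplus$ and $\ominus$ by the hypotheses that are now built into the definition of a proper L--spacetime. Since the orbit compactness is used in Cor. \ref{Corollary-ext} only to secure two things, namely (i) the regularity of $\overline{\mathcal{D}^{\sim}}$ and (ii) the Hausdorffness of its leaf space, both of which are assumed outright in the definition of proper L--spacetime, the argument goes through essentially verbatim.

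First I would invoke Thm. \ref{main_thm} to obtain the smooth distribution $\overline{\mathcal{D}^{\sim}} = \mathcal{D}^{\sim} \cup \partial^{+}\mathcal{D}^{\sim} \cup \partial^{-}\mathcal{D}^{\sim}$ on the manifold with boundary $\overline{\widetilde{\mathcal{N}}}$, noting that by construction $\partial^{\pm}\mathcal{D}^{\sim}$ is tangent to the boundary component $\partial^{\pm}\widetilde{\mathcal{N}}$, so $\overline{\mathcal{D}^{\sim}}$ is a smooth distribution on a manifold with boundary whose leaves stratify into interior leaves (contained in $\widetilde{\mathcal{N}}$) and boundary leaves (contained in $\partial^{\pm}\widetilde{\mathcal{N}}$). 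The hypothesis that $M$ is a proper L--spacetime directly provides that this total distribution is regular and has Hausdorff space of leaves.

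Next I would apply the standard quotient theorem for regular foliations with Hausdorff leaf space to conclude that $\overline{\widetilde{\mathcal{N}}}/\overline{\mathcal{D}^{\sim}}$ is a smooth Hausdorff manifold with boundary and that the canonical projection is a submersion; because $\overline{\mathcal{D}^{\sim}}$ preserves the stratification into interior and boundary, the boundary of this quotient is exactly
\[
\partial^{+}\widetilde{\mathcal{N}}/\partial^{+}\mathcal{D}^{\sim} \cup \partial^{-}\widetilde{\mathcal{N}}/\partial^{-}\mathcal{D}^{\sim},
\]
which is the prescribed $\partial M = \partial^{+} M \cup \partial^{-} M$. Then, exactly as in Corollary \ref{Corollary-ext}, I would extend the diffeomorphism $\widetilde{S} \colon M \to \widetilde{\mathcal{N}}/\mathcal{D}^{\sim}$ of (\ref{diffeo-S}) to $\overline{M} = M \cup \partial M$ by declaring $\widetilde{S}|_{\partial M}$ to be the natural identification of $\partial^{\pm} M$ with $\partial^{\pm}\widetilde{\mathcal{N}}/\partial^{\pm}\mathcal{D}^{\sim}$; this transports the smooth manifold-with-boundary structure from $\overline{\widetilde{\mathcal{N}}}/\overline{\mathcal{D}^{\sim}}$ to $\overline{M}$, producing the canonical extension and making $\widetilde{S}$ a diffeomorphism by construction.

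The only real subtlety, and what would have been the main obstacle without the explicit assumptions of the definition, is that a smooth distribution on a manifold with boundary need not descend to a manifold quotient unless its leaves are closed embedded submanifolds whose quotient space is Hausdorff; in Cor. \ref{Corollary-ext} this was obtained from compactness of the boundary orbits, whereas here it is precisely the content of the properness assumption. So the proof reduces to verifying that the definition of proper L--spacetime is exactly what is needed to re-run the argument, which is immediate.
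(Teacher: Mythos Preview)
Your proposal is correct and follows essentially the same approach as the paper: the paper simply observes (in the paragraph preceding the corollary) that the argument of Cor.~\ref{Corollary-ext} goes through verbatim once the compactness of boundary orbits is replaced by the direct assumption that $\overline{\mathcal{D}^{\sim}}$ is regular with Hausdorff leaf space, which is precisely the content of the proper L--spacetime definition. Your write-up is in fact more detailed than the paper's, which gives no formal proof beyond that observation.
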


\begin{remark}\label{M3ainf}  The situation pointed out in corollary \ref{Corollary-ext-2}
is exactly what happens in the case of the Minkowski space $\mathbb{M}^3$.  It is not hard to see by repeating the computations in Sect. \ref{sec:block} for $\mathbb{M}^3(a,b)$ when $b\to +\infty$,  that the orbits of the future distribution $\oplus$ for $\mathbb{M}^3(a,+\infty)$ are straight lines in the Cauchy surface $C$, that is the future L-boundary $\partial^+\mathbb{M}^3(a,+\infty)$ is bidimensional and diffeomorphic to $\mathbb{R}\times \mathbb{S}^1$. 

In fact, using the same notations than in Sect. \ref{sec:block}, we get that given a null geodesic $\gamma = \gamma_{(x_0,y_0,\theta_0)}$:
\begin{eqnarray*}
\oplus_\gamma &=& \lim_{t \to +\infty} T_\gamma S(\gamma(t)) = \lim_{t \to +\infty} [t:t_0-t]_\gamma = \lim_{t \to +\infty} [1:t_0/t-1]_\gamma = [1:-1]_\gamma \\ &=& \mathrm{span}\left\{ \sin \theta_0 \left.\frac{\partial}{\partial x}\right|_\gamma  - \cos \theta_0 \left.\frac{\partial}{\partial y}\right|_\gamma\right\} \, .
\end{eqnarray*}
Therefore we can obtain the integral curve $c(s)  = (x(s),y(s),\theta(s))$ of $\oplus$ passing through $\gamma$, that defines the sky $S(\infty^+(\gamma))$ of $\gamma$ at $+\infty$, solving the initial value problem:
\begin{equation}\label{IVPainfty}
\frac{dx}{ds} = \sin \theta \, , \quad \frac{dy}{ds} = - \cos \theta \, , \quad \frac{d\theta}{ds} = 0 \, ,
\end{equation}
with $c(0) = (x_0,y_0,0)$.  Notice that the change in $d\theta/ds$ in Eq. (\ref{IVPainfty}) with respect to Eq. (\ref{IVPab}) is critical with respect to the analysis performed in Sect. \ref{sec:block}.   Actually, in the case of $\mathbb{M}^3(a,b)$, $d\theta/ds = 1$, see Eq. (\ref{IVPab}), which upon integration gave us circles which were precisely the skies of the points in the topological bounday of $\mathbb{M}^3(a,b)$ considered as a subset of $\mathbb{M}^3$.  However in the present situation, $d\theta/ds = 0$, and the integral curves of (\ref{IVPainfty}) are straight lines:
$$
c(s) = (x_0 + s \sin \theta_0, y_0 - s \cos \theta_0, \theta_0) \, ,
$$
that corresponds to the family of null geodesics with tangent vector $\mathbf{v} = (1,\cos \theta_0, \sin \theta_0)$ and initial value in the straight line in $C$, recall (\ref{gammaxyphi}), given by:
$$
t = 0 \, , \qquad \cos \theta_0 (x-x_0) + \sin \theta_0 (y-y_0) = 0 \, .
$$
Hence we conclude that $\mathbb{M}^3(a,+\infty)$ is a proper L-spacetime even though the orbits of the boundary distribution are not compact.

It is straightforward to check that $I^-(\gamma_1) = I^-(\gamma_2)$ for any two light rays $\gamma_1, \gamma_2 \in S(\infty^+(\gamma))$, therefore any light ray in $S(\infty^+(\gamma))$ defines the same TIP:
$$
I^-(\gamma) = \{ (t,x,y) \in \mathbb{M}^3(a,+\infty) \mid t < \cos \theta_0 (x-x_0) + \sin \theta_0 (y-y_0)  \} \, ,
$$
and the future L-boundary coincides with the future part of the c-boundary accessible by light rays.
We must point it out that there is a body of work explicating in full detail exactly
what the causal boundary is in many concrete situations that can be compared easily with the L-boundary using computations similar to those performed above (see for instance \cite{Fl07}, \cite{Ha04}).  Another example of a proper $L$-spacetime is provided for instance by the 3-dimensional de-Sitter spacetime as shown in \cite{Ba17}.
\end{remark}

\begin{remark}\label{remark-Xtilde-extension}
By Cor. \ref{Corollary-ext}, since $\widetilde{X}_{\left(\gamma,\mathbf{t}\right)}\left(s\right) = \widetilde{X}\left(\gamma,\mathbf{t},s\right)$ is an integral curve of $\overline{\Phi}$, then it is possible to extend smoothly the maps $\widetilde{X}$, $X$ and $\tau$ as
\begin{equation}\label{eq-def-distribution-extension}
\begin{tabular}{l}
$\widetilde{X}:\mathcal{N}_{U}\times\left(-1,1\right]\times\left(-\epsilon,\epsilon\right)\longrightarrow \overline{\widetilde{\mathcal{N}}}$ \\
$X:\mathcal{N}_{U}\times\left(-1,1\right]\times\left(-\epsilon,\epsilon\right)\longrightarrow \mathcal{N}$ \\
$\tau:\mathcal{N}_{U}\times\left(-1,1\right]\times\left(-\epsilon,\epsilon\right)\longrightarrow \left(-1,1\right]$
\end{tabular}
\end{equation}
where the non--zero vector field $\overline{\Phi}$ can be written by 
\begin{equation}\label{eq-def-Phi-field}
\overline{\Phi}\left(\widetilde{\gamma}\left(\mathbf{t}\right)\right)=\left( d\varepsilon \right)_{ \left(\gamma , \mathbf{t}\right) }\left( \frac{\partial X}{\partial \mathbf{s}}\left(\gamma,\mathbf{t}, 0\right), 0 \right) \in T_{\widetilde{\gamma}\left(\mathbf{t}\right)}\varepsilon\left(\mathcal{N}_U \times \{ \mathbf{t} \} \right)
\end{equation}
for $\left(\gamma,\mathbf{t}\right)\in \mathcal{N}_U \times \left(-1,1\right]$.
\end{remark}

\begin{proposition}\label{prop-gamma-transversal} Let $M$ be a proper L-spacetime and
let $\overline{M}$ be the canonical extension of $M$.
For every light ray $\gamma\in \mathcal{N}$ the extension $\overline{\gamma}\subset \overline{M}$ parametrized by a projective parameter $\overline{\gamma}:\left[-1,1\right]\rightarrow \overline{M}$ is a regular curve and transversal to $\partial M$.
Moreover, denoting $\infty^{\pm}_{\gamma}=\lim_{\mathbf{t}\mapsto \pm 1} \overline{\gamma}\left(\mathbf{t}\right)$, the maps 
$\infty^{\pm}:\mathcal{N}\rightarrow \partial^{\pm} M$ defined by $\infty^{\pm}\left(\gamma\right)=\infty^{\pm}_{\gamma}$ are surjective submersions. 
\end{proposition}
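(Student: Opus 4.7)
The plan is to transfer both assertions to the blown-up space $\overline{\widetilde{\mathcal{N}}}$ via the diffeomorphism $\widetilde{S}$ of Cor.~\ref{Corollary-ext}, and to analyse everything in the projective-parameter chart $\widetilde{\psi}=(x,y,\theta,\mathbf{t})$ constructed in Prop.~\ref{prop-varepsilon}. In this chart, $\widetilde{\gamma}$ extends smoothly to $[-1,1]$ with
\begin{equation*}
\widetilde{\gamma}'(\mathbf{t}) = \left(\frac{\partial}{\partial \mathbf{t}}\right)_{\widetilde{\gamma}(\mathbf{t})}
\end{equation*}
by Eq.~(\ref{extension-gamma-prima-tilde}), the boundary $\partial^{\pm}\widetilde{\mathcal{N}}$ is locally the level set $\mathbf{t}=\pm 1$, and $\widetilde{\gamma}(\pm 1) = \oplus_\gamma,\ominus_\gamma$ by the normalisation~(\ref{projective-param}) of the projective parameter. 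The extension $\overline{\gamma}$ is precisely $\widetilde{S}^{-1}\circ\widetilde{\pi}\circ\widetilde{\gamma}$, so its properties follow from those of $\widetilde{\pi}\circ\widetilde{\gamma}$.

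The essential input for regularity and transversality is that, by Eq.~(\ref{eq-def-Phi-field}), the spanning vector field $\overline{\Phi}$ of $\overline{\mathcal{D}^{\sim}}$ has a vanishing $\mathbf{t}$-component in these coordinates, since its second slot under $d\varepsilon$ is zero. Hence $\partial/\partial \mathbf{t}$ lies nowhere in $\overline{\mathcal{D}^{\sim}}$, and the submersion property of $\widetilde{\pi}$ yields $\overline{\gamma}'(\mathbf{t})= d\widetilde{\pi}(\widetilde{\gamma}'(\mathbf{t})) \neq 0$ throughout $[-1,1]$, establishing regularity. For transversality at $\mathbf{t}=\pm 1$, note that $T\partial^{\pm}\widetilde{\mathcal{N}}$ is locally spanned by $\partial/\partial x$, $\partial/\partial y$, $\partial/\partial \theta$, and $\partial^{\pm}\mathcal{D}^{\sim}\subset T\partial^{\pm}\widetilde{\mathcal{N}}$ since $\partial^{\pm}\mathcal{D}^{\sim}$ is a distribution on $\partial^{\pm}\widetilde{\mathcal{N}}$; consequently $\partial/\partial \mathbf{t}$ at a boundary point is not contained in $T\partial^{\pm}\widetilde{\mathcal{N}} + \overline{\mathcal{D}^{\sim}}$, and pushing forward by $\widetilde{\pi}$ gives $\overline{\gamma}'(\pm 1)\notin T_{\infty^{\pm}_\gamma}\partial^{\pm}M$, which is exactly transversality to $\partial M$.

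For the maps $\infty^{\pm}$, the identities $\widetilde{\gamma}(\pm 1) = \oplus_\gamma,\ominus_\gamma$ together with Cor.~\ref{Corollary-ext} show that, under the identification $\partial^{+}M=\partial^{+}\widetilde{\mathcal{N}}/\partial^{+}\mathcal{D}^{\sim}$, the map $\infty^{+}$ factors as
\begin{equation*}
\mathcal{N} \xrightarrow{\;\oplus\;} \partial^{+}\widetilde{\mathcal{N}} \xrightarrow{\;\widetilde{\pi}|_{\partial^{+}\widetilde{\mathcal{N}}}\;} \partial^{+}\widetilde{\mathcal{N}}/\partial^{+}\mathcal{D}^{\sim} = \partial^{+} M,
\end{equation*}
and analogously for $\infty^{-}$. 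The first arrow is the diffeomorphism of Prop.~\ref{prop-Low-boundary}, and the second is a surjective submersion by the definition of $\partial^{\pm}M$ as the leaf space of the regular distribution $\partial^{\pm}\mathcal{D}^{\sim}$ on $\partial^{\pm}\widetilde{\mathcal{N}}$ guaranteed by the proper L-spacetime hypothesis. A composition of a diffeomorphism with a surjective submersion is a surjective submersion, so both $\infty^{\pm}$ are surjective submersions. The only delicate point in the argument is the identification of $\partial/\partial \mathbf{t}$ as the direction transversal simultaneously to $\overline{\mathcal{D}^{\sim}}$ and to the boundary, which is precisely what the projective parametrisation of Sect.~\ref{sec:proj-param} was engineered to deliver.
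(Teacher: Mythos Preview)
Your proof is correct and follows essentially the same route as the paper's: transfer to $\overline{\widetilde{\mathcal{N}}}$ via $\rho=\widetilde{S}^{-1}\circ\widetilde{\pi}$, use that $\widetilde{\gamma}'=\partial/\partial\mathbf{t}$ while $\overline{\Phi}$ has no $\partial/\partial\mathbf{t}$-component (Eq.~(\ref{eq-def-Phi-field})) to get regularity and transversality, and factor $\infty^{\pm}$ as the diffeomorphism $\gamma\mapsto\widetilde{\gamma}(\pm1)$ followed by the quotient submersion onto $\partial^{\pm}\widetilde{\mathcal{N}}/\partial^{\pm}\mathcal{D}^{\sim}$. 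The only cosmetic slip is writing $\overline{\gamma}'(\mathbf{t})=d\widetilde{\pi}(\widetilde{\gamma}'(\mathbf{t}))$, which omits the isomorphism $d\widetilde{S}^{-1}$, but this does not affect the argument.
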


\begin{proof}
We can consider the following diagram 
\begin{equation}\label{diagram-N-tilde}
\begin{tikzpicture}[every node/.style={midway}]
\matrix[column sep={7em,between origins},
        row sep={2em}] at (0,0)
{ \node(PN1)   {$\overline{\widetilde{\mathcal{N}}}$}  ; & \node(N) {$\overline{\widetilde{\mathcal{N}}} / \overline{\mathcal{D}^{\sim}}$}; \\
; &  \node(PN2) {$\overline{M}$};                   \\};
\draw[->] (PN1) -- (N) node[anchor=south]  {$\widetilde{\pi}$};
\draw[->] (N) -- (PN2) node[anchor=west]  {$\widetilde{S}^{-1}$};
\draw[->] (PN1)   -- (PN2) node[anchor=north east] {$\mathbf{\rho}$};
\end{tikzpicture}
\end{equation}
where the quotient map $\widetilde{\pi}$ is a submersion, $\widetilde{S}^{-1}$ is a diffeomorphism and hence $\rho$ is a submersion.
Then we have that $\overline{\gamma}\left(\mathbf{t}\right)=\rho\left(\widetilde{\gamma}\left(\mathbf{t}\right)\right)$ for all $\mathbf{t}\in\left[-1,1\right]$. 
Trivially, since $\widetilde{\gamma}$ is differentiable then $\overline{\gamma}$ is so also.
Now, since the tangent space to the orbit of the distribution $\overline{\mathcal{D}^{\sim}}$ is defined by the non--zero vector field $\overline{\Phi}$, by the equation (\ref{eq-def-Phi-field}) we have that $\overline{\Phi}\left(\widetilde{\gamma}\left(\mathbf{t}\right)\right) \in T_{\widetilde{\gamma}\left(\mathbf{t}\right)}\varepsilon\left(\mathcal{N}_U \times \{ \mathbf{t} \} \right)$ and since, by equation (\ref{extension-gamma-prima-tilde}), we have $0\neq\widetilde{\gamma}'\left(s\right)\notin T_{\widetilde{\gamma}\left(\mathbf{t}\right)}\varepsilon\left(\mathcal{N}_U \times \{ \mathbf{t} \} \right)$, then 
$\widetilde{\gamma}$ is transversal to the orbits of the distribution $\overline{\mathcal{D}^{\sim}}$ in $\overline{\widetilde{\mathcal{N}}}$, then the regularity of $\overline{\gamma}$ follows. 
On the other hand, in particular for $\mathbf{t}=1$, by equation (\ref{extension-gamma-prima-tilde}), we have $\widetilde{\gamma}'\left(1\right)=\left( \frac{\partial}{\partial \mathbf{t}} \right)_{\widetilde{\gamma}\left(1\right)} \notin T_{ \widetilde{\gamma}\left(1\right)}\partial^{+}\widetilde{\mathcal{N}}$ and due to $\partial^{+}M \simeq \partial^{+}\widetilde{\mathcal{N}} / \partial^{+}\mathcal{D}^{\sim} $ and the regularity of $\overline{\gamma}$ then $0\neq \overline{\gamma}'\left(1\right)\notin T_{\overline{\gamma}\left(1\right)}\partial^{+}M$ and $\overline{\gamma}$ is transversal to $\partial^{+} M$. 

Moreover, the restriction of the diagram (\ref{diagram-N-tilde}) to $\partial^{+} \widetilde{\mathcal{N}}$ gives 
\begin{equation}\label{diagram-boundary-N-tilde}
\begin{tikzpicture}[every node/.style={midway}]
\matrix[column sep={7em,between origins},
        row sep={2em}] at (0,0)
{ \node(PN1)   {$\partial^{+} \widetilde{\mathcal{N}}$}  ; & \node(N) {$\partial^{+} \widetilde{\mathcal{N}} / \partial^{+}{\mathcal{D}^{\sim}}$}; \\
; &  \node(PN2) {$\partial^{+}M$};                   \\};
\draw[->] (PN1) -- (N) node[anchor=south ]  {$\widetilde{\pi}$};
\draw[->] (N) -- (PN2) node[anchor=west]  {$\widetilde{S}^{-1}$};
\draw[->] (PN1)   -- (PN2) node[anchor=north east] {$\mathbf{\rho}$};
\end{tikzpicture}
\end{equation}
and we have the diffeomorphisms,
\[
\begin{tabular}{rcccl}
$\mathcal{N}$ & $\overset{i}{\longrightarrow}$ & $\mathcal{N}\times \{1\}$ & $\overset{\left.\varepsilon\right|_{\mathcal{N}\times \{1\}}}{\longrightarrow}$ & $\partial^{+}\widetilde{\mathcal{N}}$ \\
$\gamma$ & $\mapsto$ & $\left(\gamma,1\right)$  & $\mapsto$  & $\widetilde{\gamma}\left(1\right)$ 
\end{tabular}
\]
such that we can write 
\[
\infty^{+}=\left.\rho\right|_{\partial^{+}\widetilde{\mathcal{N}}} \circ \left.\varepsilon\right|_{\mathcal{N}\times \{1\}} \circ i \, .
\]
Since $\left.\rho\right|_{\partial^{+}\widetilde{\mathcal{N}}}$ is still a surjective submersion onto $\partial^{+}M$, therefore $\infty^{+}$ is a surjective submersion. 
The case $\mathbf{t}=-1$ and $\infty^{-}$ is analogous.
  \end{proof}


\section{L--extensions}\label{sec:lextensions}

In this section we will characterise the differentiable structure of the canonical extension $\overline{M}$ defined in Corollary \ref{Corollary-ext}.
We will need to distinguish several types of parametrisations of light rays, so we will fix some nomenclature first.  In what follows and when referring to spacetimes $M$ in any dimension, we will be assuming that $M$ is such that its space of light rays $\mathcal{N}$ is a smooth manifold, that is, we assume for instance that $M$ is time-oriented, strongly causal, null-pseudo convex and sky-separating. 

\begin{definition}\label{def-parametrizations}
Let $\gamma:\left(a,b\right)\rightarrow M$ be an inextensible parametrization of a light ray $\gamma\in \mathcal{N}$ such that $\gamma\subset M$ is future--directed, that is, $\gamma\left(s_1\right)$ is in the causal past of $\gamma\left(s_2\right)$ for all $s_1 < s_2$. 
This parametrization is said to be
\begin{enumerate}
\item \emph{continuous} if $\gamma:\left(a,b\right)\rightarrow M$ is a continuous map,
\item \emph{regular} if $\gamma:\left(a,b\right)\rightarrow M$ is a differentiable map and $\gamma'\left(s\right)\in \mathbb{N}$ is a future--directed lightlike vector for all $s\in \left(a,b\right)$,
\item \emph{projective} if $\gamma:\left(a,b\right)\rightarrow M$ is a regular parametrization and $\widetilde{\gamma}\left(s\right)=\sigma\left(\left[\gamma'\left(s\right)\right]\right)\in \mathbb{P}\left(\mathcal{H}_{\gamma}\right)$ defines a projectivity, that is, the parameter $s$ is a function of the form (\ref{projectivity}), in the fibre $\mathbb{P}\left(\mathcal{H}_{\gamma}\right)$, and 
\item \emph{admissible} if there exists a diffeomorphism $h:\left(c,d\right]\rightarrow \left(a,b\right]$ such that $h'\left(t\right)>0$ for all $t\in \left(c,d\right]$ and $\gamma\circ h:\left(c,d\right)\rightarrow M$ is a projective parametrization.
\end{enumerate}
\end{definition} 

It can be trivially observed that any projective parametrization of $\gamma\in \mathcal{N}$ is admissible.

\begin{remark}\label{remark-parameters}
It is important to notice that for every regular parametrization of $\gamma\in\mathcal{N}$ can be reparametrized diffeomorphically to the canonical projective parameter, but this does not imply that it is an admissible parameter. 
Indeed, if $\gamma:\left(a,b\right)\rightarrow M$ is a regular parametrization, we can send it to $\widetilde{\mathcal{N}}\subset \mathbb{P}\left(\mathcal{H}_{\gamma}\right)$ via the following composition
\[
\begin{tabular}{ccccccc}
$\left(a,b\right)$ & $\overset{\gamma'}{\rightarrow}$ & $\mathbb{N}$ & $\overset{\left[~\right]}{\rightarrow}$ & $\mathbb{PN}$ & $\overset{\sigma}{\rightarrow}$ & $\widetilde{\mathcal{N}}$ \\
$s$ & $\mapsto$ & $\gamma'\left(s\right)$ & $\mapsto$ & $\left[\gamma'\left(s\right)\right]$ & $\mapsto$ & $\widetilde{\gamma}\left(s\right)$
\end{tabular}
\] 
Since, $\gamma\left(s\right)\in M$ is a regular curve, then $\gamma'\left(s\right)\in \mathbb{N}$ and $\left[\gamma'\left(s\right)\right]\in \mathbb{PN}$ are also regular curves. 
Now, the diffeomorphism $\sigma:\mathbb{PN}\rightarrow\widetilde{\mathcal{N}}$ maps regular curves into regular curves, so $\widetilde{\gamma}\left(s\right)\in \widetilde{\mathcal{N}}$ is a regular parametrization of the submanifold $\widetilde{\gamma}\subset \widetilde{\mathcal{N}}$. 
Since the canonical projective parametrization $\widetilde{\gamma}\left(\mathbf{t}\right)$ is another regular parametrization, then there exist a differentiable change of parameter $h:\left(-1,1\right)\rightarrow \left(a,b\right)$ such that $h'\left(t\right)>0$ for all $t\in \left(-1,1\right)$. 
For $s\in \left(a,b\right)$ to be an admissible parameter, condition $h'\left(1\right)>0$ remains to be satisfied.
\end{remark}

\begin{definition}\label{def-L-extension}
We define a \emph{future L--extension} of a conformal manifold $\left(M,\mathcal{C}\right)$ as a Hausdorff smooth manifold $\overline{M}=M \cup \partial^{+} M$ where $\partial^{+} M = \overline{M}-M$ is a closed hypersurface of $\overline{M}$ called the \emph{future L--boundary} such that the following properties are satisfied:
\begin{enumerate}
\item \label{L-ext-cond-0} If $\gamma:\left(a,b\right)\rightarrow M$ is a continuous parametrization of $\gamma\in \mathcal{N}$, then $\lim_{s\mapsto b^{-}}\gamma\left(s\right)=\infty^{+}_{\gamma}\in \partial^{+}M$. 
\item \label{L-ext-cond-2} The map $\infty^{+}:\mathcal{N}\rightarrow \partial^{+} M$ defined by $\infty^{+}\left(\gamma\right)=\infty^{+}_{\gamma}$ is a surjective submersion.
\item \label{L-ext-cond-1} For every $\gamma_0\in \mathcal{N}$ there exists a neighbourhood $\mathcal{U}\subset \mathcal{N}$ and a differentiable map $\Psi_{\mathcal{U}}:\mathcal{U}\times\left(a,b\right]\rightarrow\overline{M}$, where $\gamma\left(s\right)=\Psi_{\mathcal{U}}\left(\gamma, s\right)$ is an admissible parametrization of $\gamma\in \mathcal{U}$ for $s\in\left(a,b\right)$ and such that $\frac{\partial \Psi_{\mathcal{U}}}{\partial s}\left(\gamma,b\right)\notin T_{\infty^{+}\left(\gamma\right)}\partial^{+}M$.
\end{enumerate}
If there exists any L--extension of $\left(M,\mathcal{C}\right)$, then it is said that $\left(M,\mathcal{C}\right)$ is \emph{L--extensible}.
In an analogous and obvious way, we can define a \emph{past L--extension} $\overline{M}=M \cup \partial^{-} M$.
\end{definition}

Observe that since the map $\infty^{+}:\mathcal{N}\rightarrow \partial^{+}M$ of a L--extension of $\left(M,\mathcal{C}\right)$ is a surjective submersion then every of its inverse images 
\[
S\left(p\right) = \left(\infty^{+}\right)^{-1}\left(p\right)= \{ \gamma\in \mathcal{N} : p = \infty^{+}\left(\gamma\right) \} \subset \mathcal{N}  
\]
defines a leaf of a regular distribution $\boxplus:\mathcal{N}\rightarrow \mathbb{P}\left(T\mathcal{N}\right)$ given by $\boxplus\left(\gamma\right)= T_{\gamma}S\left(\infty^{+}\left(\gamma\right)\right)$, and the map 
\[
\begin{tabular}{rccl}
$S:$ & $ \partial^{+} M$ & $\rightarrow$ & $\mathcal{N}/\boxplus$  \\
 & $p$ & $\mapsto$ & $S\left(p\right)$   
\end{tabular}
\]
is a diffeomorphism.

\begin{lemma}\label{Lemma-canon-prject-param}
Let $\overline{M}$ be a L--extension of $M$ and $\Psi_{\mathcal{U}}:\mathcal{U}\times\left(a,b\right]\rightarrow\overline{M}$ the differentiable map of condition \ref{L-ext-cond-1} of definition of L--extensions, then there exist a differentiable function $h:\mathcal{U}\times\left(-1,1\right]\rightarrow \left(a,b\right]$ such that the map $\overline{\Psi}_{\mathcal{U}}\left(\gamma,\mathbf{t}\right)=\Psi_{\mathcal{U}}\left(\gamma,h\left(\gamma,\mathbf{t}\right)\right)$ also satisfies the condition \ref{L-ext-cond-1} of definition \ref{def-L-extension} and where $\mathbf{t}\in\left(-1,1\right]$ is the canonical projective parameter.
\end{lemma}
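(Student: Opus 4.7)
The plan is to construct $h$ as the inverse in $s$ of a smooth change of parameter built from the canonical projective parameter $\mathbf{t}$ of Proposition \ref{prop-varepsilon}. For each $\gamma\in\mathcal{U}$ and $s\in(a,b)$, the point $\Psi_{\mathcal{U}}(\gamma,s)$ lies on $\gamma\subset M$, so its sky--tangent $T_\gamma S(\Psi_{\mathcal{U}}(\gamma,s))$ is a well--defined element of $\widetilde{\mathcal{N}}$. Set
\[
H(\gamma,s)=\mathbf{t}\bigl(T_\gamma S(\Psi_{\mathcal{U}}(\gamma,s))\bigr),\qquad H(\gamma,b):=1,
\]
so that $\Psi_{\mathcal{U}}(\gamma,s)$ is the event on $\gamma$ whose canonical projective parameter along $\gamma$ is $H(\gamma,s)$. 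The sought $h$ will be obtained by solving $H(\gamma,h(\gamma,\mathbf{t}))=\mathbf{t}$.

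Smoothness of $H$ on $\mathcal{U}\times(a,b)$ is immediate by composition: $\Psi_{\mathcal{U}}$ is smooth by hypothesis, the sky--tangent map $(\gamma,p)\mapsto T_\gamma S(p)$ factors through the diffeomorphism $\sigma$ of Proposition \ref{prop-difeo-sigma} applied to the unique line $[v]\in\mathbb{PN}_p$ with $\gamma_{[v]}=\gamma$, and $\mathbf{t}$ is a smooth coordinate on $\mathbb{P}(\mathcal{H})$. Moreover, for each fixed $\gamma$, admissibility of $s$ ensures that $s\mapsto H(\gamma,s)$ is a strictly monotone diffeomorphism $(a,b)\to(-1,1)$, because any admissible reparametrisation is, by Definition \ref{def-parametrizations}, a M\"obius transform of $\mathbf{t}$ with positive derivative at the corresponding endpoint.

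The crux is proving joint smoothness of $H$ up to $s=b$. For this I would use Theorem \ref{main_thm} together with Proposition \ref{prop-varepsilon} to extend $\varepsilon$ to a diffeomorphism $\overline{\varepsilon}:\mathcal{N}_U\times(-1,1]\to\overline{\widetilde{\mathcal{N}}}_U$, so it suffices to prove that
\[
\widetilde{\Psi}_{\mathcal{U}}(\gamma,s):=\begin{cases} T_\gamma S(\Psi_{\mathcal{U}}(\gamma,s)), & s<b,\\ \oplus_\gamma, & s=b, \end{cases}
\]
is smooth as a map $\mathcal{U}\times(a,b]\to\overline{\widetilde{\mathcal{N}}}$, since then $H=p_2\circ\overline{\varepsilon}^{-1}\circ\widetilde{\Psi}_{\mathcal{U}}$ is smooth. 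The transversality clause of condition \ref{L-ext-cond-1} in Definition \ref{def-L-extension} is exactly what supplies this smoothness: working in adapted coordinates on $\overline{M}$ around $\infty^+(\gamma_0)$ in which $\partial^+M$ is a coordinate hyperplane, the hypothesis $\partial_s\Psi_{\mathcal{U}}(\gamma,b)\notin T_{\infty^+(\gamma)}\partial^+M$ allows the transverse coordinate to serve as a smooth regular parameter along each light ray, and the vector--field $\overline{\Phi}$ constructed in the proof of Theorem \ref{main_thm} lifts this family of curves to a smooth family in $\overline{\widetilde{\mathcal{N}}}$ terminating at $\oplus_\gamma$.

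Once $H$ is smooth on $\mathcal{U}\times(a,b]$ with $\partial_s H(\gamma,b)>0$, the inverse function theorem applied to $(\gamma,s)\mapsto(\gamma,H(\gamma,s))$ produces the smooth $h(\gamma,\mathbf{t})$. The resulting $\overline{\Psi}_{\mathcal{U}}(\gamma,\mathbf{t})=\Psi_{\mathcal{U}}(\gamma,h(\gamma,\mathbf{t}))$ automatically satisfies condition \ref{L-ext-cond-1}: it is differentiable by composition, the new parameter $\mathbf{t}$ is canonical projective hence admissible, and transversality at $\mathbf{t}=1$ follows from the chain rule, since $\partial_\mathbf{t} h(\gamma,1)=1/\partial_s H(\gamma,b)>0$ so the transversality of $\Psi_{\mathcal{U}}$ at $s=b$ is inherited. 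The main obstacle is the joint smoothness at $s=b$ in the third paragraph: one must upgrade the pointwise--in--$\gamma$ admissibility into a smooth interface with the boundary structure of $\overline{\widetilde{\mathcal{N}}}$ supplied by the main theorem, and it is here that the transversality hypothesis plays an irreplaceable role.
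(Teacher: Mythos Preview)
Your route is genuinely different from the paper's and more elaborate than necessary. The paper never passes through $\mathbb{P}(\mathcal{H})$ at all: it works directly in $\overline{M}$, writes the equation $\Psi_{\mathcal{U}}(\gamma,s)=\gamma(\mathbf{t})$ in a coordinate chart as $F(\gamma,s,\mathbf{t})=\varphi(\Psi_{\mathcal{U}}(\gamma,s))-\varphi(\gamma(\mathbf{t}))$, and applies the implicit function theorem using only that $\partial F/\partial s=d\varphi(\gamma'(s))\neq 0$, which follows immediately from $s$ being an admissible (hence regular) parameter. The transversality clause of condition~\ref{L-ext-cond-1} is then used only at the very end, via the chain rule, to check that $\overline{\Psi}_{\mathcal{U}}$ inherits transversality.

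Your third paragraph, which you correctly flag as the crux, contains a genuine gap. The vector field $\overline{\Phi}$ from Theorem~\ref{main_thm} is tangent to the \emph{leaves} of $\overline{\mathcal{D}^\sim}$ (the lifted skies $\widetilde{S}(\cdot)$), not to the fibre curves $\widetilde{\gamma}$; it does not ``lift the family of curves $s\mapsto\Psi_{\mathcal{U}}(\gamma,s)$'' to $\overline{\widetilde{\mathcal{N}}}$ in the way you suggest. What you actually need is that $(\gamma,s)\mapsto \sigma\bigl([\partial_s\Psi_{\mathcal{U}}(\gamma,s)]\bigr)$ extends smoothly to $s=b$ with value $\oplus_\gamma$. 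But at $s=b$ the vector $\partial_s\Psi_{\mathcal{U}}(\gamma,b)$ lives in $T_{\infty^+(\gamma)}\overline{M}$, not in $TM$, so $\sigma$ does not apply to it; the transversality hypothesis tells you this vector is nonzero and not tangent to $\partial^+M$, but that alone does not produce a smooth extension of the sky--tangent map across the boundary. You would need either an extension of $\sigma$ to a collar of $\partial^+M$ or an independent argument that $H$ is smooth at $s=b$, neither of which you supply. The paper sidesteps this entirely by never leaving $\overline{M}$.
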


\begin{proof}
First, notice that if $\gamma:\left(a,b\right)\rightarrow M$ is an admissible parametrization of the light ray $\gamma\in \mathcal{N}$, then there exists a diffeomorphism $h:\left(c,d\right]\rightarrow \left(a,b\right]$ such that $\frac{dh}{dt}>0$.
But since $t\in\left(c,d\right)$ is a projective parameter, then there are $A,B,C,D\in \mathbb{R}$ with $AD-BC>0$ such that $t=\frac{A\mathbf{t}+B}{C\mathbf{t}+D}$ is a projective parameter diffeomorphism between the canonical projective parameter $\mathbf{t}\in \left(-1,1\right]$ and $t\in\left(c,d\right]$ verifying $\frac{dt}{d\mathbf{t}}>0$ for all $\mathbf{t}\in\left(-1,1\right]$.
Therefore, every admissible parameter is diffeomorphic to the canonical projective $\mathbf{t}$ in the sense of the definition \ref{def-parametrizations}. 

Now, let us prove the existence of $\overline{\Psi}_{\mathcal{U}}$. 
It is clear that, for any $\left(\gamma,\mathbf{t}\right)\in \mathcal{U}\times \left(-1,1\right]$, there is a unique $s\in \left(a,b\right]$ such that the equation 
\[
\Psi_{\mathcal{U}}\left(\gamma,s\right) = \gamma\left(\mathbf{t}\right)
\]
is satisfied.
Then, there exist a function $h:\mathcal{U}\times\left(-1,1\right]\rightarrow \left(a,b\right]$ such that for any $\gamma\in \mathcal{U}$,  $s=h_{\gamma}\left(\mathbf{t}\right)=h\left(\gamma,\mathbf{t}\right)$ is a reparametrization of $\gamma$. 

Let us see that $h$ is differentiable. 
Given any $\left(\gamma,\mathbf{t}\right)\in \mathcal{U}\times \left(-1,1\right]$, consider a coordinate chart $\left(W,\varphi\right)$ at $\gamma\left(\mathbf{t}\right)\in \overline{M}$. 
We construct the map 
\[
F\left(\gamma,s,\mathbf{t}\right)=\varphi\left(\Psi_{\mathcal{U}}\left(\gamma,s\right)\right) - \varphi\left(\gamma\left(\mathbf{t}\right)\right)\in \mathbb{R}^{3}
\]
and since 
\[
\frac{\partial F}{\partial s}\left(\gamma,s,\mathbf{t}\right)= d\varphi_{\gamma\left(s\right)}\left(\frac{\partial \Psi_{\mathcal{U}}}{\partial s}\left(\gamma,s\right)\right) = d\varphi_{\gamma\left(s\right)}\left(\gamma'\left(s\right)\right) \neq 0
\]
due to $\gamma'\left(s\right)\neq 0$ because $s\in \left(a,b\right]$ is an admissible parameter and $\varphi$ is a diffeomorphism, then the Implicit function Theorem assures that $h$ is differentiable in a neighbourhood of $\left(\gamma,\mathbf{t}\right)$, but this is true for all $\left(\gamma,\mathbf{t}\right)\in \mathcal{U}\times \left(-1,1\right]$, therefore $h$ is differentiable.
Moreover, $\frac{\partial h}{\partial \mathbf{t}}\left(\gamma,1\right)=\frac{d h_{\gamma}}{d \mathbf{t}}\left(1\right)>0$ according to definition of admissible parameter.

So, the map $\overline{\Psi}_{\mathcal{U}}\left(\gamma,\mathbf{t}\right)=\Psi_{\mathcal{U}}\left(\gamma,h\left(\gamma,\mathbf{t}\right)\right)$ is differentiable and the parameter $\mathbf{t}\in\left(-1,1\right]$ is admissible because it is projective. 
Finally, since 
\[
\frac{\partial \overline{\Psi}_{\mathcal{U}}}{\partial \mathbf{t}}\left(\gamma,1\right)=\frac{\partial \Psi_{\mathcal{U}}}{\partial s}\left(\gamma,b\right)\cdot \frac{\partial h}{\partial \mathbf{t}}\left(\gamma,1\right)= \frac{\partial \Psi_{\mathcal{U}}}{\partial s}\left(\gamma,b\right)\cdot \frac{d h_{\gamma}}{d \mathbf{t}}\left(1\right) \notin T_{\infty^{+}\left(\gamma\right)}\partial^{+}M
\]
as claimed.
  \end{proof}

\begin{remark}\label{remark-change-parameter}
Consider a neighbourhood $\mathcal{N}_{U}$ as the one of coordinate chart (\ref{carta-N}) and let us assume that a neighbourhood $\mathcal{V}\subset \mathcal{N}$ of condition \ref{L-ext-cond-1} in definition \ref{def-L-extension} is such that $\mathcal{V}\subset \mathcal{N}_{U}$.
By Lemma \ref{Lemma-canon-prject-param}, we can assume that the maps $\Psi_{\mathcal{V}}$ can be defined by the canonical projective parameter $\mathbf{t}$ by 
\[
\begin{tabular}{rrcl}
$\Psi_{\mathcal{V}}:$ & $\mathcal{V}\times\left(-1,1\right]$ & $\rightarrow$ & $\overline{M}$ \\
 & $\left(\gamma,\mathbf{t}\right)$ & $\mapsto$ & $\overline{\gamma}\left(\mathbf{t}\right)$  .
\end{tabular}
\]

Moreover, if $\{\mathcal{V}_{\alpha}\}_{\alpha\in I}$ is an open covering of $\mathcal{N}_{U}$ such that $\mathcal{V}_{\alpha}\subset \mathcal{N}_{U}$ for all $\alpha\in I$, since 
\[
\left.\Psi_{\mathcal{V}_{\alpha}}\right|_{\mathcal{V}_{\alpha}\cap\mathcal{V}_{\beta}\times\left(-1,1\right]} = \left.\Psi_{\mathcal{V}_{\beta}}\right|_{\mathcal{V}_{\alpha}\cap\mathcal{V}_{\beta}\times\left(-1,1\right]}
\]
then trivially, it is possible to define $\Psi_{\mathcal{N}_{U}}:\mathcal{N}_{U}\times\left(-1,1\right]\rightarrow\overline{M}$ extending all $\Psi_{\mathcal{V}_{\alpha}}$.
\end{remark}

Assuming definition \ref{def-L-extension}, the following Corollary follows automatically from the previous comments, Lemma \ref{Lemma-canon-prject-param}, and from the Prop. \ref{prop-gamma-transversal}, where $\Psi_{\mathcal{N}_{U}}=\rho\circ \varepsilon$ for the canonical extension.   Then, we can summarise most of the previous discussion stating the following Corollary that, in addition, justifies the name chosen in Sect. \ref{sec:lboundary} for the 3-dimensional manifolds satisfying the properties used in this paper.

\begin{corollary}\label{Corol-canonical-extension}
If $M$ is a 3-dimensional proper L--spacetime, the future (resp. past) canonical extension of $\left(M,\mathcal{C}\right)$ is a future (resp. past) L--extension.
\end{corollary}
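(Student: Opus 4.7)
The plan is to verify the three defining conditions of Def. \ref{def-L-extension} for the canonical extension $\overline{M}$ supplied by Cor. \ref{Corollary-ext-2}, using the structural machinery developed in Sect. \ref{sec:canon-ext}. Since $M$ is assumed to be a proper L--spacetime, Cor. \ref{Corollary-ext-2} gives at once that $\overline{M}=M\cup \partial^{+}M$ is a Hausdorff smooth manifold, and because $\partial^{+}M=\partial^{+}\widetilde{\mathcal{N}}/\partial^{+}\mathcal{D}^{\sim}$ arises as a boundary component of the manifold-with-boundary $\overline{\widetilde{\mathcal{N}}}/\overline{\mathcal{D}^{\sim}}$ via the diffeomorphism $\widetilde{S}$, it is automatically a closed hypersurface of $\overline{M}$; hence the underlying data match Def. \ref{def-L-extension}.

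For condition \ref{L-ext-cond-2} there is nothing to prove: it is precisely the second assertion of Prop. \ref{prop-gamma-transversal}. For condition \ref{L-ext-cond-0}, the canonical projective parameter $\mathbf{t}\in(-1,1)$ furnishes an inextensible continuous parametrization of any $\gamma\in\mathcal{N}$, and by Prop. \ref{prop-gamma-transversal} the extended curve $\overline{\gamma}\colon[-1,1]\to\overline{M}$ is regular and transversal to $\partial^{+}M$ with $\overline{\gamma}(1)=\infty^{+}_{\gamma}$. Given any other continuous inextensible future--directed parametrization $\gamma\colon(a,b)\to M$ of the same light ray, its image is the same oriented null geodesic, so it traverses the points of $\overline{\gamma}$ in the same order; as $s\to b^{-}$ it must therefore accumulate at the unique endpoint $\infty^{+}_{\gamma}\in\partial^{+}M$ of $\overline{\gamma}$, which establishes condition \ref{L-ext-cond-0}.

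The heart of the argument is condition \ref{L-ext-cond-1}. For $\gamma_{0}\in\mathcal{N}$ I would take $\mathcal{U}:=\mathcal{N}_{U}$ as in Prop. \ref{prop-varepsilon} and define
\[
\Psi_{\mathcal{U}}\;:=\;\rho\circ \varepsilon\colon \mathcal{N}_{U}\times(-1,1]\longrightarrow \overline{M},
\]
where $\varepsilon\colon\mathcal{N}_{U}\times(-1,1]\to \overline{\widetilde{\mathcal{N}}}_{U}$ is the smooth extension of the diffeomorphism of Prop. \ref{prop-varepsilon} furnished by Remark \ref{remark-Xtilde-extension}, and $\rho\colon\overline{\widetilde{\mathcal{N}}}\to\overline{M}$ is the submersion appearing in diagram (\ref{diagram-N-tilde}). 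By construction $\Psi_{\mathcal{U}}(\gamma,\mathbf{t})=\overline{\gamma}(\mathbf{t})$, so $\mathbf{t}$ is (trivially) a projective hence admissible parameter in the sense of Def. \ref{def-parametrizations}, and $\Psi_{\mathcal{U}}$ is smooth by composition. The required transversality
\[
\frac{\partial \Psi_{\mathcal{U}}}{\partial \mathbf{t}}(\gamma,1)\;\notin\; T_{\infty^{+}(\gamma)}\partial^{+}M
\]
is then exactly the first assertion of Prop. \ref{prop-gamma-transversal}, which says $\overline{\gamma}$ is a regular curve transversal to $\partial^{+}M$. The past case is obtained by the symmetric construction with $\ominus$ and $\partial^{-}$ in place of $\oplus$ and $\partial^{+}$.

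I do not expect a serious obstacle, because essentially all analytic difficulty has already been absorbed into Thm. \ref{main_thm}, Cor. \ref{Corollary-ext-2}, Prop. \ref{prop-varepsilon} and Prop. \ref{prop-gamma-transversal}; the only mildly delicate point is to notice that the single map $\rho\circ\varepsilon$, thanks to the projective nature of $\mathbf{t}$ extended smoothly up to $\mathbf{t}=1$, simultaneously witnesses both the admissibility of the parametrization (no need to reparametrize via Lemma \ref{Lemma-canon-prject-param}) and the transversality to the boundary.
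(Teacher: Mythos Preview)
Your proposal is correct and follows the same approach as the paper, which states (immediately before the corollary) that it follows from Prop.~\ref{prop-gamma-transversal} together with the identification $\Psi_{\mathcal{N}_U}=\rho\circ\varepsilon$ for the canonical extension. Your argument simply spells out in full the verification of the three conditions of Def.~\ref{def-L-extension} that the paper leaves implicit; in particular your observation that Lemma~\ref{Lemma-canon-prject-param} is not actually needed here (since $\mathbf{t}$ is already projective) is accurate.
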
 

We will show in Lemma \ref{Lemma-equal-skies} that the sky of every point $\infty^{+}\left(\gamma\right)\in \partial ^{+}M$ is the same set in $\mathcal{N}$ for any future L--extension $\overline{M}$.

\begin{lemma}\label{Lemma-equal-skies}
Let $\overline{M}_1 = M \cup \partial^{+} M_1$ be the canonical future  L--extension and $\overline{M}_{2} = M \cup \partial^{+} M_2$ any other future L--extension of $\left(M,\mathcal{C}\right)$ then 
\[
S\left(\infty_{1}^{+}\left(\gamma\right)\right) = S\left(\infty_{2}^{+}\left(\gamma\right)\right)
\] 
for all $\gamma\in \mathcal{N}$, where $\infty_{i}^{+}:\mathcal{N}\rightarrow \partial^{+} M_{i}$ with $i=1,2$ are the surjective submersions of the definition \ref{def-L-extension}.
\end{lemma}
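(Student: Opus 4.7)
The plan is to prove the two regular $1$-dimensional distributions $\boxplus_i$ on $\mathcal{N}$ defined by the submersions $\infty_i^+$, namely $\boxplus_{i,\gamma}=\ker(d\infty_i^+)_\gamma$, both coincide with the intrinsic distribution $\oplus$; once this is established the fibres $S(\infty_i^+(\gamma))$ are in each case the leaf of $\oplus$ through $\gamma$, forcing the desired equality. For the canonical extension the identification $\boxplus_1=\oplus$ is built into the construction of Section \ref{sec:canon-ext}: the fibres of $\infty_1^+$ are the orbits of the boundary distribution $\partial^+\mathcal{D}^{\sim}$ transported to $\mathcal{N}$ through the diffeomorphism $\oplus:\mathcal{N}\to\partial^+\widetilde{\mathcal{N}}$ of Proposition \ref{prop-Low-boundary}.

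The substantive part is proving $\boxplus_2=\oplus$ for an arbitrary L-extension $\overline{M}_2$. For this I would invoke the biparametric variation $X(\gamma,\mathbf{t},s)\in\mathcal{N}$ of Eq. (\ref{eq-def-distribution}), smoothly extended up to $\mathbf{t}=1$ by Remark \ref{remark-Xtilde-extension}, together with the fact that, via Lemma \ref{Lemma-canon-prject-param} and Remark \ref{remark-change-parameter}, the map $\Psi^2:\mathcal{V}\times(-1,1]\to\overline{M}_2$ can be taken in the canonical projective parameter $\mathbf{t}$; on $\mathcal{V}\times(-1,1)$ it then coincides with the tautological parametrisation $(\gamma,\mathbf{t})\mapsto\gamma(\mathbf{t})\in M$, and extends smoothly across $\mathbf{t}=1$ by definition of an L-extension. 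Thus identity (\ref{eq-gamma-tau}) reads $\Psi^2(X(\gamma,\mathbf{t},s),\tau(\gamma,\mathbf{t},s))=\gamma(\mathbf{t})$ for $\mathbf{t}\in(-1,1)$; letting $\mathbf{t}\to 1^-$, joint smoothness of $\Psi^2$, $X$ and $\tau$ up to the boundary combined with $\tau(\gamma,1,s)=1$ (an immediate consequence of $\widetilde{X}(\gamma,1,s)\in\partial^+\widetilde{\mathcal{N}}$ and the definition $\tau=p_2\circ\varepsilon^{-1}\circ\widetilde{X}$) yields
\[
\Psi^2(X(\gamma,1,s),1)=\infty_2^+(\gamma)
\]
for all small $s$. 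Hence the curve $s\mapsto X(\gamma,1,s)$ lies inside the fibre $(\infty_2^+)^{-1}(\infty_2^+(\gamma))$, and its initial tangent $\partial_s X(\gamma,1,0)$ lies in $\boxplus_{2,\gamma}$. On the other hand, by Theorem \ref{main_thm} and Remark \ref{remark-Xtilde-extension} the curve $\widetilde{X}(\gamma,1,\cdot)$ is an integral curve of $\partial^+\mathcal{D}^{\sim}$, and the identification $\partial^+\widetilde{\mathcal{N}}\cong\mathcal{N}$ carries $\partial^+\mathcal{D}^{\sim}$ to $\oplus$; therefore $\partial_s X(\gamma,1,0)\in\oplus_\gamma$, and it is nonzero because the vector field $\overline{\Phi}$ is nowhere vanishing. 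A common nonzero vector in the two $1$-dimensional subspaces $\oplus_\gamma$ and $\boxplus_{2,\gamma}$ forces $\oplus_\gamma=\boxplus_{2,\gamma}$, so $\boxplus_2=\oplus$ and the conclusion follows.

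The main obstacle I anticipate is the transfer of the limit $\mathbf{t}\to 1^-$ through $\Psi^2$, since the smooth structure of $\overline{M}_2$ near $\partial^+M_2$ is a priori unrelated to that of $\overline{M}_1$. This is precisely what the canonical projective parameter from Section \ref{sec:proj-param} is designed to handle: being intrinsic to the conformal structure of $M$, both $\Psi^1$ and $\Psi^2$ written in this parameter agree on the interior and each extend smoothly to $\mathbf{t}=1$ by the very definition of L-extension, so the limit identity above is legitimate and propagates the intrinsic information carried by $\oplus$ into the extrinsic structure of $\overline{M}_2$ by continuity of $\Psi^2$.
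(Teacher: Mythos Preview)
Your proposal is correct and follows essentially the same route as the paper. Both arguments rest on the biparametric family $X(\gamma,\mathbf t,s)$ extended smoothly to $\mathbf t=1$ (Remark \ref{remark-Xtilde-extension}), the observation $\tau(\gamma,1,s)=1$, and the passage to the limit $\mathbf t\to 1^-$ in the identity $\gamma_{(\mathbf t,s)}(\tau(\gamma,\mathbf t,s))=\gamma(\mathbf t)$ through the map $\Psi^2=\Psi_{\mathcal N_U}$ written in the canonical projective parameter; the paper concludes directly that the curve $s\mapsto\gamma_{(1,s)}$ lies in both $S(\infty_1^+(\gamma))$ and $S(\infty_2^+(\gamma))$ and then invokes a local-to-global step, whereas you rephrase the same computation as $\boxplus_2=\oplus$ and then identify the fibres with the common leaves—this is only a cosmetic difference.
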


\begin{proof}
Given $\gamma\in \mathcal{N}$, consider the extended maps $\widetilde{X}$, $X$ and $\tau$ of equation (\ref{eq-def-distribution-extension}) defined in $\mathcal{N}_{U}\times\left(-1,1\right]\times\left(-\epsilon,\epsilon\right)$.
Recall that all these maps are differentiable and, by (\ref{eq-Xtilde-X}), $\widetilde{X}=\varepsilon\left(X,\tau\right)$. 
Since $\widetilde{X}\left(\gamma,1,s\right)\in \partial^{+}\widetilde{\mathcal{N}}$ for all $s\in \left(-\epsilon,\epsilon\right)$ then 
\begin{equation}\label{eq-1-equal-skies}
\tau\left(\gamma,1,s\right)=1 \quad \text{ for all } s\in \left(-\epsilon,\epsilon\right)  .
\end{equation}
Let us use the notation $\gamma_{\left(\mathbf{t},s\right)}=X\left(\gamma,\mathbf{t},s\right)\in \mathcal{N}$.
By equation (\ref{eq-gamma-tau}), we have that $\gamma_{\left(\mathbf{t},s\right)}\left(\tau\left(\gamma,\mathbf{t},s\right)\right)=\gamma\left(\mathbf{t}\right)$, hence
\begin{equation}\label{eq-2-equal-skies}
\gamma_{\left(\mathbf{t},s\right)}\in S\left(\gamma\left(\mathbf{t}\right)\right)  \quad \text{ for all } \left(\mathbf{t},s\right)\in \left(-1,1\right]\times\left(-\epsilon,\epsilon\right)  .
\end{equation}
By means of the diffeomorphism $\rho$ of diagram (\ref{diagram-N-tilde}), we have $\rho\left(\widetilde{X}\left(\gamma,\mathbf{t},s\right)\right)=\gamma_{\left(\mathbf{t},s\right)}\left(\tau\left(\gamma,\mathbf{t},s\right)\right)\in \overline{M}_1$, then by equation (\ref{eq-2-equal-skies})
\begin{equation}\label{eq-3-equal-skies}
\lim_{\mathbf{t}\mapsto 1}\gamma_{\left(\mathbf{t},s\right)}\left(\tau\left(\gamma,\mathbf{t},s\right)\right) = \rho\left(\widetilde{X}\left(\gamma,1,s\right)\right)=\overline{\gamma}\left(1\right)=\infty^{+}_{1}\left(\gamma\right) \in \partial^{+}M_1
\end{equation}
and moreover 
\begin{equation}\label{eq-4-equal-skies}
\lim_{\mathbf{t}\mapsto 1}\gamma_{\left(\mathbf{t},s\right)} = \lim_{\mathbf{t}\mapsto 1}X\left(\gamma,\mathbf{t},s\right)=X\left(\gamma,1,s\right)= \gamma_{\left(1,s\right)} \in S\left(\infty^{+}_{1}\left(\gamma\right)\right)\subset \mathcal{N}
\end{equation}
for all $s\in \left(-\epsilon,\epsilon\right)$.

Now, we want to show that $\lim_{\mathbf{t}\mapsto 1}\gamma_{\left(\mathbf{t},s\right)}\left(\tau\left(\gamma,\mathbf{t},s\right)\right)=\infty^{+}_{2}\left(\gamma\right)\in \partial^{+}M_2$. 
Since $\overline{M}_2$ is a L--extension (and according to remark \ref{remark-change-parameter}), then there is a differentiable map $\Psi_{\mathcal{N}_{U}}:\mathcal{N}_{U}\times \left(-1,1\right]\rightarrow \overline{M}_2$ such that $\overline{\gamma}_{\left(\mathbf{t},s\right)}\left(\tau\left(\gamma,\mathbf{t},s\right)\right)=\Psi_{\mathcal{N}_{U}}\left(\gamma_{\left(\mathbf{t},s\right)}, \tau\left(\gamma,\mathbf{t},s\right)\right)\in \overline{M}_2$.
But by equations (\ref{eq-1-equal-skies}) and (\ref{eq-4-equal-skies}), and since $\Psi_{\mathcal{N}_{U}}$ is continuous, then 
\[
\lim_{\mathbf{t}\mapsto 1}\overline{\gamma}_{\left(\mathbf{t},s\right)}\left(\tau\left(\gamma,\mathbf{t},s\right)\right) = \lim_{\mathbf{t}\mapsto 1}\Psi_{\mathcal{N}_{U}}\left(\gamma_{\left(\mathbf{t},s\right)}, \tau\left(\gamma,\mathbf{t},s\right)\right) = \Psi_{\mathcal{N}_{U}}\left(\gamma_{\left(1,s\right)}, 1\right)= \infty^{+}_{2}\left(\gamma_{\left(1,s\right)}\right)\in \partial^{+}M_2 
\]
for all $s\in\left(-\epsilon, \epsilon\right)$.
On the other hand 
\[
\lim_{\mathbf{t}\mapsto 1}\overline{\gamma}\left(\mathbf{t}\right) = \overline{\gamma}\left(1\right) = \infty^{+}_{2}\left(\gamma\right)\in \partial^{+}M_2 
\]
so, by equation (\ref{eq-gamma-tau}), we have $\infty^{+}_{2}\left(\gamma_{\left(1,s\right)}\right)=\infty^{+}_{2}\left(\gamma\right)$ for all $s\in\left(-\epsilon, \epsilon\right)$, hence $\gamma_{\left(1,s\right)}\in S\left(\infty^{+}_{2}\left(\gamma\right)\right)$ for all $s\in\left(-\epsilon, \epsilon\right)$. 
Because of equation (\ref{eq-4-equal-skies}), $\gamma_{\left(1,s\right)} \in S\left(\infty^{+}_{1}\left(\gamma\right)\right)$, hence the sky $S\left(\infty^{+}_{1}\left(\gamma\right)\right)$ coincides with $S\left(\infty^{+}_{2}\left(\gamma\right)\right)$ locally, then they must coincide globally. 
Therefore $S\left(\infty_{1}^{+}\left(\gamma\right)\right) = S\left(\infty_{2}^{+}\left(\gamma\right)\right)$.
  \end{proof}

Now, we introduce the Theorem characterizing all L--extensions.

\begin{theorem}\label{Theorem-diff-struct}  Let $M$ be a proper L-spacetime and $\overline{M}_1 = M \cup \partial^{+} M_1$ be the canonical future L--extension.  Let $\overline{M}_{2} = M \cup \partial^{+} M_2$ be any other future L--extension of $\left(M,\mathcal{C}\right)$, then the identity map $\mathrm{id}:M\rightarrow M$ can be extended as a diffeomorphism $\overline{\mathrm{id}}:\overline{M}_1 \rightarrow \overline{M}_2$.
\end{theorem}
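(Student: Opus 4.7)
The plan is to extend the identity by setting $\overline{\mathrm{id}}|_M = \mathrm{id}_M$ and, on the boundary, $\overline{\mathrm{id}}(\infty_1^+(\gamma)) := \infty_2^+(\gamma)$ for each $\gamma \in \mathcal{N}$. Well-definedness is immediate from Lemma \ref{Lemma-equal-skies}: since $S(\infty_1^+(\gamma)) = S(\infty_2^+(\gamma))$ for every $\gamma$, the fibres of the submersions $\infty_1^+$ and $\infty_2^+$ coincide, so the rule is independent of the choice of $\gamma$ in a fibre. Combined with the surjectivity of both $\infty_i^+$, this yields a bijection $\partial^+M_1 \to \partial^+M_2$, hence a bijection $\overline{\mathrm{id}} : \overline{M}_1 \to \overline{M}_2$.

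Smoothness in the interior is automatic, so the core of the argument concerns a boundary point $p_0 \in \partial^+M_1$. Fix $\gamma_0 \in (\infty_1^+)^{-1}(p_0)$ and a neighbourhood $\mathcal{N}_U \subset \mathcal{N}$ of $\gamma_0$ as in Remark \ref{remark-change-parameter}. By that remark together with Lemma \ref{Lemma-canon-prject-param}, both extensions carry smooth maps
\begin{equation*}
\Psi^{(i)}_{\mathcal{N}_U} : \mathcal{N}_U \times (-1,1] \longrightarrow \overline{M}_i, \qquad i=1,2,
\end{equation*}
in the canonical projective parameter, which satisfy $\Psi^{(i)}_{\mathcal{N}_U}(\gamma, \mathbf{t}) = \gamma(\mathbf{t}) \in M$ for $\mathbf{t}\in(-1,1)$ and $\Psi^{(i)}_{\mathcal{N}_U}(\gamma, 1) = \infty_i^+(\gamma)$. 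Consequently the definition of $\overline{\mathrm{id}}$ gives the identity $\overline{\mathrm{id}} \circ \Psi^{(1)}_{\mathcal{N}_U} = \Psi^{(2)}_{\mathcal{N}_U}$ on $\mathcal{N}_U \times (-1,1]$.

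The key step is to verify that $\Psi^{(1)}_{\mathcal{N}_U}$ is a submersion at $(\gamma_0, 1)$, viewed as a map between manifolds with boundary. The partial derivative $\partial \Psi^{(1)}_{\mathcal{N}_U}/\partial \mathbf{t}$ at $\mathbf{t}=1$ is transversal to $\partial^+ M_1$ by condition \ref{L-ext-cond-1} of Definition \ref{def-L-extension}; on the other hand the restriction of $\Psi^{(1)}_{\mathcal{N}_U}$ to $\mathcal{N}_U \times \{1\}$ coincides with $\infty_1^+$, a surjective submersion onto $\partial^+M_1$ by condition \ref{L-ext-cond-2}. Together these force $d\Psi^{(1)}_{\mathcal{N}_U}$ to have image $T_{p_0}\overline{M}_1$ at $(\gamma_0,1)$. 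Since $\Psi^{(1)}_{\mathcal{N}_U}$ sends $\mathcal{N}_U \times \{1\}$ into $\partial^+ M_1$ and the open stratum into $M$, the standard local form of a submersion of manifolds with boundary provides a smooth section $s : W \to \mathcal{N}_U \times (-1,1]$ on a neighbourhood $W$ of $p_0$ with $s(p_0)=(\gamma_0,1)$, respecting the stratification. Hence on $W$,
\begin{equation*}
\overline{\mathrm{id}} = \Psi^{(2)}_{\mathcal{N}_U} \circ s,
\end{equation*}
which is smooth as a composition of smooth maps.

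Smoothness of $\overline{\mathrm{id}}^{-1}$ follows by an entirely symmetric argument, swapping the roles of $\overline{M}_1$ and $\overline{M}_2$: $\Psi^{(2)}_{\mathcal{N}_U}$ is a submersion at $(\gamma_0,1)$ by the same two conditions applied to $\overline{M}_2$, and composing a local section of $\Psi^{(2)}_{\mathcal{N}_U}$ with $\Psi^{(1)}_{\mathcal{N}_U}$ shows that $\overline{\mathrm{id}}^{-1}$ is smooth near $\infty_2^+(\gamma_0)$. The main technical obstacle I anticipate is making the boundary-submersion step rigorous, i.e.\ ensuring that the local section $s$ can be chosen to send $\partial^+M_1$ into $\mathcal{N}_U \times \{1\}$ and the interior of $W$ into $\mathcal{N}_U \times (-1,1)$; but this compatibility is automatic once one observes that $\Psi^{(1)}_{\mathcal{N}_U}$ strictly preserves the boundary stratification and the transversality of condition \ref{L-ext-cond-1} is an open condition near $(\gamma_0,1)$.
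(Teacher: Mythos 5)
Your proposal is correct and follows essentially the same route as the paper: both define $\overline{\mathrm{id}}$ fibrewise via Lemma \ref{Lemma-equal-skies}, show the parametrization maps $\Psi^{(i)}$ are submersions up to the boundary by combining the transversality of $\partial\Psi/\partial\mathbf{t}$ with the submersivity of $\infty_i^{+}$, and then deduce smoothness of $\overline{\mathrm{id}}$ and its inverse by factoring through these submersions (the paper invokes a factorization result from Brickell--Clark where you use explicit local sections, which is the same mechanism).
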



\begin{proof}

By Lemma \ref{Lemma-equal-skies}, we have $S\left(\infty_{1}^{+}\left(\gamma\right)\right) = S\left(\infty_{2}^{+}\left(\gamma\right)\right)$ for all $\gamma\in \mathcal{N}$, then it is possible to define a bijection $\phi:\partial^{+}M_1 \rightarrow \partial^{+}M_2$ by $\phi\left(\infty^{+}_1\left(\gamma\right)\right)=\infty^{+}_2\left(\gamma\right)$. 
Then, we have the following diagram
\begin{equation}\label{diagram-submersions}
\begin{tikzpicture}[every node/.style={midway}]
\matrix[column sep={4em,between origins},
        row sep={2em}] at (0,0)
{ ; &  \node(N)   { $\mathcal{N}$}  ; & ; \\
  \node(M1)   { $\partial^{+}M_1$} ; &   ; & \node(M2)   { $\partial^{+}M_2$} ;       \\};
\draw[->] (N) -- (M1) node[anchor=south east]  {$\infty^{+}_{1}$};
\draw[->] (N) -- (M2) node[anchor=south west]  {$\infty^{+}_{2}$};
\draw[->] (M1)   -- (M2) node[anchor=north] {$\phi$};
\end{tikzpicture}
\end{equation}
and since $\infty^{+}_{1}$ is a submersion and $\infty^{+}_{2}=\phi\circ \infty^{+}_{1}$ is differentiable, by \cite[Prop.~6.1.2]{BC}, then $\phi$ is differentiable. Also, since $\infty^{+}_{2}$ is submersion and $\infty^{+}_{1}=\phi^{-1}\circ \infty^{+}_{2}$ is differentiable then $\phi^{-1}$ is differentiable, hence $\phi$ is a diffeomorphism.

Now, let us show that every map $\Psi_{\mathcal{U}}\left(\gamma,\mathbf{t}\right)=\overline{\gamma}\left(\mathbf{t}\right)$ with $\left(\gamma,\mathbf{t}\right)\in \mathcal{U}\times \left(-1,1\right]$ is a submersion. 
Clearly, since $\left.\Psi_{\mathcal{U}}\right|_{\mathcal{U}\times \left(-1,1\right)}=\left.\rho\circ\varepsilon\right|_{\mathcal{U}\times \left(-1,1\right)}$ where $\rho$ is the submersion of diagram (\ref{diagram-N-tilde}) and $\varepsilon$ is the diffeomorphism (\ref{eq-varepsilon}), then $\left.\Psi_{\mathcal{U}}\right|_{\mathcal{U}\times \left(-1,1\right)}$ is a submersion. 

On the other hand, observe that the restriction of $\Psi_{\mathcal{U}}$ to $\mathcal{U}\times \{1\}$ verifies that $\left.\Psi_{\mathcal{U}}\right|_{\mathcal{U}\times \{1\}}=\left.\phi\right|_{\infty^{+}_{1}\left(\mathcal{U}\right)}$ and since $\phi$ is a diffeomorphism and 
\[
\left(d\Psi_{\mathcal{U}}\right)_{\left(\gamma,1\right)}\left(  \frac{\partial}{\partial \mathbf{t}}  \right)_{\left(\gamma,1\right)}=\overline{\gamma}'\left(1\right)
\]
with $0\neq\overline{\gamma}'\left(1\right)\notin T_{\overline{\gamma}\left(1\right)} \partial^{+} M_2$, then $\left(d\Psi_{\mathcal{U}}\right)_{\left(\gamma,1\right)}$ is surjective, then we get that $\Psi_{\mathcal{U}}$ is a submersion.

Let us denote 
\[
\begin{tabular}{l}
$\overline{V}_1= \rho\circ\varepsilon\left(\mathcal{U}\times \left(-1,1\right]\right)\subset \overline{M}_1$ \\
$\overline{V}_2=\Psi_{\mathcal{U}}\left(\mathcal{U}\times \left(-1,1\right]\right)\subset \overline{M}_2$ \\
$V_1= \overline{V}_1 \cap M$ \\
$V_2=\overline{V}_2 \cap M$
\end{tabular}
\]
then, $V=V_1=V_2=\{\gamma\left(\mathbf{t}\right)\in M: \gamma\in \mathcal{U} \}$.
So, the following diagram 
\begin{equation}\label{diagram-submersions-2}
\begin{tikzpicture}[every node/.style={midway}]
\matrix[column sep={4em,between origins},
        row sep={2em}] at (0,0)
{ ; &  \node(N)   { $\mathcal{U}\times \left(-1,1\right]$}  ; & ; \\
  \node(M1)   { $\overline{M}_1\supset\overline{V}_1$} ; &   ; & \node(M2)   { $\overline{V}_2\subset\overline{M}_2$} ;       \\};
\draw[->] (N) -- (M1) node[anchor=south east]  {$\rho\circ \varepsilon$};
\draw[->] (N) -- (M2) node[anchor=south west]  {$\Psi_{\mathcal{U}}$};
\draw[->] (M1)   -- (M2) node[anchor=north] {$\overline{\mathrm{id}}$};
\end{tikzpicture}
\end{equation}
defines $\overline{\mathrm{id}}$ as a bijection such that $\left.\overline{\mathrm{id}}\right|_{V}=\mathrm{id}:V\rightarrow V$ is the identity map. Using again \cite[Prop.~6.1.2]{BC} as before, since $\rho\circ \varepsilon$ and $\Psi_{\mathcal{U}}$ are submersions, then $\overline{\mathrm{id}}$ is a diffeomorphism extending the identity map in $V\subset M$. 
Taking a covering $\{\mathcal{U}_{\alpha} \}_{\alpha\in I}\subset \mathcal{N}$ with their corresponding maps $\{\Psi_{\mathcal{U}_{\alpha}}\}_{\alpha\in I}$, we can define globally $\overline{\mathrm{id}}=\overline{M}_1\rightarrow \overline{M}_2$ as a diffeomorphism. 
This concludes the proof.
  \end{proof}

Observe that if $\psi: \left(M_1,\mathcal{C}_1\right) \rightarrow \left(M_2,\mathcal{C}_2\right)$ is a conformal diffeomorphism, then there is a diffeomorphisms $\Psi:\mathcal{N}_1 \rightarrow \mathcal{N}_2$ preserving skies, that is, for any sky $X\in \Sigma_1$ of $\mathcal{N}_1$, then $\Psi\left(X\right)\in \Sigma_2$ is a sky of $\mathcal{N}_2$. 
Then, in virtue of Theorem \ref{Theorem-diff-struct} and the way of construction of the L--extension of section \ref{sec:canon-ext}, if one of the conformal manifolds is a proper L--spacetime for any metric in $\mathcal{C}$, then the other is also a proper L--spacetime for any metric and both L--extensions are diffeomorphic by the extension of $\psi$.
This is summarized in the following corollary.

\begin{corollary}
Let $\left(M_1,\mathcal{C}_1\right)$ and $\left(M_2,\mathcal{C}_2\right)$ be conformal manifolds such that $\left(M_1,\mathcal{C}_1\right)$ is a proper L--spacetime for any metric in $\mathcal{C}_1$ and there exists a conformal diffeomorphism $\psi: \left(M_1,\mathcal{C}_1\right) \rightarrow \left(M_2,\mathcal{C}_2\right)$, then $\left(M_2,\mathcal{C}_2\right)$ is a proper L--spacetime for any metric in $\mathcal{C}_2$. 

Moreover, if $\overline{M}_i=M_i \cup \partial M_i$ for $i=1,2$ are the corresponding L--extensions, then the map $\partial\psi:\partial M_1 \rightarrow \partial M_2$ defined by $\partial \psi\left(\infty_1\left(\gamma\right)\right)=\infty_2\left(\Psi\left(\gamma\right)\right)$ is a diffeomorphism, where $\Psi:\mathcal{N}_1 \rightarrow \mathcal{N}_2$ is the diffeomorphism preserving skies between the corresponding spaces of light rays. 
In addition, the extension $\overline{\psi}:\overline{M}_1\rightarrow \overline{M}_2$, such that $\left.\overline{\psi}\right|_{M_1}=\psi$ and $\left.\overline{\psi}\right|_{\partial M_1}=\partial \psi$, is a diffeomorphism.
\end{corollary}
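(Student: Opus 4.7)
The plan is to transport all the structure used to build the canonical L--extension of $M_1$ along the diffeomorphism induced by $\psi$ on the spaces of light rays, and then invoke the essential uniqueness Theorem \ref{Theorem-diff-struct} to identify the result with the canonical extension of $M_2$. The first step is to observe that the conformal diffeomorphism $\psi$ maps maximal null geodesics to maximal null pregeodesics, and therefore induces a bijection $\Psi:\mathcal{N}_1\rightarrow\mathcal{N}_2$. Because the differentiable structure of $\mathcal{N}$ depends only on the conformal class (Sect. \ref{sec:Preliminaries}), $\Psi$ is a diffeomorphism; because events are carried to events, it satisfies $\Psi(S(x))=S(\psi(x))$ for every $x\in M_1$. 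Hence the skies, the contact distribution $\mathcal{H}$ (which is conformal by (\ref{contact-struct})), and the tangent spaces to skies $T_\gamma S(x)$ are all intertwined by $\Psi$, so $\Psi$ lifts canonically to a diffeomorphism of bundles $\mathbb{P}(\mathcal{H}_1)\rightarrow\mathbb{P}(\mathcal{H}_2)$ that restricts to a diffeomorphism $\widetilde{\Psi}:\widetilde{\mathcal{N}}_1\rightarrow\widetilde{\mathcal{N}}_2$ sending the blow up space of $M_1$ to that of $M_2$.

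Next I would show that $(M_2,\mathcal{C}_2)$ is a proper L--spacetime. The boundary distributions $\oplus,\ominus$ are the limits of the curves $\widetilde{\gamma}(s)=T_\gamma S(\gamma(s))$ in $\mathbb{P}(\mathcal{H}_\gamma)$, and $\widetilde{\Psi}$ carries such curves to their analogues in $\mathcal{N}_2$. Consequently $\widetilde{\Psi}(\oplus_\gamma)=\oplus_{\Psi(\gamma)}$ and similarly for $\ominus$; the conditions of Definition \ref{Lspace} (differentiability and regularity of $\oplus,\ominus$, with $\oplus_\gamma\neq\ominus_\gamma$) transfer. The distribution $\mathcal{D}^\sim$ is defined by the fibres of $\pi_M$, i.e.\ by tangent spaces to skies, which $\widetilde{\Psi}$ also intertwines; hence $\widetilde{\Psi}$ sends $\overline{\mathcal{D}^\sim_1}$ to $\overline{\mathcal{D}^\sim_2}$, so regularity and the Hausdorff property of the leaf space are preserved. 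Thus $M_2$ is a proper L--spacetime and Corollary \ref{Corollary-ext-2} provides a canonical L--extension $\overline{M}_2$.

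I would then define $\partial\psi$ and show it is a diffeomorphism. Since the skies at infinity are precisely the leaves of $\partial^+\mathcal{D}^\sim$, two light rays have the same future endpoint in $\overline{M}_1$ iff they lie in a common leaf of $\partial^+\mathcal{D}^\sim_1$; $\widetilde{\Psi}$ maps leaves to leaves, so $\partial\psi(\infty_1^+(\gamma))=\infty_2^+(\Psi(\gamma))$ is well defined and bijective. The relation $\infty_2^+\circ\Psi=\partial\psi\circ\infty_1^+$, together with the submersion property of $\infty_i^+$ established in Proposition \ref{prop-gamma-transversal} and the diffeomorphism property of $\Psi$, yields by \cite[Prop.~6.1.2]{BC} (used identically as in the proof of Theorem \ref{Theorem-diff-struct}) that $\partial\psi$ is a diffeomorphism. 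The same argument over the past component gives the full boundary diffeomorphism.

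Finally, I must verify that the set-theoretic extension $\overline{\psi}=\psi\cup\partial\psi$ is a diffeomorphism $\overline{M}_1\rightarrow\overline{M}_2$; this is where the main obstacle lies, namely the smoothness across the boundary. The cleanest path is to endow $\overline{M}_2$ via $\overline{\psi}$ with the transported differentiable structure from the canonical $\overline{M}_1$, and check that the result is an L--extension of $M_2$ in the sense of Definition \ref{def-L-extension}. Condition \ref{L-ext-cond-0} is automatic, condition \ref{L-ext-cond-2} is the paragraph above, and for condition \ref{L-ext-cond-1} one pushes forward the maps $\Psi_{\mathcal{U}_1}=\rho_1\circ\varepsilon_1$ of the canonical extension of $M_1$ (Remark \ref{remark-change-parameter} and Corollary \ref{Corol-canonical-extension}) by defining $\Psi_{\mathcal{U}_2}(\Psi(\gamma),\mathbf{t})=\overline{\psi}(\Psi_{\mathcal{U}_1}(\gamma,\mathbf{t}))$, which is smooth by construction, with the correct transversality at $\mathbf{t}=1$ because $\partial\psi$ is a diffeomorphism and transversality is preserved by the differential of $\overline{\psi}$. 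Here the subtle step is that the canonical projective parameter is preserved (up to an orientation--preserving projective reparametrization) under $\Psi$, which follows from Proposition \ref{prop-varepsilon} since the parameter $\mathbf{t}$ is defined intrinsically from $\oplus,\ominus$ and the sky at $C$, all of which $\widetilde{\Psi}$ intertwines. Once the pushed--forward structure is an L--extension of $M_2$, Theorem \ref{Theorem-diff-struct} identifies it diffeomorphically with the canonical $\overline{M}_2$ via the identity on $M_2$, and composing with $\overline{\psi}$ concludes the proof.
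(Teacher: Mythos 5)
Your proposal is correct and follows essentially the same route as the paper: the paper derives this corollary from the observation that a conformal diffeomorphism induces a sky--preserving diffeomorphism $\Psi$ of the spaces of light rays, so that all the conformally invariant ingredients of the canonical construction ($\mathcal{H}$, the blow up, $\oplus$, $\ominus$, $\mathcal{D}^\sim$) are transported, and then invokes the uniqueness Theorem \ref{Theorem-diff-struct}. The paper states this in a single short paragraph, whereas you spell out the intermediate verifications (the lift $\widetilde{\Psi}$, the transfer of the proper L--spacetime conditions, the submersion argument via \cite[Prop.~6.1.2]{BC}, and the preservation of the projective parameter); these details are consistent with the paper's intended argument.
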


The transversality to the L--boundary of the extension of any light ray is a key feature of L--extensions.
The next example shows the existence of extensions, constructed in a natural way, which they are not L--extensions because the lack of transversality of light rays at the boundary.

\begin{example}
Let us consider $M=\left\{ \left(t,x,y\right)\in \mathbb{R}^3 : t>0 \right\}$ equipped with the metric $\mathbf{g} = - \frac{1}{t}dt\otimes dt + dx\otimes dx + dy \otimes dy $.
$M$ has a natural extension given by $\overline{M}= \left\{ \left(t, x,y\right)\in \mathbb{R}^3 : t\geq 0 \right\}$, where the past boundary $\partial^{-} M=\left\{ \left(t,x,y\right)\in \mathbb{R}^3 : t=0 \right\}$ has the standard differentiable structure. 

Fix the Cauchy surface $C=\{ \left(1,x,y\right)\in M\}$ to obtain a coordinate chart for its space of light rays $\mathcal{N}$ as in (\ref{carta-N}), then the null geodesic $\gamma$ such that $\gamma\left(0\right)=\left(1,x_0,y_0\right)\in C$ and $\gamma'\left(0\right)=\left(1,\cos \theta_0,\sin \theta_0\right)\in \mathbb{N}^{+}_{\gamma\left(0\right)}$ can be written by
\[
\gamma\left(s\right)=\left(\frac{1}{4}\left(s+2\right)^2 , x_0+s\cos \theta_0, y_0+s\sin \theta_0 \right) , \qquad s\in\left(-2,\infty\right)
\]
where it is possible to identify $\gamma$ by the coordinates $\left(x_0,y_0,\theta_0\right)$ in $\mathcal{N} = C \times \mathbb{S}^1$.
Repeating the computations performed in Sect. \ref{sec:block} we get that in the extension $\overline{M}$, each point $\left(0,u,v\right)\in \partial^{-}M$ corresponds to the integral curve of the distribution $\ominus$ given by the set of null geodesics:
\begin{equation}\label{example-no-L}
\gamma_{\theta}\left(s\right)=\left(\frac{1}{4}\left(s+2\right)^2, u+\left(s+2\right)\cos \theta, v+\left(s+2\right)\sin \theta  \right) , \qquad s\in\left(-2,\infty\right), \theta\in\left(-\pi, \pi\right] \, .
\end{equation}
Then the map $\infty^{-}:\mathcal{N} \rightarrow \partial^{-}M$ can be expressed in coordinates by
\[
\infty^{-}\left(x,y,\theta\right)=\left(0, x-2\cos \theta, y-2\sin \theta \right)\in \partial^{-}M \, .
\]
But observe that the extensions $\overline{\gamma}_{\theta}$ to the interval $s\in \left[-2,\infty\right)$ verify 
\[
\overline{\gamma}'_{\theta}\left(-2\right)=\lim_{s\mapsto -2^{+}}\overline{\gamma}'_{\theta}\left(s\right)=\left(0, \cos \theta, \sin \theta  \right) \, ,
\]
whence we obtain that $\overline{\gamma}'_{\theta}\left(-2\right)\in T_{\overline{\gamma}_{\theta}\left(-2\right)}C$, and therefore the transversality of the extended light rays to the boundary does not occur. This shows that $\overline{M}$ is not a L--extension.

In fact, since $\left(M,\mathbf{g}\right)$ is isometric to the hyperbolic block in Minkowski space $M_1 \cong \mathbb{M}^3(0,+\infty)$ (see the example in remark \ref{M3ainf}) by the transformation: 
\[
\left(t_1,x_1,y_1\right)= \left(2\sqrt{t},x,y\right) \, ,
\]
then the L--extension of $M$ can be obtained as the one of $\left(M_1,\mathbf{g}_1\right)$.  
Thus, we get that $\overline{M}_1=\left\{ \left(t_1,x_1,y_1\right)\in \mathbb{R}^3 : t_1\geq 0 \right\}$ is the L--extension of $M$, where its L--boundary: 
$$
\partial^{-} M_1=\left\{ \left(t_1,x_1,y_1\right)\in \mathbb{R}^3 : t_1=0 \right\} \, ,
$$ 
has the standard differentiable structure.
\end{example}

The following example shows that, in order to characterize the canonical L--extension, it is not possible to weaken condition \ref{L-ext-cond-1} of definition \ref{def-L-extension} assuming that the parametrizations is not admissible but regular.

\begin{example}
Consider the upper infinity block $\mathbb{M}^3(0,+\infty) = \left\{ \left(t,x,y\right)\in \mathbb{R}^3 : t>0 \right\}$ with the standard Minkowski metric $\mathbf{g}= -  dt\otimes dt + dx\otimes dx + dy \otimes dy$.
We obtain a coordinate chart for $\mathcal{N}$ using the Cauchy surface $C=\{  t=1 \}\subset \mathbb{M}^3(0,+\infty)$. 
The null geodesic $\gamma$ such that $\gamma\left(0\right)=\left(1,x_0,y_0\right)\in C$ and $\gamma'\left(0\right)=\left(1,\cos \theta_0,\sin \theta_0\right)\in \mathbb{N}^{+}_{\gamma\left(0\right)}$ can be written by
\[
\gamma\left(s\right)=\left(s+1,x_0+s\cos \theta_0, y_0+s\sin \theta_0  \right) , \qquad s\in\left(-1,\infty\right) \, ,
\]
then, using the standard chart in $\mathcal{N}$, we can identify $\gamma\simeq\left(x_0,y_0,\theta_0\right)$.

The parameter $s$ is admissible, hence the map 
\[
\Psi\left(\gamma, s\right)=\Psi\left(x_0,y_0,\theta_0,s\right) = \left(x_0+s\cos \theta_0, y_0+s\sin \theta_0 ,s+1 \right)\, , 
\]
with $s\in\left[-1,\infty\right)$ verifies condition \ref{L-ext-cond-1} of definition \ref{def-L-extension} and, as it was argued before, the past L--extension of $\mathbb{M}^3(0,+\infty)$ is $\left\{ \left(t,x,y\right)\in \mathbb{R}^3 : t\geq 0 \right\}$ with the standard differentiable structure.

Observe now, that $\mathbb{M}^3(0,+\infty)$ is isometric to $M=\left\{ \left(w,u,v\right)\in \mathbb{R}^3 : w>0 \right\}$ equipped with the metric $\mathbf{\overline{g}}= - w^2 dw\otimes dw + du\otimes du + dv \otimes dv $ by the isometry $\phi:\mathbb{M}^3(0,+\infty)\rightarrow M$ given by
\[
\left(w,u,v\right)=\phi\left(t,x,y\right)=\left(\sqrt{2t},x,y\right) \, ,
\]
then the past L--extension of $M$ must be $\overline{M}= \left\{ \left(w,u,v\right)\in \mathbb{R}^3 : w\geq 0 \right\}$ with the differentiable structure such that the extension of the isometry $\phi$ to $\partial^-\mathbb{M}^3(0,+\infty)$ is a diffeomorphism. 
Then, if we denote by $\overline{M}_{*}$ such differentiable manifold and by $\overline{M}_c$ the same topological manifold equipped with the standard differentiable structure, clearly the identity map $\overline{\mathrm{id}}:\overline{M}_{*}\rightarrow \overline{M}_c$ is not a diffeomorphism and therefore $\overline{M}_c$ is not the L--extension of $M$. 

Notice that 
\[
\overline{\gamma}\left(s\right)=\phi\left(\gamma\left(s\right)\right)=\left(\sqrt{2\left(s+1\right)} , x_0+s\cos \theta_0, y_0+s\sin \theta_0 \right) , \qquad s\in\left(-1,\infty\right) \, ,
\]
is an inextensible null geodesic in $M$. 
We can change the parameter by $\tau^2=s+1$, obtaining a regular parameter $\tau\in\left(0,\infty\right)$ (diffeomorphic to the canonical projective parameter $\mathbf{t}\in\left(-1,1\right)$ according to remark \ref{remark-parameters}).
The map $\overline{\Psi}:\mathcal{N}\times\left[0,\infty\right)\rightarrow \overline{M}_c$ defined in coordinates by
\[
\overline{\Psi}\left(x_0,y_0,\theta_0,s\right) = \left(x_0+\left(s^2-1\right)\cos \theta_0, y_0+\left(s^2-1\right)\sin \theta_0 ,\sqrt{2}s \right) \, ,
\]
satisfies all conditions of definition \ref{def-L-extension} except that $s\in\left[0,\infty\right)$ is not admissible but regular.
\end{example}

Definition \ref{def-L-extension} gives a characterization of L--extensions. 
Indeed, the following Proposition is a converse result of Theorem \ref{Theorem-diff-struct}.
The same result for past L--extensions can be shown in an analogous way.
 
\begin{proposition}
Let $M$ be a $3$--dimensional, strongly causal, light non--conjugate, sky-separating, conformal Lorentz manifold.
If $M$ admits a future L--extension $\overline{M}$, then the canonical field of directions $\oplus:\mathcal{N}\rightarrow \mathbb{P}\left(\mathcal{H}\right)$ defines a regular and smooth distribution. 
Moreover, the distribution $\boxplus$ defined by the L--extension verifies $\boxplus=\oplus$.
\end{proposition}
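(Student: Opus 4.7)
The strategy is to use the L--extension data to build a smooth variation whose Jacobi field simultaneously represents a direction in $\boxplus_{\gamma_{0}}$ and controls the limit of the sky--tangent lines along $\gamma_{0}$. Note first that condition \ref{L-ext-cond-2} of Definition \ref{def-L-extension} renders $\infty^{+}\colon \mathcal{N}\to \partial^{+}M$ a surjective submersion between manifolds of dimensions three and two, so $\boxplus=\ker d\infty^{+}$ is already a smooth regular line distribution on $\mathcal{N}$, with the fibres $S(p)=(\infty^{+})^{-1}(p)$ as its leaves. It thus suffices to prove pointwise that $\boxplus_{\gamma}\subset \mathcal{H}_{\gamma}$ and $\oplus_{\gamma}=\boxplus_{\gamma}$ as elements of $\mathbb{P}(\mathcal{H}_{\gamma})$; the smoothness and regularity of $\oplus$ will then be inherited automatically from $\boxplus$.

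For the variational step, fix $\gamma_{0}\in \mathcal{N}$ and $0\neq v\in \boxplus_{\gamma_{0}}$. By Remark \ref{remark-change-parameter} and Lemma \ref{Lemma-canon-prject-param} there exist an open neighbourhood $\mathcal{U}$ of $\gamma_{0}$ and a smooth map $\overline{\Psi}\colon \mathcal{U}\times (-1,1]\to \overline{M}$ such that $\mathbf{t}\mapsto \overline{\Psi}(\gamma,\mathbf{t})$ is a projective parametrisation of $\gamma$ for $\mathbf{t}<1$ and $\overline{\Psi}(\gamma,1)=\infty^{+}(\gamma)$. Pick a smooth curve $c\colon (-\eta,\eta)\to \mathcal{U}$ with $c(0)=\gamma_{0}$, $c'(0)=v$ and $c(s)\subset S(\infty^{+}(\gamma_{0}))$, and form the variation $f(s,\mathbf{t})=\overline{\Psi}(c(s),\mathbf{t})$ on $(-\eta,\eta)\times (-1,1]$. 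For $\mathbf{t}<1$ this is a smooth variation of reparametrised null geodesics in $M$, hence $\mathbf{J}(\mathbf{t})=\partial_{s}f(0,\mathbf{t})$ coincides modulo $\gamma_{0}'$ with a Jacobi field along $\gamma_{0}$ in $\mathcal{H}_{\gamma_{0}}$ whose class in $T_{\gamma_{0}}\mathcal{N}$ is $v$; this already yields $\boxplus_{\gamma_0}\subset \mathcal{H}_{\gamma_0}$. Since $f(s,1)=\infty^{+}(c(s))\equiv \infty^{+}(\gamma_{0})$ is independent of $s$, one has $\mathbf{J}(1)=0$, so $\mathbf{J}$ extends smoothly to $\mathbf{t}=1$ with $\mathbf{J}(1)=0$ modulo $\gamma_{0}'(1)$.

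To identify $\oplus_{\gamma_{0}}$, I would complete $\mathbf{J}$ to a basis of Jacobi fields spanning $\mathcal{H}_{\gamma_{0}}$ modulo $\gamma_{0}'$: choose $\hat{K}\in \mathcal{H}_{\gamma_{0}}\setminus \boxplus_{\gamma_{0}}$ and a curve $c_{2}\colon (-\eta,\eta)\to \mathcal{U}$ with $c_{2}(0)=\gamma_{0}$, $c_{2}'(0)=\hat{K}$, and define $K_{2}(\mathbf{t})=\partial_{s}\overline{\Psi}(c_{2}(s),\mathbf{t})|_{s=0}$, smooth on $(-1,1]$. At the boundary $K_{2}(1)=d\infty^{+}_{\gamma_{0}}(\hat{K})\neq 0$ in $T_{\infty^{+}(\gamma_{0})}\partial^{+}M$, and the transversality clause in Definition \ref{def-L-extension} places $\gamma_{0}'(1)$ outside $T_{\infty^{+}(\gamma_{0})}\partial^{+}M$, forcing $K_{2}(1)\neq 0$ modulo $\gamma_{0}'(1)$. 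For $\mathbf{t}\in (-1,1)$, light non--conjugacy reduces the space of Jacobi fields in $\mathcal{H}_{\gamma_{0}}$ vanishing at $\gamma_{0}(\mathbf{t})$ modulo $\gamma_{0}'$ to a single line, so $\mathbf{J}(\mathbf{t})$ and $K_{2}(\mathbf{t})$ cannot simultaneously vanish modulo $\gamma_{0}'$. Writing the unique (up to scale) Jacobi field vanishing at $\gamma_{0}(\mathbf{t})$ modulo $\gamma_{0}'(\mathbf{t})$ as $\alpha(\mathbf{t})\mathbf{J}+\beta(\mathbf{t})K_{2}$ thus determines a smooth projective curve $[\alpha(\mathbf{t}):\beta(\mathbf{t})]\in \mathbb{P}^{1}$ on $(-1,1]$, whose value at $\mathbf{t}=1$ is obtained by collapsing the defining equation to $\beta K_{2}(1)=0$ modulo $\gamma_{0}'(1)$, yielding $[\alpha(1):\beta(1)]=[1:0]$. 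Therefore
\[
\oplus_{\gamma_{0}}=\lim_{\mathbf{t}\to 1^{-}}T_{\gamma_{0}}S(\gamma_{0}(\mathbf{t}))=\mathrm{span}\{\langle \mathbf{J}\rangle\}=\boxplus_{\gamma_{0}},
\]
which closes the argument.

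The main obstacle will be securing the smooth extensions of $\mathbf{J}$ and $K_{2}$ across $\mathbf{t}=1$ together with the non--degeneracy of the pair $(\mathbf{J}(1),K_{2}(1))$ modulo $\gamma_{0}'(1)$; both hinge on the smoothness of $\overline{\Psi}$ on the closed half--interval $\mathcal{U}\times(-1,1]$ and on the transversality of the extended null geodesics to $\partial^{+}M$ guaranteed by Definition \ref{def-L-extension}. Once these analytic inputs are in place, the remainder of the argument reduces to elementary linear algebra modulo $\gamma_{0}'$.
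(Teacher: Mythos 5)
Your overall architecture is sound and genuinely different from the paper's. The paper exploits the \emph{projective} nature of the parameter: since $t\mapsto\widetilde{\gamma}(t)=\sigma([\partial\Psi/\partial t])$ is a projectivity of $\mathbb{P}(\mathcal{H}_{\gamma})\cong\mathbb{RP}^1$, it is a fractional--linear function of $t$ determined by three values, hence extends automatically and smoothly (in $\gamma$ as well) to $t=1$; the identities $\widetilde{\gamma}(1)=\boxplus(\gamma)$ and $\varepsilon^{\Psi}=\varepsilon$ on $(-1,1)$ then give $\boxplus=\oplus$ by continuity. You instead extend two explicit variation fields $\mathbf{J}$, $K_{2}$ across $\mathbf{t}=1$ using the smoothness of $\overline{\Psi}$ on $\mathcal{U}\times(-1,1]$ and the transversality clause, and read off the limit of $T_{\gamma_0}S(\gamma_0(\mathbf{t}))$ by linear algebra. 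That is a legitimate alternative, and your observation that smoothness and regularity of $\oplus$ can simply be inherited from $\boxplus=\ker d\infty^{+}$ once the pointwise identity is established is exactly the right economy.

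There is, however, one step that does not hold as stated and on which the rest of your computation leans: the claim that $\mathbf{J}$ ``coincides modulo $\gamma_{0}'$ with a Jacobi field \emph{in} $\mathcal{H}_{\gamma_{0}}$'', i.e.\ that $\boxplus_{\gamma_{0}}\subset\mathcal{H}_{\gamma_{0}}$. Being the variation field of a family of null geodesics only gives $\mathbf{g}(\mathbf{J},\gamma_{0}')=\mathrm{const}$, not $=0$; the usual way to kill the constant is the vanishing of $\mathbf{J}$ at the common focal point, but here that point lies on $\partial^{+}M$, where $\mathbf{g}$ need not extend, so $\mathbf{J}(1)=0$ in $T\overline{M}$ does not directly force $\mathbf{g}(\mathbf{J},\gamma_{0}')=0$ along $\gamma_0$. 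This matters: if $\langle\mathbf{J}\rangle\notin\mathcal{H}_{\gamma_{0}}$ then $\{\langle\mathbf{J}\rangle,\langle K_{2}\rangle\}$ is not a basis of $\mathcal{H}_{\gamma_{0}}$ and the equation $\alpha\mathbf{J}(\mathbf{t})+\beta K_{2}(\mathbf{t})=0\ (\mathrm{mod}\ \gamma_{0}')$ generically has no nontrivial solution, so the curve $[\alpha(\mathbf{t}):\beta(\mathbf{t})]$ is not defined. The gap is repairable without the paper's projectivity argument: complete $\langle\mathbf{J}\rangle$ to a basis $\{\langle\mathbf{J}\rangle,\langle K_{2}\rangle,\langle K_{3}\rangle\}$ of all of $T_{\gamma_{0}}\mathcal{N}$, note that for each $\mathbf{t}\in(-1,1]$ the map $(\alpha,\beta_2,\beta_3)\mapsto\alpha\mathbf{J}(\mathbf{t})+\beta_2K_2(\mathbf{t})+\beta_3K_3(\mathbf{t})\ (\mathrm{mod}\ \gamma_{0}'(\mathbf{t}))$ has rank $2$ (for $\mathbf{t}<1$ because sky tangents are $1$--dimensional, at $\mathbf{t}=1$ because $d\infty^{+}$ is injective off $\boxplus$ and $\gamma_{0}'(1)\notin T\partial^{+}M$), so its kernel is a continuous line field on $(-1,1]$ whose value at $\mathbf{t}=1$ is $[1:0:0]$; the limit $\oplus_{\gamma_{0}}$ therefore exists and equals $\boxplus_{\gamma_{0}}$, and the inclusion $\boxplus_{\gamma_{0}}\subset\mathcal{H}_{\gamma_{0}}$ then follows \emph{a posteriori} because $\mathbb{P}(\mathcal{H}_{\gamma_{0}})$ is closed in $\mathbb{P}(T_{\gamma_{0}}\mathcal{N})$ and contains every $T_{\gamma_{0}}S(\gamma_{0}(\mathbf{t}))$. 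With that rearrangement your argument goes through.
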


\begin{proof}
Let us denote by $\Psi:\mathcal{N}_U\times\left(-1,1\right]\rightarrow\overline{M}$ and $\infty^{+}:\mathcal{N}\rightarrow \partial^{+} M$ the parametrization and the surjective submersion involved in the definition (\ref{def-L-extension}) of L--extensions, and let $\boxplus:\mathcal{N}\rightarrow \mathbb{P}\left(\mathcal{H}\right)$ the distribution whose integral manifolds are the inverse images of $\infty^{+}$.
Since $\gamma\left(t\right)=\Psi\left(\gamma,t\right)$ runs a light ray and $t\in \left(-1,1\right]$ can be assumed to be a projective parameter in virtue of Lemma \ref{Lemma-canon-prject-param}, then 
\begin{equation}\label{eq-projectivity}
\widetilde{\gamma}\left(t\right)=\sigma\left( \left[ \frac{\partial \Psi}{\partial t}\left(\gamma,t\right) \right] \right) \, ,
\end{equation}
defines a projectivity in each fibre $\mathbb{P}\left(\mathcal{H}_{\gamma}\right)$ such that the map
\begin{equation}\label{eq-epsilon-psi-map}
\begin{tabular}{rccl}
$\varepsilon^{\Psi}\colon $ & $\mathcal{N}_U \times \left(-1,1\right)$ & $\rightarrow$ & $\mathbb{P}\left(\mathcal{H}\right)$ \\
& $\left(\gamma,t\right)$ & $\mapsto$ & $\widetilde{\gamma}\left(t\right)$ \, ,
\end{tabular}
\end{equation}
is differentiable by composition, because $\Psi$ and $\sigma$ are smooth and $t$ is regular.
But observe that any projectivity in the fibre $\mathbb{P}\left(\mathcal{H}_{\gamma}\right)$ is completely determined when three values are given and, in fact, we have already defined the projectivity of $\gamma\left(t\right)$ for $t\in\left(-1,1\right)$ depending smoothly on $\gamma$ according to the differentiability of the map (\ref{eq-epsilon-psi-map}). 
Then for any $\gamma\in \mathcal{N}_{U}$ the projectivity $\widetilde{\gamma}:\mathbb{R}\rightarrow \mathbb{P}\left(\mathcal{H}_{\gamma}\right)$ is automatically defined and it permits to extend the map $\varepsilon^{\Psi}$ in a smooth way as
\begin{equation}\label{eq-epsilon-psi-map-ext}
\begin{tabular}{rccl}
$\varepsilon^{\Psi} \colon$ & $\mathcal{N}_U \times \mathbb{R}$ & $\rightarrow$ & $\mathbb{P}\left(\mathcal{H}\right)$ \\
& $\left(\gamma,t\right)$ & $\mapsto$ & $\widetilde{\gamma}\left(t\right)$ \, ,
\end{tabular}
\end{equation}
Since the map $\varepsilon^{\Psi}_{t_0}:\mathcal{N}_{U}\rightarrow \mathbb{P}\left(\mathcal{H}\right)$ given by $\varepsilon^{\Psi}_{t_0}\left(\gamma\right)=\varepsilon^{\Psi}\left(\gamma,t_0\right)=\widetilde{\gamma}\left(t_0\right)$ is a local smooth section of the fibre bundle $\pi_{\mathcal{N}}^{\mathbb{P}\left(\mathcal{H}\right)}:\mathbb{P}\left(\mathcal{H}\right)\rightarrow \mathcal{N}$ due to $\pi_{\mathcal{N}}^{\mathbb{P}\left(\mathcal{H}\right)}\circ \varepsilon^{\Psi}_{t_0}\left(\gamma\right) = \pi_{\mathcal{N}}^{\mathbb{P}\left(\mathcal{H}\right)}\left(\widetilde{\gamma}\left(t_0\right)\right)= \gamma$, then 
\[
\varepsilon^{\Psi}_{t_0}:\mathcal{N}_{U}\rightarrow \varepsilon^{\Psi}_{t_0}\left(\mathcal{N}_{U}\right)\subset\mathbb{P}\left(\mathcal{H}\right) \, ,
\]
is a diffeomorphism.
So, taking $t=1$, the map $\varepsilon^{\Psi}_{1}$ is the diffeomorphism onto its image such that $\varepsilon^{\Psi}_{1}\left(\gamma\right)=\widetilde{\gamma}\left(1\right)$. 
By continuity of $\Psi$ and equation (\ref{eq-projectivity}), we have that $\widetilde{\gamma}\left(1\right)=\boxplus\left(\gamma\right)$, but since $\varepsilon^{\Psi}$ coincides with the canonical $\varepsilon$ map for $t\in \left(-1,1\right)$, then by continuity 
\[
\boxplus\left(\gamma\right)=\varepsilon^{\Psi}\left(\gamma,1\right)= \varepsilon\left(\gamma,1\right)=\oplus\left(\gamma\right) \, ,
\]
and since $\boxplus$ is regular and smooth, therefore so $\oplus$ is.
  \end{proof}

\begin{remark}
We have studied L--extensions in the case $\oplus_{\gamma} \neq \ominus_{\gamma}$ for all $\gamma\in \mathcal{N}$, but there are simple examples (such as $3$--dimensional Minkowski spacetime) in which $\oplus = \ominus$ (see \cite[Sec.~IV.C]{Ba17}). 
In these cases, the projective parameter of the canonical future L--extension can be obtained by choosing two (local) smooth spacelike Cauchy surfaces $C_0, C_{-1}\subset M$ such that every point $q\in C_{-1}$ is in the chronological past of $C_0$. 
Then, the projective parameter $\mathbf{t}$ verifies that $\gamma\left(-1\right)\in C_{-1}$, $\gamma\left(0\right)\in C_{0}$ and $\gamma\left(1\right)\in \partial^{+}M$. 
With this parameter, the construction of the canonical future L--extension of $M$ is done in the same way as in section \ref{sec:canon-ext}. 
The canonical past L--extension can be built in an analogous way.
\end{remark}


\section{Discussion and conclusions}\label{sec:discussion}

It has been shown that for a class of 3-dimensional spacetimes $M$ a new causal boundary, called L-bounday, can be constructed that defines a smooth extension of the original spacetime.  The construction of the new boundary is explicit and intrinsically conformal invariant.    It uses in a direct way the space of light rays $\mathcal{N}$ with its contact structure $\mathcal{H}$ and the natural projective bundle $\mathbb{P}(\mathcal{H})$ over it.      

Moreover a class of extensions of spacetimes, called L-extensions has been introduced, their properties defined exclusively in terms of local properties of the corresponding boundary points, that encode the transversality properties of the light rays accessing to them.   It has been shown that such L-extensions, if they exist, are essentially unique and that the canonical extension defined by the L-boundary is an L-extension.    It remains to analyse the relation of L-extensions and conformal envelopments, a key notion to investigate the relation of the L-boundary with the conformal boundary of a given spacetime.  This problem will be discussed in a forthcoming work.

Even if some of the constructions has been done in the realm of 3-dimensional spacetimes, the results can be extended naturally to higher dimensions.    All basic ingredients needed in the detailed proofs are available in higher dimensional spacetimes.   Most conspicuous is the projective parameter used to prove the smoothness of the canonical extension.   Actually, one of the main reasons to restrict ourselves in this presentation to three dimensionas was that in such case, the projective parameter is naturally defined because it is the natural projective parameter on the fibres of the projective bundle $\mathbb{P}(\mathcal{H})$ (which are projective circles).  In higher dimensions a technical construction is needed to obtain such projective parameter that requires using an adapted Fermi-Walker connection and using the associated affine parameter.  The details of such constructions and new significative examples will be discussed elsewhere.

Finally we would like to comment that a discussion on the detailed relation between the proposed L-boundary and conformal boundaries is still missing.   While the relation between the $L$-boundary and the causal boundary was explored in \cite{Ba17}, the relation between the $L$-boundary and conformal extensions of the given space-time has not been addressed yet. We believe that the L-boundary provides an intermediate step in between the conformal boundary, an ad-hoc construction but immediately available, and the causal boundary, the fundamental abstract construction of ideal points at infinity which is hard to describe, if not impossible, in concrete situations.  The clarification of such issues will be the subject of forthcoming work.


\section*{Acknowledgements}
The authors wish to acknowledge the referee for the careful revision of the manuscript, the many suggestions that have helped to improve the paper and by pointing out a difficulty with the regularity of the total distribution on the blow up space that has been considered.  Financial support from the Spanish Ministry of Economy and Competitiveness, through the Severo Ochoa Programme for Centres of Excellence in RD (SEV-2015/0554) is acknowledged.
AI would like to thank partial support provided by the MINECO research project MTM2017-84098-P and QUITEMAD+, S2013/ICE-2801.


\phantomsection
\addcontentsline{toc}{section}{References}

\end{document}